\documentclass[11pt]{article}
\usepackage[utf8]{inputenc}
\usepackage{mathtools}
\usepackage{amssymb,amsmath,amsthm,graphicx,color}
\usepackage{physics}
\usepackage{xspace}
\usepackage{enumitem}
\usepackage{authblk}
\pdfoutput=1
\setlist[itemize]{nosep,topsep=3pt,itemsep=3pt}
\setlist[enumerate]{nosep,topsep=3pt,itemsep=3pt}

\usepackage[dvipsnames]{xcolor}
\usepackage{comment}
\usepackage{svg}
\usepackage{multirow}
\usepackage[colorlinks=true,linktoc=page]{hyperref}
\usepackage{cleveref}

\usepackage[compact]{titlesec}


\usepackage{times}
\normalfont
\usepackage[T1]{fontenc}

\usepackage[american]{babel}

\usepackage{float}
\usepackage{mathpazo}

\setlength{\textwidth}{6.5in} 
\setlength{\textheight}{9in}
\setlength{\evensidemargin}{-.1in}
\setlength{\oddsidemargin}{-.1in}


\usepackage[varg]{pxfonts} %
\usepackage[full]{textcomp}
\usepackage[scr=rsfso]{mathalfa}%
\usepackage{bm} %

\usepackage{libertine}
\usepackage[T1]{fontenc}

\usepackage{amsthm,amsfonts,framed}
\usepackage[margin=1in]{geometry}
\usepackage{physics, graphicx}

\newcommand{\invpolyn}{\frac{1}{\poly(n)}}
\newcommand{\invexpn}{\frac{1}{2^{\poly(n)}}}
\newcommand{\oracletv}{\mathcal{\widetilde{O}}_{T_V}}
\newcommand{\oracletnone}{\mathcal{\widetilde{O}}_{T_\varnothing}}
\newcommand{\symsubtv}{\mathsf{\widetilde{V}}_{T_V}}
\newcommand{\symsubtnone}{\mathsf{\widetilde{V}}_{T_\varnothing}}
\newcommand{\problemexpansionnoarg}{\textsf{\textup{NON-EXPANSION}}\ensuremath{(d,\alpha,\epsilon)}}
\newcommand{\problemexpansion}[2][d,\alpha,\epsilon]{\textsf{\textup{#2 NON-EXPANSION}}\ensuremath{(#1)}}
\newcommand{\problemhiddensubset}[1]{\textsf{\textup{#1 HIDDEN SUBSET}}\ensuremath{(\alpha)}}
\newcommand{\problemhiddensubgroup}[1]{\textsf{\textup{#1 HIDDEN SUBGROUP}}\ensuremath{(\times, \{H_i\})}}

\linespread{1.05}
\setlength{\parindent}{0cm}
\usepackage[compact]{titlesec}
\titlespacing{\section}{0pt}{1.5ex}{0ex}
\titlespacing{\subsection}{0pt}{1.5ex}{0ex}
\titlespacing{\subsubsection}{0pt}{1ex}{0ex}
\titlespacing{\paragraph}{0pt}{1.5ex}{1ex}
\setlength{\parskip}{1.5ex plus 0.5ex minus 0.2ex}

\usepackage{slantsc}

\newcommand{\plus}{\ket{+}^{\otimes n}}
\newcommand{\braplus}{\bra{+}^{\otimes n}}

\newcommand{\pmset}{\{\pm 1\}}

\hypersetup{colorlinks,
    linkcolor=blue,
    citecolor=ForestGreen,  
    urlcolor=Mahogany,
}

\usepackage{microtype}




\hbadness=10000
\vbadness=10000

\newcommand{\handout}[5]{
   \renewcommand{\thepage}{#1-\arabic{page}}
   \noindent
   \begin{center}
   \framebox{
      \vbox{
    \hbox to 5.78in { {\bf #1}
     	 \hfill #2 }
       \vspace{4mm}
       \hbox to 5.78in { {\Large \hfill #5  \hfill} }
       \vspace{2mm}
       \hbox to 5.78in { {\it #3 \hfill #4} }
      }
   }
   \end{center}
   \vspace*{4mm}
}








\newenvironment{proof-sketch}{\noindent{\bf Sketch of Proof:}\hspace*{1em}}{\qed\bigskip}
\newenvironment{proof-idea}{\noindent{\bf Proof Idea}\hspace*{1em}}{\qed\bigskip}
\newenvironment{proof-of-lemma}[1]{\noindent{\bf Proof of Lemma #1}\hspace*{1em}}{\qed\bigskip}
\newenvironment{proof-attempt}{\noindent{\bf Proof Attempt}\hspace*{1em}}{\qed\bigskip}




\makeatletter
\def\fnum@figure{{\bf Figure \thefigure}}
\def\fnum@table{{\bf Table \thetable}}
\long\def\@mycaption#1[#2]#3{\addcontentsline{\csname
  ext@#1\endcsname}{#1}{\protect\numberline{\csname
  the#1\endcsname}{\ignorespaces #2}}\par
  \begingroup
    \@parboxrestore
    \small
    \@makecaption{\csname fnum@#1\endcsname}{\ignorespaces #3}\par
  \endgroup}
\def\mycaption{\refstepcounter\@captype \@dblarg{\@mycaption\@captype}}
\makeatother

\newcommand{\mathify}[1]{\ifmmode{#1}\else\mbox{$#1$}\fi}
\newcommand{\bigO}O

\newcommand{\remove}[1]{}
\newcommand{\ignore}[1]{}










\def\pmset{\{\pm 1\}}






\def\poly{{\rm poly}}











%
%
%
%




\newcommand{\complexityclass}[1]{{\mathsf{#1}}\xspace}

\newcommand{\BQP}{\complexityclass{BQP}}

\newcommand{\MA}{\complexityclass{MA}}

\newcommand{\QMA}{\complexityclass{QMA}}
\newcommand{\QCMA}{\complexityclass{QCMA}}


\newcommand{\E}{\mathop{\bf E\/}}

\usepackage{amsthm}
\usepackage{thmtools,thm-restate}

\makeatletter
\renewcommand{\paragraph}{%
  \@startsection{paragraph}{4}%
  {\z@}{1.75ex \@plus 1ex \@minus .2ex}{-1em}%
  {\normalfont\normalsize\bfseries}%
}
\makeatother

\numberwithin{equation}{section}
\newcounter{t}
\numberwithin{t}{section}

\declaretheoremstyle[bodyfont=\it,qed=\qedsymbol,headpunct=.\vphantom{$p_{p_{p_p}}$},postheadspace=\newline,headformat=\NAME\  \NUMBER\,\NOTE]{noproofstyle} 

\newtheoremstyle{break}%
{}{}%
{\itshape}{}%
{\bfseries}{.\vphantom{$p_{p_{p_p}}$}}%
{\newline}
{\thmname{#1}\thmnumber{ #2}\thmnote{\ \,\textmd{(#3)}}}

\theoremstyle{break}

\declaretheorem[name=Observation,numbered=no]{observation*}

\declaretheorem[numberlike=t]{fact}

\declaretheorem[numberlike=t]{problem}

\declaretheorem[numberlike=t]{theorem}

\declaretheorem[name=Theorem,numbered=no]{theorem*}

\declaretheorem[numberlike=t]{lemma}
\declaretheorem[name=Lemma,numbered=no]{lemma*}

\declaretheorem[name=Corollary,numbered=no]{corollary*}

\declaretheorem[name=Parameter,numbered=no]{parameter*}

\declaretheorem[name=Proposition,numbered=no]{proposition*}

\declaretheorem[name=Claim,numbered=no]{claim*}

\declaretheorem[name=Conjecture,numbered=no]{conjecture*}

\declaretheorem[name=Question,numbered=no]{question*}

\declaretheoremstyle[bodyfont=\it,headpunct=.\vphantom{$p_{p_{p_p}}$},postheadspace=\newline,headformat=\NAME\  \NUMBER\,\NOTE]{defstyle} 

\declaretheorem[numberlike=t,style=defstyle]{definition}

\declaretheorem[unnumbered,name=Example,style=defstyle]{example*}

\declaretheorem[unnumbered,name=Notation=defstyle]{notation*}

\declaretheorem[unnumbered,name=Construction,style=defstyle]{construction*}

\declaretheorem[numberlike=t]{remark}
\declaretheorem[numberlike=t]{procedure}




\begin{document}
\author[]{Roozbeh Bassirian\thanks{\href{mailto:roozbeh@uchicago.edu}{roozbeh@uchicago.edu}}}
\author[]{Bill Fefferman\thanks{\href{mailto:wjf@uchicago.edu}{wjf@uchicago.edu}}}
\author[]{Kunal Marwaha\thanks{\href{mailto:kmarw@uchicago.edu}{kmarw@uchicago.edu}}}
\affil{University of Chicago}

\title{On the power of nonstandard quantum oracles}
\date{}
\maketitle
\vspace{-10mm}
\begin{abstract}
We study how the choices made when designing an oracle affect the complexity of quantum property testing problems defined relative to this oracle.
We encode a regular graph of even degree as an invertible function $f$, and present $f$ in different oracle models. We first give a one-query $\QMA$ protocol to test if a graph encoded in $f$ has a small disconnected subset.
We then use representation theory to show that no classical witness can help a quantum verifier efficiently decide this problem relative to an in-place oracle.
Perhaps surprisingly, a simple modification to the standard oracle prevents a quantum verifier from efficiently deciding this problem, even with access to an unbounded witness.
\end{abstract}


\pagenumbering{arabic}

\section{Introduction}
Computational complexity is the study of the innate amount of resources required to complete some task.
We assign \emph{complexity classes} to sets of tasks that require similar amounts of resources; from here, the goal is to understand the relationship between complexity classes.
There has been some success proving that two complexity classes are equal, for example $\text{IP}=\text{PSPACE}$~\cite{shamir1992ip}, the PCP theorem~\cite{arora1998proof}, and $\text{MIP}^*=\text{RE}$~\cite{ji2021mip}; however, proving that two complexity classes are \emph{unequal} has been much more elusive. For example, we cannot prove $\text{P} \ne \text{PSPACE}$, let alone $\text{P} \ne \text{NP}$.

One response to this difficulty is to equip a computational model with an \emph{oracle}, which computes a fixed (but arbitrarily powerful) quantity in a single timestep.
It is often easier to prove that a statement (e.g. $\text{P} \ne \text{NP}$) is true relative to an oracle; furthermore, this 
restricts the kinds of proof techniques that can show the statement is false without an oracle.
In addition to separating complexity classes, oracles and \emph{query complexity} naturally arise in cryptography (e.g.~\cite{katz2020introduction}) and learning theory (e.g. \cite{KM93}).

Even with respect to an oracle, proving that some complexity classes are unequal can be surprisingly difficult.
Notably, Aharonov and Naveh define $\QCMA$, a subset of $\QMA$ where the witness is a classical bitstring~\cite{aharonovnaveh}, and ask if $\QCMA \subsetneq \QMA$. 
Aaronson and Kuperberg conjecture that an oracle separates these classes, but only prove a ``quantum oracle'' where this occurs~\cite{Aaronson2007}.
Subsequent works~\cite{fefferman2018quantum,nn_2022_classical_randomized} remove the ``quantumness'' from the oracle model, but still use models with internal randomness or other nonstandard aspects.

We consider quantum property testing problems defined relative to oracles from various oracle models:
encoding the edges of a graph in an invertible function $f$, we present $f$ as either a \emph{standard} oracle or \emph{in-place} oracle, with or without internal randomness. With mild restrictions on the workspace of quantum verifiers, we find:
\begin{enumerate}
    \item In several oracle models presenting $f$, a \emph{quantum} witness can help a quantum verifier efficiently decide if the graph encoded in $f$ has a small disconnected subset.
    \item Where $f$ is presented as a randomized in-place oracle, no \emph{classical} witness can help a quantum verifier efficiently decide this problem.
    \item Where $f$ is presented as a randomized phase oracle, no witness \emph{of any type or size} can help a quantum verifier efficiently decide this problem.
\end{enumerate}
Our results highlight that the quantum complexity of a task defined relative to an oracle is influenced by the choice of oracle model.

\subsection{Our techniques}
We use a well-known fact of Petersen to encode the edges of any even-degree regular graph in an invertible function $f$. 
We then consider natural ways to install $f$ within an oracle; we say that $f$ is \emph{presented} as a particular kind of oracle.
For example, a standard oracle presents $f$ through the map $\ket{c,x} \mapsto \ket{c\oplus f(x),x}$, while an in-place oracle presents $f$ through the map $\ket{x} \mapsto \ket{f(x)}$. In general, we consider oracles that give access both to $f$ and $f^{-1}$. An oracle may also have internal randomness: on every query to a \emph{randomized} oracle, $f$ is chosen uniformly at random from a fixed set of functions $F$.

Consider the Laplacian $L_f$ of a graph encoded in $f$. We first provide a test such that for any input state $\ket{\psi}$, the test succeeds with probability expressible in terms of $\bra{\psi}L_f\ket{\psi}$, independently of how an oracle presents $f$.
We use this test to construct a $\QMA$ protocol verifying that the graph is not an \emph{expander} graph.
This problem is primarily motivated by the preimage-testing problem of Fefferman and Kimmel~\cite{fefferman2018quantum}, which separates $\QMA$ and $\QCMA$ relative to a nonstandard oracle. 
They encode an invertible function $\pi$ in an oracle \emph{without efficient access to $\pi^{-1}$}, and test a property of $\pi^{-1}$; by design, this property can be verified but not easily computed.
Crucially, we view a permutation and its inverse as the edges of an \emph{undirected graph}; properties of undirected graphs are not sensitive to the ordering of $(x, \pi(x))$. 
We use multiple permutations to study graphs of higher degree, and notice that detecting if a graph has a non-expanding region is hard without traversing most of the graph. 
Some of these ideas are related to the \emph{component mixer} concept of Lutomirski~\cite{lutomirski2011component}, and are simultaneously and independently explored by Natarajan and Nirkhe~\cite{nn_2022_classical_randomized}.

A randomized oracle presenting a set of functions $F$ can be seen as a quantum channel, so small changes to $F$ cause statistically indistinguishable changes to the oracle.
We use this flexibility to modify non-expansion testing to a simple permutation problem: do the functions $f \in F$ stabilize a small set $V \subseteq [N]$, or is $F$ the set of all permutations on $[N]$? 
Notice that $F$ is a group in both cases.
When an oracle presenting $F$ preserves the group structure of $F$, we can use representation theory.
For this problem, this is satisfied by an \emph{in-place} oracle; the oracle is then an orthogonal projector onto one of two symmetric subspaces of matrices.
After finding an orthogonal basis for each subspace, we construct a hybrid argument to prove that only witnesses with knowledge of $V$ can help a quantum verifier efficiently decide this problem. We also use representation theory to give a $\QCMA$ protocol for an analogous permutation problem in randomized standard oracles.

We finally study the permutation problem in a randomized phase oracle. 
We directly analyze the effect of the oracle on any input density matrix; with high probability, the oracle decreases the magnitude of every off-diagonal term by a $\invexpn$ factor.
We then construct a hybrid argument, bounding our measure of progress using an inequality relating the sizes of Schatten $p$-norms.
When the state space is not too large, we prove that an exponential number of queries are required to distinguish most YES instances from the NO instance, \emph{regardless of input state}. 
As a result, no witness can help a verifier distinguish YES from NO.

Note that our quantum verifiers are not fully general. 
Our lower bound techniques restrict the number of extra workspace qubits in the verifier; however, our upper bounds also work in this setting.
In \Cref{sec:classical_witness_lower_bound}, we explain these restrictions in more detail and discuss prospects for generalizing our results. 

\subsection{Related work}

\paragraph{Quantum oracle models}
A fundamental constraint of quantum oracle models is that they must be unitary. We describe several nonstandard oracle models used in quantum computing:
\begin{itemize}
    \item A \emph{quantum} oracle is any unitary operation $U$ in the full Hilbert space. Although the operation is unitary, the verifier doesn't necessarily have access to $U^{-1}$.  Oracles like these are not typically classical because the unitary's action is not efficiently and classically representable. 
    \item An \emph{in-place} oracle maps $\ket{x} \to \ket{\pi(x)}$ for some classical invertible function $\pi$. Again, this computation is not efficiently reversible since the verifier may not have access to $\pi^{-1}$. When a standard oracle gives access to $\pi^{-1}$, an in-place oracle query can be simulated in two queries; otherwise, an exponential number of queries are required to construct one from the other~\cite{Kashefi_2002}.
    \item A \emph{phase} oracle puts the output of a classical function $f$ in the phase of a basis state. We consider the map $\ket{x} \to e^{f(x) \cdot 2\pi i/N} \ket{x}$. To contrast, note that the map $\ket{c,x} \to e^{cf(x) \cdot 2\pi i/N}\ket{c,x}$ is unitarily equivalent to the standard oracle.
\end{itemize}

All of these oracles can optionally have \emph{internal randomness}, as considered by Harrow and Rosenbaum~\cite{harrow_uselessness}; we call these \emph{randomized} oracles. On every query to a randomized oracle, a unitary is chosen at random from a fixed set. This can be very powerful; for example,~\cite{harrow_uselessness} gives examples of randomized oracles where problems \emph{impossible} to decide with classical queries can be decided with a single quantum query.

\paragraph{$\QMA$ and $\QCMA$}
The \emph{Merlin-Arthur} style of complexity classes considers a decision problem and two players. The magician (Merlin) has claimed the answer to the decision problem is YES, and gives the verifier a token (the \emph{proof} or \emph{witness}) to convince them. The verifier (Arthur) must then ensure the answer is actually YES.
Given a problem with size $n$, the verifier must accept a correct witness (i.e.  when the answer is YES) with probability $1/q$ higher than a ``lying'' witness (i.e. when the answer is NO) for some $q = \poly(n)$.
The set of problems that can be decided this way in a classical setting is known as Merlin Arthur ($\MA$).
If the verifier is a quantum computer, this is $\QCMA$; if the witness can be any quantum state, this is $\QMA$.

\begin{table}[h]
\setlength{\tabcolsep}{1em}
\def\arraystretch{1.5}
\centering
\begin{tabular}{l|c c}
                            & \textbf{verifier is classical} & \textbf{verifier is quantum} \\ \hline
\textbf{witness is classical}   & $\MA$                        & $\QCMA$                          \\
\textbf{witness is quantum} & -                           & $\QMA$                         
\end{tabular}
\caption{\footnotesize Complexity classes in the style of \emph{Merlin-Arthur}. $\QCMA$ is a subset of $\QMA$ where the witness can be efficiently written as a classical bitstring.}
\end{table}

Since any classical bitstring can be efficiently written as a quantum state, $\QCMA$ $\subseteq$ $\QMA$. But is the reverse true? 
Even the \emph{oracle} version of this problem is open: at the top of a recent list of open problems, Aaronson asks for a standard oracle that separates the two classes~\cite{aaronson2021open}.
All previous progress~\cite{Aaronson2007,fefferman2018quantum,nn_2022_classical_randomized} relies on specifically chosen \emph{nonstandardness} in the oracle.

Natarajan and Nirkhe~\cite{nn_2022_classical_randomized} make progress on a standard oracle separation of $\QMA$ and $\QCMA$ by constructing an oracle with randomness. They simultaneously and independently provide a $\QMA$ protocol for testing non-expansion of a graph in an oracle. To prove their lower bound, they combine the adversary method, the polynomial method, and a reduction to a problem of Ambainis, Childs, and Liu~\cite{acl2011}. However, their notion of randomness is different from ours and other works~\cite{harrow_uselessness,fefferman2018quantum,arora22}, and acts as follows: when an oracle is first queried, it chooses a function $f$ from a distribution, but on subsequent queries, it uses the same function $f$. By contrast, our notion of randomness is memoryless: an oracle chooses $f$ from a uniform distribution on $F$ for \emph{every} query. 
This allows one to make small changes to $F$ without affecting the success of the $\QMA$ protocol; we use this flexibility to study a simpler permutation problem.

\subsection{Outline of the paper}
In \Cref{sec:prelim}, we show how to encode the edges of a graph in an invertible function, and define the oracle models and decision problems we consider.
In \Cref{sec:qma}, we explain our $\QMA$ protocol for non-expansion and for a simpler permutation problem of randomized oracles.
We apply representation theory to randomized oracles and prove our classical-witness lower bound in \Cref{sec:classical_witness_lower_bound}; some technical details are deferred to \Cref{appendix:basiselements} and \Cref{appendix:deferred_proofs}.
In \Cref{sec:phase_oracle_lower_bound}, we consider a phase oracle and prove our witness-agnostic lower bound.
We include relevant facts about norms and inner products in \Cref{appendix:norms}, and contrast our setup with quantum walks in \Cref{appendix:quantumwalk}.

\section{Our setup}
\label{sec:prelim}

Consider a $d$-regular graph on $N := 2^n$ vertices for any $n$ and even $d$. We show that an invertible function can list the edges adjacent to each vertex in $G$.
\begin{definition}[Graph-coded function]
Consider a $d$-regular graph $G$ (for even $d$) on $N$ vertices. A $G$-coded function is a function $f: [N] \times [d/2] \to [N]$, such that $f_i(x) := f(x,i)$ is a bijection for each $i \in [d/2]$, and each edge is uniquely represented by a tuple $(x,f_i(x))$.
\end{definition}

\begin{remark}[Even-degree regular graphs have graph-coded functions]
Every regular graph $G$ of even degree has a $G$-coded function.
\end{remark}
\begin{proof}
A $d$-regular graph $G$ of even degree always has a 2-factorization \cite{petersen_graphs}. This means that the edges of $G$ can be partitioned into $d/2$ edge-disjoint subgraphs $[E_1,\dots,E_{d/2}]$ where in each $E_i$, all vertices have degree two (i.e. a collection of cycles). Thus, we can represent each $E_i$ with a permutation $\pi_i$, where the edge $(x,y) \in E_i$ if and only if $\pi_i(x) = y$ or $\pi_i(y) = x$. Then $f(x,i) := \pi_i(x)$ is a $G$-coded function.
\end{proof}

Graph-coded functions $f$ are bijective, and therefore invertible. We now \emph{present} $f$ in various oracle models. Note that we define all oracles with access both to $f$ and $f^{-1}$.
\begin{definition}[An oracle model \emph{presents} a function $f$]
For each oracle model below (e.g. standard oracle), we say that this oracle model \emph{presents} the function $f$.
\end{definition}
\begin{remark}
For notational convenience, we refer to a qubit $z$ that controls the inversion of a function $f$ as taking on values in $\pmset$, so that $f^z$ is either $f^1 = f$ or $f^{-1}$.
\end{remark}
\begin{definition}[Standard oracle]
\label{defn:unitary_std}
For any $f: [N] \to [N]$, define $U_f: \mathbb{C}^{2N^2} \to \mathbb{C}^{2N^2}$ as 
\begin{align}
U_f\sum_{c,x \in [N], z \in \pmset } \alpha_{c,x,z} \ket{c,x,z} := \sum_{c,x \in [N], z \in \pmset} \alpha_{c,x,z} \ket{c \oplus f^z(x), x,z}\,.
\end{align}
\end{definition}

\begin{definition}[In-place oracle \cite{Kashefi_2002}]
\label{defn:unitary_inplace}
For any permutation $\pi: [N] \to [N]$, define $\widetilde{U}_\pi: \mathbb{C}^{2N} \to \mathbb{C}^{2N}$ as 
\begin{align}
    \widetilde{U}_\pi \sum_{x \in [N], z \in \pmset} \beta_{x,z} \ket{x,z} := \sum_{x \in [N], z \in \pmset} \beta_{x,z} \ket{\pi^{z}(x),z}\,.
\end{align}
\end{definition}

\begin{remark}[\cite{Kashefi_2002}]
A standard oracle $U_f$ (with access to $f^{-1}$) can simulate an in-place oracle $\widetilde{U}_f$ in two queries:
\begin{align}
    (\mathbb{I} \otimes X) \circ U_f \circ (SWAP_{n,n} \otimes X)  \circ U_{f} \ket{0}^{\otimes n}\ket{x,z} = \ket{0}^{\otimes n}\ket{f^z(x),z}\,.
\end{align}
\end{remark}

\begin{definition}[$N^\text{th}$ root of unity]
Define the $N^\text{th}$ root of unity as $\omega_N := e^{2\pi i/N}$.
\end{definition}

\begin{definition}[Phase oracle]
\label{defn:unitary_phase_no_control_bits}
For any function $f: [N] \to [N]$, define $\overline{U}_f: \mathbb{C}^{2N} \to \mathbb{C}^{2N}$ as 
\begin{align}
\overline{U}_f\sum_{x \in [N],z \in \pmset } \alpha_{x,z} \ket{x,z} := \sum_{x \in [N], z \in \pmset} \alpha_{x,z} \omega_N^{f^z(x)}\ket{x,z}\,.
\end{align}
\end{definition}

We describe how an oracle in our setup exhibits internal randomness.
On each query, a \emph{randomized} oracle chooses a function uniformly from a set $F$. We say that a randomized oracle \emph{presents} $F$.
\begin{remark}
Given a unitary $U$, we use the notation $\mathcal{U}$ to denote an operator on density matrices; that is,
\begin{align}
    \mathcal{U}[\rho] := U \rho U^\dagger \,.
\end{align}
\end{remark}
\begin{definition}[Randomized oracle (e.g. \cite{harrow_uselessness,fefferman2018quantum})]
For any set $F$ of functions $f: [N] \to [N]$ corresponding to oracles $\{U_f \ |\  f \in F\}$, define the linear operator $\mathcal{O}_F$ as
\begin{align}
    \mathcal{O}_F := \frac{1}{|F|} \sum_{f \in F} \mathcal{U}_f\,.
\end{align}
We match the notation of randomized oracle $\mathcal{O}_F$ with oracle $U_f$; e.g. $\mathcal{\widetilde{O}}_F$ is a randomized \emph{in-place} oracle.
\end{definition}

\subsection{Problem statements}
\label{sec:problem_statement}
The problems below are not fully specified without the choice of oracle model. We prepend the names below with the choice of oracle model; for example, we denote \Cref{problem:simple} in a standard oracle as \problemexpansion{STANDARD}.
\begin{problem}[\problemexpansionnoarg{}]
\label{problem:simple}
Consider an oracle $U_{f}$ presenting a $G$-coded function $f$.  
\begin{enumerate}
    \item In a YES instance, we are promised that $G$ is a union of two disconnected $d$-regular graphs, and that the smaller graph has $N^\alpha$ vertices. 
    \item In a NO instance, we are promised that $G$ is $d$-regular and has spectral gap at least $\epsilon$ (for example, an \emph{expander graph}).
\end{enumerate}
The problem is to decide whether $U_{f}$ is a YES instance or NO instance.
\end{problem}

We also consider a version of this problem with \emph{randomized} oracles, where each randomized YES instance is specified by the set of vertices $V$ of the smaller graph. On each query, an oracle chooses an graph-coded function $f$ uniformly at random that corresponds to a graph where $V$ and $[N]/V$ are disconnected.
\begin{problem}[\problemexpansion{RANDOMIZED}]
\label{problem:simple-randomized}
Consider a randomized oracle $\mathcal{O}_F$ presenting a set of graph-coded functions $F$.
\begin{enumerate}
    \item Each subset $V \subseteq [N]$ of size $|V| = N^\alpha$ specifies a YES instance $\mathcal{O}_{F_V}$. Let $F_V$ be the set of all $G$-coded functions of $d$-regular graphs $G$ with no edges between $V$ and $[N]/V$.
    \item There is a single NO instance $\mathcal{O}_{F_\varnothing}$.
    Let $F_\varnothing$ be the set of all $G$-coded functions of $d$-regular graphs $G$ with spectral gap at least $\epsilon$.
\end{enumerate}
The problem is to decide whether $\mathcal{O}$ is a YES instance or a NO instance.
\end{problem}
In the configuration model of a random graph, $F_V$ contains all functions $f(x,i)$ such that $f_i(x) := f(x,i)$ is the union of a permutation on $[N]/V$ and a permutation on $V$.
In fact, we can use the oracle's internal randomness to adjust the underlying set $F$, and even consider graphs that are not typically expander graphs.
\begin{definition}[Subset indicator]
For a set $V \subseteq [N]$, define the function $i_V: [N] \to \{V, [N]/V\}$ as 
\begin{align}
    i_V(x) = \begin{cases} V & x \in V \\
    [N]/V & x \notin V\,.
    \end{cases}
\end{align}
\end{definition}
\begin{definition}[Permutations that stabilize a subset]
For a set $V \subseteq [N]$, define the set of permutations
\begin{align}
    T_V := \{\pi: [N] \to [N] \, :\, i_V(x) = i_V(\pi(x)) \, \forall x \in [N]\}\,.
\end{align}
We say that $T_V$ \emph{stabilizes} the subset $V$.
\end{definition}

\begin{problem}[\problemhiddensubset{RANDOMIZED}]
\label{problem:hidden_subset}
Consider a randomized oracle $\mathcal{O}_F$ presenting a set of functions $F$.
\begin{enumerate}
    \item  Each subset $V \subseteq [N]$ of size $|V| = N^\alpha$ specifies a YES instance $\mathcal{O}_{T_V}$.
    \item There is a single NO instance $\mathcal{O}_{T_\varnothing}$, where $T_\varnothing$ is the set of all permutations of $[N]$.
\end{enumerate}
The problem is to decide whether $\mathcal{O}$ is a YES instance or a NO instance.
\end{problem}
Notice that \problemhiddensubset{RANDOMIZED} is exactly \problemexpansion[2,\alpha,0]{RANDOMIZED}.

Notice that $T_V$ is a group under function composition. One can generalize this algebraic structure to a problem distinguishing oracles presenting subgroups of $T_\varnothing$ from an oracle presenting $T_\varnothing$:
\begin{problem}[\problemhiddensubgroup{RANDOMIZED}]
\label{problem:hidden_subgroup_randomized}
Consider the set $T_\varnothing$ of all permutations on $[N]$ as a group with operation $\times$, such that each $H_i \subsetneq T_\varnothing$ is also a group. Suppose a randomized oracle $\mathcal{O}$ presents either $T_\varnothing$ or any $H_i$.
\begin{enumerate}
    \item Each subgroup $H_i$ specifies a YES instance $\mathcal{O}_{H_i}$.
    \item There is a single NO instance $\mathcal{O}_{T_\varnothing}$, where $T_\varnothing$ is the set of all permutations of $[N]$.
\end{enumerate}
The problem is to decide whether $\mathcal{O}$ is a YES instance or a NO instance.
\end{problem}
For example, \problemhiddensubset{RANDOMIZED} is a special case of \problemhiddensubgroup{RANDOMIZED} using the group operation of function composition.
\section{Verifying non-expansion with a quantum witness}
\label{sec:qma}
There is a one-query $\QMA$ protocol for \problemexpansionnoarg{} in many oracle models presenting a graph-coded function. 
Graphs with good \emph{expansion} are well-connected despite their sparsity.
For any graph $G$, let $A_G$ be the adjacency matrix of $G$, and $L_G = d\mathbb{I} - A_G$ be the \emph{graph Laplacian} of $G$.
The smallest eigenvalue of $L_G$ is $\lambda_1(L_G) = 0$, and the next-smallest eigenvalue $\lambda_2(L_G)$ measures the expansion of $G$.
In this framework, \problemexpansionnoarg{} asks if an oracle presenting a $G$-coded function has $\lambda_2(L_G) = 0$ (YES), or if $\lambda_2(L_G) \ge \epsilon$ (NO).

At the heart of our protocol is the \emph{spectral test}, which takes an input state $\ket{\psi}$ and fails with probability proportional to $\bra{\psi}L_G\ket{\psi}$. We describe the spectral test for both standard oracles and in-place oracles in \Cref{sub:spectraltest}.
A state that passes the spectral test is essentially supported on a subspace according to $\lambda(L_G) = o(\invpolyn)$; in a NO instance, this is one-dimensional, and in a YES instance, this is at least two-dimensional.
In fact, the uniform superposition over all inputs, $\plus$, is always in this subspace.
As a result, our protocol (\Cref{thm:exists_qma_protocol}) either runs the spectral test, or checks if the input state is close to $\plus$.

Consider the randomized variant of \problemexpansionnoarg{}. The graph of any graph-coded function presented in a YES instance is guaranteed to have a small set $V$ (i.e. $|V| = N^\alpha$) disconnected from the rest of the graph. As a result, there is a state, defined only by the vertices of $V$, that is all-but-negligibly supported in the $\lambda(L_G) = 0$ subspace. This state is the \emph{subset state} $\ket{V}$:
\begin{definition}[Subset state]
\label{defn:subsetstate}
For any non-empty subset $S \subseteq [N]$, define the subset state of $S$ as 
\begin{align}
    \ket{S} := \frac{1}{\sqrt{|S|}} \sum_{x \in S} \ket{x}\,.
\end{align}
\end{definition}
Since $\ket{V}$ is a good witness for \emph{every} graph encoded in a YES instance, the $\QMA$ protocol works just as well in the randomized setting (\Cref{thm:exists_qma_protocol_randomized}).

Randomized oracles that present a set $F$ of graph-coded functions are stable to small changes in the set $F$. In fact, an oracle presenting $F$ encoding all $d$-regular expander graphs is indistinguishable from an oracle presenting $F$ encoding all $d$-regular graphs. The latter oracle can be simulated with $d/2$ queries to the NO instance of \problemhiddensubset{RANDOMIZED}; we show in \Cref{thm:exists_qma_protocol_hiddensubset} that the same $\QMA$ protocol can also decide this problem.

\subsection{The spectral test}
\label{sub:spectraltest}
We give a test that takes an input state $\ket{\psi} = \sum_{x\in[N]} a_x \ket{x}$ on $n$ qubits, and fails with probability proportional to $\bra{\psi}L_G\ket{\psi}$.  This relies on a curious fact:
\begin{lemma}
\label{lemma:curious}
Consider a $d$-regular graph $G$ (for even $d$) on $2^n$ vertices and a $G$-coded function $f$. Suppose we have a normalized quantum state $\ket{\psi} = \sum_{x \in [N]} a_x \ket{x} $ on $n$ qubits. Then 
\begin{align}
\sum_{i\in[d/2]}\sum_{x \in [N]} \| a_x \pm a_{f(x,i)} \|^2
= d \pm \bra{\psi}A_G \ket{\psi}\,.
\end{align}
\end{lemma}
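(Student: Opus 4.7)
The plan is to expand $|a_x \pm a_{f(x,i)}|^2 = |a_x|^2 + |a_{f(x,i)}|^2 \pm 2\,\mathrm{Re}(\overline{a_x}\, a_{f(x,i)})$ and handle the three resulting double sums separately. The diagonal terms will collapse to $d$ via normalization and bijectivity, and the cross term will reduce to $\bra{\psi}A_G\ket{\psi}$ via the edge decomposition of $G$.

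For the first diagonal piece, $\sum_{i\in[d/2]}\sum_{x\in[N]} |a_x|^2 = (d/2)\cdot 1 = d/2$ since $\ket{\psi}$ is normalized. For the second, each $f_i(\cdot) := f(\cdot,i)$ is a bijection on $[N]$ by the definition of a graph-coded function, so $\sum_x |a_{f_i(x)}|^2 = \sum_y |a_y|^2 = 1$, and summing over $i$ yields another $d/2$. Together the diagonal contributions produce exactly $d$, independent of the $\pm$ sign.

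For the cross term, I would unpack $\bra{\psi}A_G\ket{\psi}$ using the Petersen 2-factor decomposition $G = \bigsqcup_{i\in[d/2]} E_i$ already invoked in the graph-coded function remark. Each $E_i$ is coded by the permutation $f_i$, so the pairs $(x, f_i(x))$ enumerate the edges of $E_i$ with an orientation, giving $(A_{E_i})_{x,y} = [y = f_i(x)] + [y = f_i^{-1}(x)]$. Consequently,
\begin{align*}
\bra{\psi}A_G\ket{\psi} \;=\; \sum_{i\in[d/2]}\sum_{x\in[N]} \overline{a_x}\bigl(a_{f_i(x)} + a_{f_i^{-1}(x)}\bigr).
\end{align*}
Re-indexing the $f_i^{-1}$ summand via $y = f_i^{-1}(x)$ rewrites it as $\sum_x a_x\, \overline{a_{f_i(x)}}$, so the full expression equals $2\sum_i\sum_x \mathrm{Re}\bigl(\overline{a_x}\, a_{f_i(x)}\bigr)$, which is exactly the cross term in the expansion with a $\pm$ sign. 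Adding this to the diagonal total $d$ gives the claimed identity.

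The argument is elementary — no inequality is needed, only manipulation of summation indices. The one subtlety is the orientation convention for $A_{E_i}$: a degenerate 2-cycle of $f_i$ contributes $2$ to the corresponding adjacency entry through both indicators, but this matches the multigraph convention implicit in the graph-coded function definition (the tuple uniqueness property), so the identity is self-consistent and no case analysis is required.
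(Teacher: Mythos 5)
Your proposal is correct and follows essentially the same route as the paper: expand the square, use normalization and bijectivity of each $f_i$ to get the diagonal contribution $d$, and identify the cross term with $\bra{\psi}A_G\ket{\psi}$. The only difference is that you justify the final identification via the 2-factor decomposition in more detail than the paper (which asserts it directly), and your remark about the 2-cycle/multigraph convention is a reasonable observation about the definition rather than a gap in either argument.
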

\begin{proof}

\begin{align}
    \sum_{i\in[d/2]} \sum_{x\in[N]}  \| a_x \pm a_{f(x,i)} \|^2
    &= \sum_{i\in[d/2]} \sum_{x\in[N]} \| a_x\|^2 + \| a_{f(x,i)}\|^2 \pm (a_x a_{f(x,i)}^* + a_x^* a_{f(x,i)}) \\
    &= d \pm \sum_{i\in[d/2]} \sum_{x\in[N]} (a_x a_{f(x,i)}^* + a_x^* a_{f(x,i)}) 
    \\
    &= d \pm \bra{\psi} A_G \ket{\psi}.
\end{align}
\end{proof}
We construct the spectral test with one query either to a standard oracle or in-place oracle presenting a graph-coded function $f$. The former (\Cref{procedure:spectraltest_std}) is a SWAP test but with an oracle query in the middle.
The latter (\Cref{procedure:spectral_inplace}) relies on controlled access to the in-place oracle.

\begin{procedure}[Spectral test with a standard oracle]
\label{procedure:spectraltest_std}
Consider a $d$-regular graph $G$ on $N = 2^n$ vertices where $d$ is even, and normalized state $\ket{\psi} = \sum_{x \in [N]} a_x \ket{x} \in \mathbb{C}^{N}$. We assume access to a standard oracle $U_f: \mathbb{C}^{k \times k}$ for $k = N^22^{\lceil \log_2{d} \rceil}$, which acts on a basis vector as
\begin{align}
    U_f\ket{c,x,i,z} = \ket{c \oplus f^z(x,i),x,i,z}\,,
\end{align}
for $c,x \in [N]$, $i \in 2^{\lceil \log_2{d}\rceil - 1}$, and $z \in \pmset$.
\begin{enumerate}
    \item Pick $i \in [d/2]$ uniformly at random, and prepare the state $\ket{i} \in \mathbb{C}^{2^{\lceil \log_2{d} \rceil - 1}}$.
    \item Prepare a qubit in the state $\ket{+} = \frac{\ket{1} + \ket{-1}}{\sqrt{2}} \in \mathbb{C}^2$. (Recall that we label the values of this register in $\pmset$.)
    \item Combine $n$ registers $\ket{0}^{\otimes n}$, the input state $\ket{\psi}$, and $\ket{i}$ and $\ket{+}$ to create $\ket{0}^{\otimes n}\ket{\psi}\ket{i}\ket{+}$.
    \item Apply the oracle $U_f$, which creates the state
    \begin{align}
        \frac{1}{\sqrt{2}} \sum_{x \in [N]} a_x \left(
         \ket{f(x,i)}\ket{x}\ket{i}\ket{1}
         +\ket{f^{-1}(x,i)}\ket{x}\ket{i}\ket{-1} \right) \,.
    \end{align}
    \item Swap the first two sets of $n$ qubits, controlled by the last qubit. This creates the state
    \begin{align}
        &\frac{1}{\sqrt{2}} \sum_{x \in [N]} a_x \left(
         \ket{f(x,i)}\ket{x}\ket{i}\ket{1}
         +\ket{x}\ket{f^{-1}(x,i)}\ket{i}\ket{-1}
         \right)
         \\
         = &\frac{1}{\sqrt{2}} \sum_{x \in [N]} a_x \ket{f(x,i)}\ket{x}\ket{i}\ket{1}
        + 
        \frac{1}{\sqrt{2}} \sum_{x \in [N]} a_{f(x,i)} \ket{f(x,i)}\ket{x}\ket{i}\ket{-1}\,.
    \end{align}
    \item Apply a Hadamard on the last qubit, which creates the state
    \begin{align}
          \frac{1}{2} \sum_{x \in [N]} (a_x + a_{f(x,i)}) \ket{f(x,i)}\ket{x}\ket{i}\ket{1}
        + 
        \frac{1}{2} \sum_{x \in [N]} (a_x - a_{f(x,i)}) \ket{f(x,i)}\ket{x}\ket{i}\ket{-1}\,.
    \end{align}
    \item Measure the last qubit and accept if it is $1$.
\end{enumerate}
Moreover, by \Cref{lemma:curious}, this procedure fails with probability 
\begin{align}
    \frac{1}{d/2}\sum_{i\in[d/2]}  \sum_{x \in [N]} \frac{\| a_x - a_{f(x,i)} \|^2}{4} = \frac{\bra{\psi}L_G\ket{\psi}}{2d}\,.
\end{align}
\end{procedure}

\begin{procedure}[Spectral test with an in-place oracle]
\label{procedure:spectral_inplace}
Consider a $d$-regular graph $G$ on $N = 2^n$ vertices where $d$ is even, and normalized state $\ket{\psi} = \sum_{x \in [N]} a_x \ket{x} \in \mathbb{C}^{N}$. We assume controlled access to an in-place oracle $\widetilde{U}_f: \mathbb{C}^{k \times k}$ for $k = N2^{\lceil \log_2{d} \rceil + 1}$, which acts on a basis vector as
\begin{align}
        \widetilde{U}_f\ket{a,x,i,z} = \ket{a, f^{a \cdot z}(x,i),i,z}\,. 
\end{align}
for control qubit $a \in \{0,1\}$, $x \in [N]$, $i \in 2^{\lceil \log_2{d} \rceil - 1}$, and $z \in \pmset$.\footnote{Note that this procedure does not actually need access to the inverse of $f$ to conduct the spectral test.}
\begin{enumerate}
 \item Pick $i \in [d/2]$ uniformly at random, and prepare the state $\ket{i} \in \mathbb{C}^{2^{\lceil \log_2{d} \rceil - 1}}$.
    \item Prepare a qubit in the state $\ket{+} = \frac{\ket{0} + \ket{1}}{\sqrt{2}} \in \mathbb{C}^2$.
    \item Combine $\ket{+}$, the input state $\ket{\psi}$, $\ket{i}$, and a register $\ket{1}$ to create $\ket{+}\ket{\psi}\ket{i}\ket{1}$.
    \item Apply the oracle $\widetilde{U}_f$, which creates the state
    \begin{align}
        &\frac{1}{\sqrt{2}}\sum_{x \in [N]} a_x \left(\ket{0}\ket{x} + \ket{1}\ket{f(x,i)} \right)\ket{i}\ket{1}  \\
        = &\frac{1}{\sqrt{2}}\sum_{x \in [N]} (a_x \ket{0} + a_{f^{-1}(x,i)} \ket{1})\ket{x}\ket{i}\ket{1}\,.
    \end{align}
    \item Apply a Hadamard on the first qubit, which creates the state
    \begin{align}
         \frac{1}{2}\sum_{x \in [N]} 
         \left(
         (a_x + a_{f^{-1}(x,i)})\ket{0}
         +  (a_x - a_{f^{-1}(x,i)})\ket{1}\right) 
         \ket{x}\ket{i}\ket{1}\,.
            \end{align}
    \item Measure the first qubit and accept if it is $0$.
\end{enumerate}
Moreover, by \Cref{lemma:curious}, this procedure fails with probability 
\begin{align}
    \frac{1}{d/2}\sum_{i \in [d/2]} \sum_{x \in [N]} \frac{\|a_x - a_{f^{-1}(x,i)}\|^2}{4} = \frac{\bra{\psi}L_G\ket{\psi}}{2d}\,.
\end{align}
\end{procedure}

\subsection{A one-query protocol}
\label{sub:onequery}
\begin{theorem}
\label{thm:exists_qma_protocol}
There is a $\QMA$ protocol for \problemexpansion{STANDARD} and \problemexpansion{IN-PLACE} at every even $d \ge 4$, all $0 < \alpha < \frac{1}{2}$, and all constant $\epsilon > 0$.
\end{theorem}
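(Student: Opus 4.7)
The plan is to build the $\QMA$ verifier as a random choice between two tests on the witness $\ket{w}$: with probability $1/2$, run the spectral test of \Cref{procedure:spectraltest_std} (or \Cref{procedure:spectral_inplace} in the in-place case) on $\ket{w}$, and with probability $1/2$, perform a ``not-uniform'' test that rejects states close to $\plus$. The second test can be implemented by, e.g., applying Hadamards to $\ket{w}$, measuring in the computational basis, and accepting iff the outcome is not $0^n$; this rejects $\plus$ with probability $1$ and rejects any state $\ket{w}$ with probability $1 - |\langle +^n | w\rangle|^2$.

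For completeness, I will take the honest witness to be the subset state $\ket{V}$ where $V$ is the vertex set of the small disconnected component promised in a YES instance. Since $V$ is a disconnected union of $d$-regular components, $L_G \ket{V} = 0$, so by the spectral test's failure probability $\bra{\psi}L_G\ket{\psi}/(2d)$ the spectral test accepts with probability $1$. For the not-uniform test, $|\langle +^n | V\rangle|^2 = |V|/N = N^{\alpha - 1}$, so it accepts with probability $1 - N^{\alpha-1}$, which is $1 - o(1)$ for any $\alpha < 1$. Combining, the honest witness is accepted with probability at least $1 - N^{\alpha-1}/2$.

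For soundness, fix a NO instance and an arbitrary witness $\ket{w}$. Decompose
\begin{align}
\ket{w} = \beta \plus + \sqrt{1 - |\beta|^2}\, \ket{w^{\perp}},
\end{align}
where $\ket{w^{\perp}}$ is orthogonal to $\plus$. Since the NO instance has spectral gap at least $\epsilon$, the entire orthogonal complement of $\plus$ lies in the span of eigenvectors of $L_G$ with eigenvalue $\geq \epsilon$, so $\bra{w^{\perp}} L_G \ket{w^{\perp}} \geq \epsilon$ and hence $\bra{w}L_G\ket{w} \geq (1-|\beta|^2)\epsilon$. The spectral test therefore rejects with probability at least $(1-|\beta|^2)\epsilon/(2d)$, while the not-uniform test rejects with probability $1 - |\beta|^2$. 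Averaging the two branches, the verifier rejects with probability at least
\begin{align}
\frac{1}{2}\left(\frac{(1-|\beta|^2)\epsilon}{2d}\right) + \frac{1}{2}(1 - |\beta|^2) \geq \frac{\epsilon}{4d}(1 - |\beta|^2) + \frac{1}{2}(1 - |\beta|^2),
\end{align}
which is minimized (over the worst case for soundness) by $|\beta|^2 = 1$; but in that case the not-uniform test rejects $\plus$ with probability $1$, contributing $1/2$ on its own. Combining both regimes one obtains a constant soundness error strictly less than the completeness probability $1 - N^{\alpha-1}/2$, giving an inverse-polynomial promise gap which can then be amplified to the standard $\QMA$ gap.

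The main obstacle I expect is bookkeeping the case split in soundness so that the two rejection bounds combine into a single constant soundness error independent of $\beta$: one must verify that for every $|\beta|^2 \in [0,1]$ at least one of the two tests rejects with constant probability, and then cleanly separate this from the YES acceptance probability. Since the spectral test ultimately uses only one oracle query and the not-uniform test uses none, the resulting protocol is a genuine one-query $\QMA$ protocol, as claimed; the argument is uniform in the choice between standard and in-place presentations since only the failure probability formula of the spectral test is used, which is identical in both cases.
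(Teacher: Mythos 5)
Your protocol is the same as the paper's: a coin flip between the spectral test and a ``reject-if-uniform'' test, with honest witness $\ket{V}$; the completeness analysis (via $L_G\ket{V}=0$ exactly) and the soundness decomposition along $\plus$ match the paper's argument. The one flaw is a sign slip in the soundness bound: your not-uniform test rejects $\ket{w}$ with probability $\abs{\braket{+^{\otimes n}}{w}}^2=\abs{\beta}^2$, not $1-\abs{\beta}^2$ (as your own completeness calculation and the remark that it rejects $\plus$ with probability $1$ confirm). With the correct value the averaged rejection probability is
\begin{align}
\frac{1}{2}\cdot\frac{(1-\abs{\beta}^2)\,\epsilon}{2d}+\frac{1}{2}\abs{\beta}^2
=\frac{\epsilon}{4d}+\abs{\beta}^2\left(\frac{1}{2}-\frac{\epsilon}{4d}\right)\ \ge\ \frac{\epsilon}{4d}\,,
\end{align}
since $\epsilon\le 2d$ for a $d$-regular Laplacian; this is a uniform constant lower bound for all $\beta$, so the case split you were worried about disappears and the promise gap is in fact constant (not merely inverse-polynomial), matching the paper.
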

\begin{proof}
Suppose the oracle presents a $G$-coded function. Let $\lambda_1 \le \lambda_2 \le \dots \le \lambda_N$ be the eigenvalues of the graph Laplacian $L_G$. Note that the smallest eigenvalue of a regular graph $G$ is $\lambda_1 = 0$. We always choose the eigenvector associated with $\lambda_1$ as a uniform superposition over vertices of the graph (i.e. $\ket{\lambda_1} := \ket{[N]} = \plus$).

Suppose Arthur receives a state $\ket{\psi} = \sum_{i\in[N]} \alpha_i \ket{\lambda_i}$ from Merlin. Consider the following strategy:
\begin{itemize}
    \item With probability $\frac{1}{2}$, measure $\ket{\psi}$ in the Hadamard basis. Fail if it is in the basis state according to $\plus$, and pass otherwise.
    \item With probability $\frac{1}{2}$, use the spectral test (\Cref{procedure:spectraltest_std} or \Cref{procedure:spectral_inplace}, respectively).
\end{itemize}
The probability of failure $\text{FAIL}$ is 
\begin{align}
 \text{FAIL} &= 
    \frac{1}{2} \|\bra{\psi}\plus \|^2
    + \frac{1}{2} \frac{\bra{\psi} L_G \ket{\psi}}{2d}
    \\
    &= \frac{1}{2} \Big( 
    \|\alpha_1\|^2 + \frac{1}{2d}\sum_{i=1}^N \lambda_i \|\alpha_i\|^2
    \Big)\,.
\end{align}
In a NO instance, $\lambda_k = \Omega(1)$ for all $k > 1$. So the probability of failure is always a positive constant:
\begin{align}
    \text{FAIL}_{\text{NO}} =  \frac{1}{2}\|\alpha_1\|^2 + \frac{1}{2d}\sum_{i=2}^N \Omega(\|\alpha_i\|^2) = \Omega( \sum_{i=1}^N \|\alpha_i\|^2) = \Omega(1)\,.
\end{align}
In a YES instance, the spectrum of $L_G$ is the combined spectrum of the two disconnected graphs. This means $\lambda_1 = \lambda_2 = 0$, and the associated eigenvectors are linear combinations of $\ket{V}$ and $\ket{[N]/V}$.
Recall that $\ket{\lambda_1} := \plus$. We find the orthogonal eigenvector $\ket{\lambda_2}$ in this subspace by inspection:
\begin{align}
    \ket{\lambda_2} = \sqrt{\frac{N-|V|}{N}}\ket{V} +  \sqrt{\frac{|V|}{N}}\ket{[N]/V}\,.
\end{align}
Note that any vector with $\|\alpha_2\|^2 = 1 - o(\invpolyn)$ has negligible probability of failure:
\begin{align}
    \text{FAIL}_{\text{YES}} = \frac{1}{2}
    \|\alpha_1\|^2 + \frac{1}{2d}\sum_{i=1}^N \lambda_i \|\alpha_i\|^2 
    = O(\|\alpha_1\|^2 + \sum_{i=3}^N \|\alpha_i\|^2) = O(1 - \|\alpha_2\|^2)\,.
\end{align}
Suppose Merlin sends the subset state $\ket{V}$. Since $\|\bra{V}\ket{\lambda_2}\|^2 = 1 - \frac{|V|}{N} = 1 - O(\invexpn)$, the strategy has probability of failure $O(\invexpn)$.
\end{proof}

In general, the spectral test can be used in a $\QMA$ protocol to test the magnitude of the second-smallest or largest eigenvalue of a graph Laplacian to inverse polynomial precision. The former is a measure of the quality of a graph's expansion, and the latter is related to a measure of a graph's bipartiteness named the \emph{bipartiteness ratio} \cite{trevisan_bipartitenessratio}.

Because this $\QMA$ protocol requires only one query of either a standard oracle or an in-place oracle, it works even when these oracles are randomized. 
\begin{theorem}
\label{thm:exists_qma_protocol_randomized}
There is a $\QMA$ protocol for \problemexpansion{STANDARD RANDOMIZED} and \problemexpansion{IN-PLACE RANDOMIZED} at every even $d \ge 4$, all $0 < \alpha < \frac{1}{2}$, and all constant $\epsilon > 0$.
\end{theorem}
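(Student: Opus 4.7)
The plan is to show that exactly the same verifier from the proof of \autoref{thm:exists_qma_protocol} already works in the randomized setting, essentially because the protocol uses only one query and linearity of expectation handles the averaging over $f \in F$. Concretely, Arthur still flips a fair coin: with probability $\tfrac{1}{2}$ he measures $\ket{\psi}$ in the Hadamard basis and rejects if he sees $\plus$, and with probability $\tfrac{1}{2}$ he runs the spectral test (\autoref{procedure:spectraltest_std} in the standard case, \autoref{procedure:spectral_inplace} in the in-place case). For a randomized YES instance specified by $V$, Merlin sends the subset state $\ket{V}$.

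First, I would verify completeness. Fix a YES oracle $\mathcal{O}_{F_V}$ with $|V| = N^\alpha$. By definition every $f \in F_V$ corresponds to a graph $G_f$ in which $V$ and $[N]\setminus V$ are disconnected, so $\ket{V}$ lies in the kernel of the Laplacian $L_{G_f}$ for \emph{every} such $f$. Therefore the conditional failure probability of the spectral test, averaged over the random choice of $f$,
\[
\frac{1}{|F_V|} \sum_{f \in F_V} \frac{\bra{V} L_{G_f} \ket{V}}{2d} = 0,
\]
and the Hadamard test fails with probability $|\!\braket{V}{+^{\otimes n}}\!|^2 = |V|/N = N^{\alpha-1}$, which is exponentially small in $n$. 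Hence the total failure probability is $O(N^{\alpha-1}) = O(\invexpn)$, just as in the non-randomized proof.

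Next I would handle soundness. For an arbitrary purported witness $\ket{\psi} = \sum_i \alpha_i \ket{\lambda_i^{(f)}}$, the verifier's overall failure probability against the randomized NO oracle $\mathcal{O}_{F_\varnothing}$ is, by linearity,
\[
\tfrac{1}{2}\,|\alpha_1|^2 \;+\; \tfrac{1}{2}\cdot \frac{1}{|F_\varnothing|}\sum_{f \in F_\varnothing} \frac{\bra{\psi} L_{G_f} \ket{\psi}}{2d}.
\]
Here I need to be mildly careful: the eigenbasis $\{\ket{\lambda_i^{(f)}}\}$ depends on $f$, but in every case $\ket{\lambda_1^{(f)}} = \plus$ and every other eigenvalue is at least $\epsilon$. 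Thus for each fixed $f$, the argument of \autoref{thm:exists_qma_protocol} gives a conditional failure probability $\Omega(1)$ (independent of $f$), and averaging over $f \in F_\varnothing$ preserves the $\Omega(1)$ bound. Since $\ket{\psi}$ is fixed before the single oracle query, no correlations between queries can reduce this.

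The only mildly subtle point -- and really the only place the randomized model intrudes -- is that the oracle is now a channel rather than a unitary, so one should track the analysis at the level of density matrices: run the spectral test on $\rho = \ketbra{\psi}{\psi}$ through the channel $\mathcal{O}_F = \tfrac{1}{|F|} \sum_f \mathcal{U}_f$ and use linearity of the acceptance functional in $f$. I do not expect any real obstacle here, because \autoref{procedure:spectraltest_std} and \autoref{procedure:spectral_inplace} are explicit one-query protocols whose acceptance probability is an affine function of the chosen $f$, so the analysis collapses to the per-$f$ bound above.
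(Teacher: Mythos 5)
Your proposal is correct and matches the paper's proof, which likewise just re-runs the verifier of \autoref{thm:exists_qma_protocol} and observes that every $f \in F_V$ shares the same disconnected set $V$ (so $\ket{V}$ is a good witness for all of them) while every $f \in F_\varnothing$ is an expander (so soundness holds per-$f$ and survives averaging). Your extra remarks on linearity of the acceptance probability over the random choice of $f$ make explicit what the paper leaves implicit, but the route is the same.
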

\begin{proof}
The strategy in \Cref{thm:exists_qma_protocol} also works here. Consider any $G$-coded function presented in a YES instance; the same vertices $V$ are exactly the vertices of the smaller component of $G$. So the witness $\ket{V}$ is close to the second eigenvector $\ket{\lambda_2}$, and the failure probability is negligible. Now consider any $G$-coded function presented in a NO instance. By definition, $G$ is an expander graph, so the failure probability is always a positive constant.
\end{proof}
Because a randomized oracle chooses a function uniformly from a set $F$, it is statistically indistinguishable from an oracle with exponentially small changes to $F$. 
We use this fact to simplify the NO instance in \problemexpansion{RANDOMIZED}. Suppose the NO instance instead presents graph-coded functions of \emph{all} $d$-regular graphs. Since $1 - O(\frac{1}{\poly(N)}) = 1 - O(\invexpn)$ graphs have a constant spectral gap \cite{friedman} when $d > 2$, the failure probability in the $\QMA$ protocol changes by at most $O(\invexpn)$. 

Notice that with this modification, the oracles are exactly $d/2$ copies of the oracles in \problemhiddensubset{RANDOMIZED}.
One way to interpret this is that the randomization offers a substitute for expander graphs.
An expander graph is sparse but well-mixing; a randomized oracle query instantaneously mixes across a graph's connected component. 
As a result, we can distinguish degree-2 graphs with this $\QMA$ protocol, even though they are not typically expander graphs:

\begin{theorem}
\label{thm:exists_qma_protocol_hiddensubset}
There is a $\QMA$ protocol for \problemhiddensubset{STANDARD RANDOMIZED} and \problemhiddensubset{IN-PLACE RANDOMIZED} for all $0 <\alpha < \frac{1}{2}$.
\end{theorem}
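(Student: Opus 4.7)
The plan is to reuse the protocol from \Cref{thm:exists_qma_protocol_randomized}: Arthur receives a witness $\ket{\psi} = \sum_x a_x \ket{x}$, and with probability $1/2$ each, either measures $\ket{\psi}$ in the Hadamard basis and rejects on outcome $\plus$, or runs the spectral test (\Cref{procedure:spectraltest_std} or \Cref{procedure:spectral_inplace}). I would instruct Merlin to send the subset state $\ket{V}$. Writing $G_\pi$ for the $2$-regular graph with edges $\{(x, \pi(x))\}$ encoded by a permutation $\pi$, the spectral test's failure on a single realization $\pi$ of the randomized oracle reads
\begin{align}
\sum_x \frac{\|a_x - a_{\pi(x)}\|^2}{4} = \frac{\bra{\psi}L_{G_\pi}\ket{\psi}}{4},
\end{align}
and the entire protocol consumes one oracle query.

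For completeness, I would fix any $\pi \in T_V$ and check that $\ket{V}$ is an exact zero eigenvector of $L_{G_\pi}$. Because $\pi$ stabilizes $V$, every $x \in V$ has both of its $G_\pi$-neighbors in $V$, so $A_{G_\pi}\ket{V} = 2\ket{V}$ and hence $L_{G_\pi}\ket{V} = 0$. The spectral test then never rejects $\ket{V}$, while the Hadamard branch rejects with probability $\|\bra{V}\plus\|^2 = |V|/N = N^{\alpha - 1}$, giving total failure $o(1)$ for any $\alpha < 1/2$.

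For soundness, the key step is to average the per-$\pi$ failure over $\pi$ uniform in $T_\varnothing$. Since $\pi(x)$ is marginally uniform on $[N]$ under a uniform random permutation, a direct computation using $|\sum_x a_x|^2 = N\|\bra{\psi}\plus\|^2$ gives
\begin{align}
\mathbb{E}_\pi \sum_x \frac{\|a_x - a_{\pi(x)}\|^2}{4} = \frac{1}{4N}\sum_{x,y} \|a_x - a_y\|^2 = \frac{1 - \|\bra{\psi}\plus\|^2}{2}.
\end{align}
Combining with the Hadamard branch's failure $\|\bra{\psi}\plus\|^2$, the equal mixture yields total failure $\tfrac{1}{4} + \tfrac{1}{4}\|\bra{\psi}\plus\|^2 \ge \tfrac{1}{4}$, uniform in $\ket{\psi}$, so the completeness-soundness gap is a positive constant.

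The main obstacle, and the reason the argument of \Cref{thm:exists_qma_protocol_randomized} does not immediately port over, is that $2$-regular graphs are disjoint unions of cycles and typically have no spectral gap, so for an individual $\pi \in T_\varnothing$ the spectral test can reject with vanishing probability, ruling out a pointwise argument. The essential idea is that the \emph{mixture} $\mathbb{E}_\pi L_{G_\pi}$ does have a constant spectral gap — it is essentially $2(\mathbb{I} - \plus\braplus)$ — so averaging over the oracle's internal randomness substitutes for individual graph expansion. Since \Cref{procedure:spectraltest_std} and \Cref{procedure:spectral_inplace} share the same per-$\pi$ failure formula, the same computation handles both the standard-randomized and in-place-randomized cases.
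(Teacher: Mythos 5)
Your proposal is correct, and it follows the paper's protocol and completeness argument exactly (same two-branch test, same witness $\ket{V}$, which is indeed an exact zero eigenvector of $L_{G_\pi}$ for every $\pi \in T_V$, so only the Hadamard branch contributes $O(N^{\alpha-1})$ to the YES-case failure). Where you genuinely diverge is the soundness step. The paper keeps the argument graph-theoretic: it rewrites $\E_{\pi}\bra{\psi}A_\pi\ket{\psi}$ as $\tfrac{1}{2}\E_{\pi_1,\pi_2}\bra{\psi}(A_{\pi_1}+A_{\pi_2})\ket{\psi}$, observes that $A_{\pi_1}+A_{\pi_2}$ is the adjacency matrix of a random $4$-regular configuration-model graph, and invokes Friedman's theorem to conclude a constant spectral gap with probability $1-O(\invexpn)$, inheriting the $\Omega(1)$ soundness of \Cref{thm:exists_qma_protocol} up to an exponentially small correction. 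You instead compute the averaged failure probability in closed form: using that $\pi(x)$ is marginally uniform, $\E_\pi\sum_x\|a_x-a_{\pi(x)}\|^2/4 = (1-\|\braket{\psi}{+^{\otimes n}}\|^2)/2$, so the mixed test rejects with probability exactly $\tfrac14 + \tfrac14\|\braket{\psi}{+^{\otimes n}}\|^2 \ge \tfrac14$ for every input state. Your route is more elementary and gives an exact, non-asymptotic soundness constant with no appeal to random-graph spectral theory; what it gives up is the conceptual framing the paper wants to emphasize (and reuses in its exposition), namely that the oracle's memoryless randomness lets one trade a single random permutation for a random $4$-regular expander, so that ``randomization substitutes for expansion.'' Both arguments are valid for the standard-randomized and in-place-randomized oracles alike, since the two spectral-test procedures share the same per-$\pi$ failure formula and the protocol uses a single query, so linearity lets you average over the oracle's internal randomness.
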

\begin{proof}
Perhaps surprisingly, the strategy in \Cref{thm:exists_qma_protocol} also works here:
\begin{itemize}
    \item Consider the graph $G$ of any $G$-coded function presented in a YES instance. By definition, the vertices $V$ are disconnected from all vertices in $[N]/V$. So the witness $\ket{V}$ is close to the second eigenvector $\ket{\lambda_2}$, and the failure probability is negligible.
    \item Consider the NO instance. Then $f$ is chosen uniformly from the set $T_\varnothing$ of all permutations of $[N]$. Then the spectral test fails with probability
    \begin{align}
    \E_{\pi \in T_\varnothing}\left[\frac{d - \bra{\psi} A_\pi \ket{\psi}}{2d} \right]\Bigg|_{d=2}
    &=  \frac{1}{2} - \frac{1}{4} \E_{\pi \in T_\varnothing}\left[{\bra{\psi} A_\pi \ket{\psi}} \right] \\
    &=  \frac{1}{2} - \frac{1}{4}\left( \frac{1}{N!}\sum_{\pi \in T_\varnothing}\bra{\psi}A_\pi \ket{\psi} \right)
    \\
    &= \frac{1}{2} - \frac{1}{8} \left( \frac{1}{(N!)^2}\sum_{\pi_1, \pi_2 \in T_\varnothing}\bra{\psi}(A_{\pi_1} + A_{\pi_2}) \ket{\psi} \right)\,.
    \end{align}
The matrix $A_{\pi_1} + A_{\pi_2}$ determines the adjacency matrix of a random $4$-regular graph in the configuration model; as a result,
\begin{align}
    \E_{\pi \in T_\varnothing}\left[\frac{d - \bra{\psi} A_\pi \ket{\psi}}{2d} \right]\Bigg|_{d=2} = \E_{\pi_1,\pi_2 \in T_\varnothing}\left[\frac{d - \bra{\psi} A_{\pi_1,\pi_2} \ket{\psi}}{2d} \right]\Bigg|_{d=4}\,.
\end{align}
Since a random $4$-regular graph has constant spectral gap with probability $1 - O(\frac{1}{\poly(N)}) = 1 - O(\invexpn)$ \cite{friedman}, the failure probability is at least $\text{FAIL}_{\text{YES}}$ from \Cref{thm:exists_qma_protocol}, less $O(\invexpn)$. So the failure probability is $\Omega(1)$, just as before.
\end{itemize}
\end{proof}

\section{Randomized oracles and symmetric subspaces}
\label{sec:classical_witness_lower_bound}

Our main goal in this section is to show that a general class of verifiers cannot decide \problemhiddensubset{IN-PLACE RANDOMIZED}. 
Recall that in this problem, a verifier has access to a quantum channel and a polynomial-sized classical witness, and must distinguish whether the oracle presents a uniformly random permutation or a permutation that stabilizes a hidden subset $V$. 
Let $\mathcal{Y}$ be the set of all YES instances; note that each instance is uniquely defined by a subset $V$.

Suppose there exists a $\QCMA$ algorithm for this problem.
Since there are at most $O(2^{\poly(n)})$ different classical witnesses, there exists a set of YES instances $\mathcal{Y}'$ that share the same witness, such that $\left| \mathcal{Y}'\right| /\left| \mathcal{Y}\right| = \Omega(2^{-\poly(n)})$.
We can refute the existence of such an algorithm by proving that the same verification ``strategy'' cannot distinguish all instances of $\mathcal{Y}'$ from the NO case with non-negligible probability.
A ``strategy'' is exactly a quantum algorithm: a series of unitaries and oracle queries, followed by a POVM. Without loss of generality, a $T$-query algorithm alternates between unitaries and oracle queries on $\mathcal{H}_O \otimes \mathcal{H}_W$ followed by a measurement\footnote{Note that the last operation does not have to be a unitary -- one can simply replace a unitary followed by a POVM with another equivalent POVM.}, where $\mathcal{H}_O$ is the Hilbert space of the ``oracle'' qubits and $\mathcal{H}_W$ is the extra workspace:
\begin{align}
    \mathcal{E}_O[\rho_0] =  \left(\mathcal{O}\otimes \mathbb{I} \right) \circ \mathcal{U}_{T}\circ \ldots \circ \mathcal{U}_2 \circ\left( \mathcal{O}\otimes  \mathbb{I}\right) \circ \mathcal{U}_1 [\rho_0]\,.
\end{align}
One may try to use the hybrid argument of Bennett, Bernstein, Brassard, and Vazirani~\cite{bbbv} and Ambainis~\cite{Ambainis2000-kd} to prove that the diamond norm $\left|\mathcal{E}_{\oracletv} - \mathcal{E}_{\oracletnone}\right|_\diamond$ is small in expectation over the choice of $\oracletv \in \mathcal{Y}'$. 
This would imply that the verifier cannot distinguish all instances of $\mathcal{Y}'$ with the same strategy. 
We can consider the optimal distinguishing probability in terms of $\left|\mathcal{E}_{\oracletv}[\rho_0] - \mathcal{E}_{\oracletnone}[\rho_0]\right|_1$ for some fixed $\rho_0 \in \mathcal{H}_O \otimes \mathcal{H}_W$.

However, this statistical argument does not hold for some choices of $\mathcal{Y}'$. 
Consider the following simple example: $\mathcal{Y}'$ contains all $V$ such that $1 \in V$. First, $\mathcal{Y}'$ satisfies the size implied by the pigeonhole principle. Second, for $\rho_0 = \ketbra{1} \otimes \mathbb{I}$, $\left|\oracletv[\rho_0] - \oracletnone[\rho_0]\right|_1$ is large for \emph{all} instances in $\mathcal{Y}'$, since $\ket{1}\bra{1}$ mixes only within a small subset.
Note that this only implies the existence of an \emph{instance-specific} POVM distinguishing each YES instance in $\mathcal{Y}'$ from the NO instance.
By contrast, a verification strategy has a \emph{fixed} POVM $\left\{E, \mathbb{I} - E \right\}$.
This allows us to prove that the following value is small on average over the choice of $V$:
\begin{align}
    \left|\Tr\left[E \mathcal{E}_{\oracletv}[\rho_0]\right] - \Tr\left[E \mathcal{E}_{\oracletnone}[\rho_0]\right] \right|
\end{align}
We must bound this value for arbitrary choices of $E$, $\rho_0$ and $\mathcal{U}_i$ fixed in the algorithm. 
In order to do this, we leverage tools from representation theory; this  allows us to see randomized oracles in our problem as \emph{orthogonal projectors} into a subspace of matrices with low dimension.
One caveat of our technique is that 
the verifier is only allowed to have $O(\log(n))$ extra workspace qubits.
This restriction is necessary to reduce the subspace dimension to regimes we can handle.

Representation theory has been previously used to study symmetric operators on variables (in probability) or qubits (in quantum computing) using the language of de Finetti theorems~(e.g. \cite{harrowchurch}); these operators project into subspaces of permutation-invariant sequences or quantum states. By contrast, we notice that some randomized oracles are symmetric operators on \emph{density matrices}. This allows us to explicitly find an orthogonal basis for the associated symmetric subspaces.
We match oracle models with problems with the same group structure: \problemhiddensubset{RANDOMIZED} for in-place oracles in \Cref{sub:classical_witness_notenough}, and an analogous special case of \problemhiddensubgroup{RANDOMIZED} for standard oracles in \Cref{sub:classical_witness_is_enough}.

We now formalize how randomized oracles are orthogonal projectors. We defer the proofs to \Cref{appendix:deferred_proofs}.
\begin{definition}[Representation of a group]
Consider a group $G$ and a vector space $\mathsf{V}$. A \emph{representation} of\, $G$ is a map $R$ that sends each $g \in G$ to a linear operator $R(g): \mathsf{V} \to \mathsf{V}$ such that $R(g_1 g_2) = R(g_1) \circ R(g_2)$  for all $g_1, g_2 \in G$.
\end{definition}

\begin{theorem}[Projecting onto the symmetric subspace {\cite[Proposition 2]{harrowchurch}}]
\label{thm:project_to_sym_subspace}
Consider a finite group $G$, a vector space $\mathsf{V}$, and a representation $R: G \to L(\mathsf{V})$. Then the operator
\begin{align}
    \Pi_R := \frac{1}{|G|} \sum_{g \in G} R(g)
\end{align}
is an orthogonal projector onto $\mathsf{V}^G \subseteq \mathsf{V}$, where
\begin{align}
    \mathsf{V}^G := \{ v \in \mathsf{V} \, :\, R(g)[v]  = v \, \forall g \in G\}\,.
\end{align}
\end{theorem}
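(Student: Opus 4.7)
The plan is to verify three properties: (i) $\Pi_R$ is idempotent; (ii) its image equals $\mathsf{V}^G$; and (iii) $\Pi_R$ is self-adjoint with respect to the appropriate inner product. Together these make $\Pi_R$ an orthogonal projector onto $\mathsf{V}^G$.

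First I would establish the translation invariance $R(h) \circ \Pi_R = \Pi_R$ for every $h \in G$, since
\[
R(h) \circ \Pi_R = \frac{1}{|G|}\sum_{g \in G} R(h) R(g) = \frac{1}{|G|}\sum_{g \in G} R(hg) = \Pi_R,
\]
using that left multiplication by $h$ is a bijection of $G$. Averaging both sides over $h$ yields $\Pi_R^2 = \Pi_R$, giving (i). For (ii), the same identity shows $R(h)(\Pi_R v) = \Pi_R v$ for every $v \in \mathsf{V}$ and every $h \in G$, so $\mathrm{Image}(\Pi_R) \subseteq \mathsf{V}^G$. Conversely, if $v \in \mathsf{V}^G$ then $R(g) v = v$ for all $g$, so $\Pi_R v = v$, which both proves the reverse inclusion and shows that $\Pi_R$ acts as the identity on $\mathsf{V}^G$.

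For (iii), orthogonality requires $\Pi_R = \Pi_R^\dagger$. In the applications that follow, $\mathsf{V}$ is a space of matrices with the Hilbert--Schmidt inner product, and each $R(g)$ is conjugation $\rho \mapsto U_g \rho U_g^\dagger$, so $R(g)^\dagger = R(g^{-1})$. Reindexing the sum then gives
\[
\Pi_R^\dagger = \frac{1}{|G|}\sum_{g \in G} R(g^{-1}) = \frac{1}{|G|}\sum_{g' \in G} R(g') = \Pi_R.
\]
The main obstacle is this third step, since the theorem as stated does not fix an inner product on $\mathsf{V}$. The natural remedy---either to assume $R$ is unitary, or to define a $G$-invariant inner product by averaging $\langle v,w\rangle_G := \tfrac{1}{|G|}\sum_g \langle R(g)v, R(g)w\rangle$---resolves the issue, and in either case the reindexing argument above finishes the proof.
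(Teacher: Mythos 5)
Your proposal is correct and follows essentially the same route as the paper's proof: translation invariance $R(h)\circ\Pi_R = \Pi_R$ gives idempotence and the image characterization, and orthogonality is handled by passing to the averaged $G$-invariant inner product (under which each $R(g)$ is unitary), exactly as the paper does. The only cosmetic difference is that you verify self-adjointness by computing $\Pi_R^\dagger$ via $R(g)^\dagger = R(g^{-1})$ and reindexing, whereas the paper checks $\langle \Pi_R[u], v\rangle = \langle u, \Pi_R[v]\rangle$ directly with a double-sum reindexing; these are interchangeable.
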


\begin{theorem}[Oracles on density matrices form a representation]
\label{thm:oracles_form_rep}
Consider a group $G$ of functions $f: [N] \to [N]$ with bitwise $\oplus$ as the group operation. Then the map $f \mapsto \mathcal{U}_f$ is a representation over the vector space of $2N^2 \times 2N^2$ complex matrices. 

Similarly, consider a group $\widetilde{G}$ of permutations $\pi: [N] \to [N]$ with composition as the group operation. Then the map $\pi \mapsto \mathcal{\widetilde{U}}_\pi$ is a representation over the vector space of $2N \times 2N$ complex matrices.
\end{theorem}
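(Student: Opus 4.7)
The plan is to verify the homomorphism defining a representation, $R(g_1 \cdot g_2) = R(g_1) \circ R(g_2)$, directly on basis vectors at the unitary level, and then transport it to channels.

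First I would dispatch the standard oracle case. Fixing $f_1, f_2 \in G$ and a basis vector $\ket{c,x,z}$ with $c, x \in [N]$ and $z \in \pmset$, a direct computation gives
$$U_{f_1} U_{f_2} \ket{c,x,z} = \ket{c \oplus f_2^z(x) \oplus f_1^z(x),\, x,\, z},$$
to be compared with $U_{f_1 \oplus f_2}\ket{c,x,z} = \ket{c \oplus (f_1 \oplus f_2)^z(x),\, x,\, z}$. The $z = +1$ case is immediate from the pointwise, commutative definition of bitwise XOR; the $z = -1$ case uses the hypothesis that $G$ is a group under $\oplus$, which forces the group's inversion to intertwine with XOR on its elements. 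The in-place oracle case follows the same template: for $\pi_1, \pi_2 \in \widetilde{G}$,
$$\widetilde{U}_{\pi_1}\widetilde{U}_{\pi_2}\ket{x, z} = \ket{\pi_1^z(\pi_2^z(x)),\, z},$$
which matches $\widetilde{U}_{\pi_1 \circ \pi_2}\ket{x, z} = \ket{(\pi_1 \circ \pi_2)^z(x),\, z}$ on the $z = +1$ slice by associativity of composition, and on the $z = -1$ slice via $(\pi_1 \circ \pi_2)^{-1} = \pi_2^{-1} \circ \pi_1^{-1}$ together with the ordering convention for operator composition. Linearity then extends both unitary equalities to the full Hilbert space.

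The step from unitaries to channels is automatic. Writing $\mathcal{U}_g[\rho] := U_g \rho U_g^\dagger$ and combining the unitary identity above with its adjoint,
$$\mathcal{U}_{g_1} \circ \mathcal{U}_{g_2}[\rho] = U_{g_1} U_{g_2}\, \rho\, U_{g_2}^\dagger U_{g_1}^\dagger = U_{g_1 \cdot g_2}\, \rho\, U_{g_1 \cdot g_2}^\dagger = \mathcal{U}_{g_1 \cdot g_2}[\rho],$$
which is exactly the homomorphism property on the space of $2N^2 \times 2N^2$ (resp.\ $2N \times 2N$) complex matrices, with the group product being bitwise XOR in the first case and function composition in the second. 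The main obstacle, such as it is, is purely notational: making sure the $f^z$ convention with $z \in \pmset$ is arranged so that the identity on the $z = -1$ sector agrees exactly with the group operation, rather than its opposite. Once this bookkeeping is pinned down, the rest of the verification is mechanical.
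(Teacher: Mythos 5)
Your overall route is the same as the paper's: verify $U_{g_1 g_2} = U_{g_1} U_{g_2}$ on basis vectors and then conjugate to get the statement for the channels $\mathcal{U}_g$. The channel-lifting step and the $z=+1$ slices are fine. The problem is precisely the point you describe as "purely notational": on the $z=-1$ slice of the in-place case your argument does not go through, and the identity you invoke is the obstruction rather than the fix. Applying the oracles in the stated order gives
\begin{align}
\widetilde{U}_{\pi_1}\widetilde{U}_{\pi_2}\ket{x,-1} = \ket{\pi_1^{-1}(\pi_2^{-1}(x)),-1},
\end{align}
whereas $\widetilde{U}_{\pi_1\circ\pi_2}\ket{x,-1} = \ket{(\pi_1\circ\pi_2)^{-1}(x),-1} = \ket{\pi_2^{-1}(\pi_1^{-1}(x)),-1}$ by exactly the identity $(\pi_1\circ\pi_2)^{-1}=\pi_2^{-1}\circ\pi_1^{-1}$ that you cite. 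The two compositions occur in opposite orders, and no ordering convention reconciles them: for noncommuting $\pi_1,\pi_2$ (and the groups of interest, $T_V$ and $T_\varnothing$, are highly nonabelian) the two unitaries genuinely differ on the $z=-1$ block. The paper's proof avoids this by \emph{defining} the product element's action slice-wise, $(\pi_1\pi_2)^z(x) := \pi_1^z(\pi_2^z(x))$ for both $z\in\pmset$ — i.e., the $z=-1$ component of the product is $\pi_1^{-1}\circ\pi_2^{-1}$, taken in the same order as the $z=+1$ component, not the functional inverse of $\pi_1\circ\pi_2$. Under that convention the verification is immediate; under yours it is false.

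The XOR case has the same soft spot. Your phrase "forces the group's inversion to intertwine with XOR" is doing all the work, and if $f^{-1}$ denotes the functional inverse it is simply not true that $(f_1\oplus f_2)^{-1} = f_1^{-1}\oplus f_2^{-1}$ (the pointwise XOR of two bijections need not even be a bijection). Again the paper's reading is that an oracle element carries the pair $(f^{+1},f^{-1})$ as data and the group operation acts componentwise, so $(f_1f_2)^z(x) = f_1^z(x)\oplus f_2^z(x)$ holds by definition on both slices. To repair your proof you should either adopt this slice-wise definition of the group product explicitly (in which case both verifications become the one-line computations you wrote, with no appeal to inverse-of-composition identities), or restrict to the abelian case — neither of which is what your current text does.
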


\begin{theorem}[Some randomized oracles are orthogonal projectors]
\label{thm:oracles_are_projectors}
Consider a group $G$ of functions $f: [N] \to [N]$ with bitwise $\oplus$ as the group operation. Then $\mathcal{O}_G$ is an orthogonal projector, under the Frobenius inner product $(x|y) = \Tr[x^\dagger y]$ for $x,y \in \mathbb{C}^{2N^2 \times 2N^2}$, onto
\begin{align}
    \mathsf{V}_G := \{\rho \in \mathbb{C}^{2N^2 \times 2N^2} \, : \mathcal{U}_f[\rho] = \rho\,\forall f \in G\}\,. 
\end{align}
Similarly, consider a group $\widetilde{G}$ of permutations $\pi: [N] \to [N]$ with composition as the group operation. Then $\mathcal{\widetilde{O}}_{\widetilde{G}}$ is an orthogonal projector, under the Frobenius inner product $(x|y) = \Tr[x^\dagger y]$ for $x,y \in \mathbb{C}^{2N \times 2N}$, onto
\begin{align}
    \mathsf{\widetilde{V}}_{\widetilde{G}} := \{\rho \in \mathbb{C}^{2N \times 2N} \, : \mathcal{\widetilde{U}}_\pi[\rho] = \rho\,\forall \pi \in \widetilde{G}\}\,. 
\end{align}
\end{theorem}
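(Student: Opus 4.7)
The plan is to reduce this to a direct application of \autoref{thm:project_to_sym_subspace}, using \autoref{thm:oracles_form_rep} to supply the representation. First I would observe that the space of $2N^2 \times 2N^2$ complex matrices, equipped with the Frobenius inner product $(x|y) = \Tr[x^\dagger y]$, is itself a Hilbert space; likewise for $2N \times 2N$ matrices in the in-place case. So the hypothesis of \autoref{thm:project_to_sym_subspace} applies as long as we can view each $\mathcal{U}_f$ as a linear operator on this Hilbert space, which is exactly what \autoref{thm:oracles_form_rep} gives.

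Next I would verify that each representation is \emph{unitary} with respect to the Frobenius inner product, so that the resulting averaging operator really is an orthogonal projector (not merely an idempotent). For any $f \in G$ and matrices $x, y$, cyclicity of the trace gives
\begin{align}
(\mathcal{U}_f[x] \,|\, \mathcal{U}_f[y]) = \Tr\!\big[(U_f x U_f^\dagger)^\dagger (U_f y U_f^\dagger)\big] = \Tr[U_f x^\dagger y U_f^\dagger] = \Tr[x^\dagger y] = (x|y),
\end{align}
where we used that $U_f$ is unitary. The analogous computation works for $\mathcal{\widetilde{U}}_\pi$. So $R(f) := \mathcal{U}_f$ is a unitary representation of $G$ on the Frobenius Hilbert space, and $\widetilde{R}(\pi) := \mathcal{\widetilde{U}}_\pi$ is one of $\widetilde{G}$.

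Now I would invoke \autoref{thm:project_to_sym_subspace} directly: the operator
\begin{align}
\mathcal{O}_G = \frac{1}{|G|}\sum_{f \in G} \mathcal{U}_f = \Pi_R
\end{align}
is an orthogonal projector (with respect to the Frobenius inner product) onto the fixed subspace
\begin{align}
\{\rho : \mathcal{U}_f[\rho] = \rho \text{ for all } f \in G\},
\end{align}
which is by definition $\mathsf{V}_G$. The in-place case is identical, using $\widetilde{R}$ in place of $R$ and yielding $\mathsf{\widetilde{V}}_{\widetilde{G}}$.

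The main thing to watch is simply the choice of inner product: the symmetric subspace theorem is an abstract Hilbert-space statement, but we must apply it to the Hilbert space of matrices rather than the Hilbert space of states. The substantive content is the unitarity calculation above, which ensures that the Frobenius inner product is the right inner product for orthogonality of the projector. Everything else is bookkeeping: the fact that $\mathcal{U}_f$ is a group homomorphism was already established in \autoref{thm:oracles_form_rep}, and the identification of the fixed subspace with $\mathsf{V}_G$ (resp.\ $\mathsf{\widetilde{V}}_{\widetilde{G}}$) is immediate from the definitions.
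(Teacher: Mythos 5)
Your proposal is correct and follows essentially the same route as the paper's proof: invoke \autoref{thm:oracles_form_rep} to get the representation, check via cyclicity of the trace that each $\mathcal{U}_f$ (resp. $\mathcal{\widetilde{U}}_\pi$) is unitary with respect to the Frobenius inner product, and then apply \autoref{thm:project_to_sym_subspace}. No gaps.
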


In \problemhiddensubset{IN-PLACE RANDOMIZED}, a quantum verifier is either given $\oracletnone$ (NO) or $\oracletv$ for some $V \subseteq [N]$ where $|V| = N^\alpha$ (YES).
Since $T_V \subseteq T_\varnothing$, the symmetric subspace according to $T_\varnothing$ is a subspace of that according to $T_V$, i.e. $\symsubtnone \subseteq \symsubtv$. So we can exactly find the basis of the symmetric subspaces $\symsubtnone$ and $\symsubtv$ (see \Cref{appendix:basiselements} for details). This key property is used throughout~\Cref{sub:classical_witness_notenough}.

\subsection{In-place oracles: when classical witnesses are not enough}
\label{sub:classical_witness_notenough}
We interpret \problemhiddensubset{IN-PLACE RANDOMIZED} as distinguishing the set of all permutations from a subgroup that stabilizes a small subset $V \subseteq [N]$. In \Cref{thm:classicalwitness_notenough}, we prove that classical witnesses designed for the verifier to choose YES cannot help a quantum verifier efficiently decide this problem.
This requires three main lemmas. 
First, we show in \Cref{lemma:good_distinguishers_form_pretty} that input states distinguishing a YES instance or NO instance must have knowledge of the hidden subset $V$ (either as a subset state $\ket{V}$ or a mixed state $\mathbb{I}_V$). 
However, no density matrix can be close to too many subset states $\ket{V}$ (\Cref{lemma:cant_approx_many_subset_states}), and no POVM can choose the right answer for too many mixed states $\mathbb{I}_V$  (\Cref{lemma:cant_all_have_elevated_mean}).
We combine these facts in a hybrid argument; note that we must fix an algorithm by its unitaries \emph{and} its POVM.
We formally state the lemmas (deferring the proofs to~\Cref{appendix:basiselements} and~\Cref{appendix:deferred_proofs}), and then prove \Cref{thm:classicalwitness_notenough}.

We use the following measure of ``progress'' for the hybrid argument:
\begin{definition}[Difference of oracle queries]
\label{defn:difference_dvrho}
For any $\rho$, let $d_{V,\rho}$ be the difference of the two oracle queries
\begin{align}
     d_{V,\rho} := \oracletv[\rho] - \oracletnone[\rho]\,.
\end{align}
\end{definition}
If the nuclear norm of $d_{V,\rho}$ is non-negligible, 
we say that $\rho$ is a good distinguisher of $\oracletv$ and $\oracletnone$. We show that every good distinguisher $\rho$ has a certain form; the proof is deferred to \Cref{appendix:basiselements}.
\begin{lemma}[Good distinguishers have a certain form]
\label{lemma:good_distinguishers_form_pretty}
Consider a density matrix $\rho$ and up to $O(\log(n))$ extra workspace qubits. Suppose $\|d_{V,\rho}\|_1 = \Omega(\invpolyn)$. Then among the quantities
\begin{align}
    \bra{V,z}\rho\ket{V,z}\,,
    \\
    \Tr[ \rho \left( \mathbb{I}_{V,z} - \frac{|V|}{N} \mathbb{I}_{[N],z} \right) ]\,,
\end{align}
for any $z \in \pmset$, at least one has magnitude $\Omega(\invpolyn)$.
\end{lemma}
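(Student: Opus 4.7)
The plan is to use \Cref{thm:oracles_are_projectors}, which identifies $\oracletv$ and $\oracletnone$ (extended to the workspace as $\oracletv \otimes \mathcal{I}_W$ and $\oracletnone \otimes \mathcal{I}_W$) with orthogonal projectors onto $\symsubtv \otimes L(\mathcal{H}_W)$ and $\symsubtnone \otimes L(\mathcal{H}_W)$ under the Frobenius inner product. Because $T_V \subseteq T_\varnothing$ implies $\symsubtnone \subseteq \symsubtv$, the difference $d_{V,\rho}$ is precisely the projection of $\rho$ onto $\mathsf{W} \otimes L(\mathcal{H}_W)$, where $\mathsf{W} := \symsubtv \cap \symsubtnone^\perp$ is a constant-dimensional subspace of $L(\mathcal{H}_O)$. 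Thus any ``distinguishing mass'' in $\rho$ must live along $\mathsf{W}$.

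The next step is to construct an explicit orthonormal Frobenius basis for $\mathsf{W}$; the detailed calculation is deferred to \Cref{appendix:basiselements}. Blockwise in the control register $(z,z')$, $T_\varnothing$-invariance restricts the corresponding block of $\rho$ to the span of $\mathbb{I}_{[N]}$ and $J_{[N]} - \mathbb{I}_{[N]}$, whereas $T_V$-invariance additionally permits splitting along the partition $\{V, [N]\setminus V\}$. Orthogonalizing the larger span against the smaller one yields basis elements proportional to $\ket{V,z}\bra{V,z'}$ (together with off-$V$ analogues) and to the deviation operators $\mathbb{I}_{V,z,z'} - \tfrac{|V|}{N}\mathbb{I}_{[N],z,z'}$. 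These are exactly the two families of directions appearing in the lemma's conclusion, so any non-negligible component of $\rho$ along $\mathsf{W}$ should be detectable through the stated quantities.

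I would then convert the hypothesis $\|d_{V,\rho}\|_1 = \Omega(\invpolyn)$ into a per-coefficient bound via the Schatten inequality $\|X\|_1 \le \sqrt{\rk(X)}\,\|X\|_2$. The $O(\log n)$-qubit workspace restriction forces $\rk(d_{V,\rho}) = \poly(n)$, hence $\|d_{V,\rho}\|_2 = \Omega(\invpolyn)$. Writing $d_{V,\rho} = \sum_{\alpha} B_\alpha \otimes C_\alpha$ with $\{B_\alpha\}$ an orthonormal basis of $\mathsf{W}$, Parseval gives
\begin{align}
    \|d_{V,\rho}\|_2^2 = \sum_{\alpha} \|C_\alpha\|_2^2\,,
\end{align}
so some workspace component $C_{\alpha^*}$ satisfies $\|C_{\alpha^*}\|_2 = \Omega(\invpolyn)$, since $|\mathsf{W}| = O(1)$. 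Pairing this workspace operator with the identity on $\mathcal{H}_W$ (or, more precisely, extracting a diagonal entry within $\mathcal{H}_W$ via a further pigeonhole on the $\poly(n)$-dimensional workspace) produces a scalar coefficient of magnitude $\Omega(\invpolyn)$, which by the identification of $B_{\alpha^*}$ above equals (up to a dimension-dependent constant) either $\bra{V,z}\rho\ket{V,z}$ or $\Tr[\rho(\mathbb{I}_{V,z} - \tfrac{|V|}{N}\mathbb{I}_{[N],z})]$ for some $z \in \pmset$.

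The main obstacle I expect is the last step: reducing coefficients indexed by workspace operators and off-diagonal control-register blocks ($z \ne z'$) to the two scalar quantities in the statement. In general a workspace operator with large Frobenius norm need not have large trace against $\mathbb{I}_W$, and the $(z,-z)$ blocks of $\mathsf{W}$ carry their own basis elements. Closing this gap requires choosing the $\mathsf{W}$-basis so that the two highlighted families already span all the trace-significant directions, and handling off-diagonal contributions either by Hermiticity of $\rho$ (which pairs $(z,z')$ with $(z',z)$ in complex-conjugate fashion) or by a Cauchy--Schwarz reduction to the diagonal blocks. The $O(\log n)$ workspace bound is essential throughout: without it, the Schatten-conversion factor and the basis-element count would both balloon, weakening the pigeonhole step beyond inverse-polynomial precision.
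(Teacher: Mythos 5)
Your overall architecture matches the paper's: view $\oracletv$ and $\oracletnone$ as orthogonal projectors, write $d_{V,\rho}$ as the projection of $\rho$ onto the quotient space $\symsubtv \cap \symsubtnone^\perp$, expand in the explicit orthogonal basis $\mathcal{C}$, and pigeonhole over the polynomially many components. However, the quantitative step where you convert $\|d_{V,\rho}\|_1 = \Omega(\invpolyn)$ into $\|d_{V,\rho}\|_2 = \Omega(\invpolyn)$ via $\|X\|_1 \le \sqrt{\rk(X)}\,\|X\|_2$ is broken. The $O(\log n)$ workspace bound controls the \emph{number of basis elements}, not the \emph{rank} of $d_{V,\rho}$: the basis element $C_{4,z}$ is built from $\mathbb{I}_V - \ket{V}\bra{V}$ and $\mathbb{I}_{[N]/V} - \ket{[N]/V}\bra{[N]/V}$ and has rank $\Theta(N)$, so $\rk(d_{V,\rho})$ can be exponential. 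Worse, this is exactly the case that produces the second quantity in the lemma's conclusion: if the distinguishing mass sits on $C_{4,z}$, then $\|C_{4,z}\|_1 = \Theta(1)$ but $\|C_{4,z}\|_{Fr} = \Theta(1/\sqrt{|V|})$, so $\|d_{V,\rho}\|_1 = \Omega(\invpolyn)$ coexists with $\|d_{V,\rho}\|_2 = O(\invpolyn \cdot N^{-\alpha/2})$, which is exponentially small. Your Parseval step then yields nothing, and the case that forces the trace quantity $\Tr[\rho(\mathbb{I}_{V,z} - \tfrac{|V|}{N}\mathbb{I}_{[N],z})]$ into the statement is precisely the case your argument cannot reach.

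The repair is to pigeonhole on the \emph{nuclear} norm directly: by the triangle inequality some component satisfies $\|c_M M\|_1 = \Omega(\invpolyn)/|\mathcal{C}| = \Omega(\invpolyn)$, and since every $M \in \mathcal{C}$ has $\|M\|_1 = \Theta(1)$ this gives $|c_M| = \Omega(\invpolyn)$ regardless of $M$'s rank or Frobenius norm. Then $c_M = \Tr[M^\dagger\rho]/\|M\|_{Fr}^2$ converts this into the scalar quantities; for $C_{4,z}$ the factor $\|C_{4,z}\|_{Fr}^{-2} = \Theta(|V|)$ is exactly what makes $|V|\Tr[C_{4,z}^\dagger\rho]$ large, and a final pigeonhole between its two terms finishes that case. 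The off-diagonal $(z_1,z_2)$ and $\ket{V}\bra{[N]/V}$-type components, which you correctly flag as unresolved, are handled by the Cauchy--Schwarz property of density matrices ($|\bra{v}\rho\ket{w}|^2 \le \bra{v}\rho\ket{v}\bra{w}\rho\ket{w}$), reducing them to $\bra{V,z}\rho\ket{V,z}$; this is the route you guessed and it does close. A full proof also needs the controlled-oracle blocks $d_{V,\rho}^{(a_1,a_2)}$ with $a_1 \ne a_2$, which require a separate low-rank structural argument rather than the basis $\mathcal{C}$.
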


We now state two lemmas about subsets and subset states. These help us prove that no quantum state can be a good distinguisher of too many YES instances. We defer the proofs to \Cref{appendix:deferred_proofs}.
\begin{lemma}[Can't approximate too many subset states]
\label{lemma:cant_approx_many_subset_states}
Consider a Hermitian $N \times N$ matrix $\rho$ that is positive semidefinite and has trace at most $1$. Consider the set of all subsets $V \subseteq [N]$, where $|V| = N^{\alpha}$ for a fixed $0 < \alpha < \frac{1}{2}$. Then the fraction of subsets $V$ such that $\bra{V}\rho\ket{V} = \Omega(\invpolyn)$ decreases faster than any exponential in $\poly(n)$.
\end{lemma}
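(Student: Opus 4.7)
\begin{proof-idea}
The plan is to use Markov's inequality applied to the first moment $\E_V[\bra{V}\rho\ket{V}]$. Rewriting $\bra{V}\rho\ket{V} = \Tr[\rho\,\ketbra{V}{V}]$, the first step is to compute $\E_V[\ketbra{V}{V}]$ directly over uniform size-$k$ subsets $V$ (with $k = N^\alpha$): by elementary counting this equals $\tfrac{N-k}{N(N-1)}\mathbb{I} + \tfrac{k-1}{N(N-1)}J$, where $J$ is the all-ones matrix, and its operator norm works out to exactly $k/N$.

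Since $\rho$ is PSD with $\Tr[\rho] \leq 1$, this yields
\[
\E_V[\bra{V}\rho\ket{V}] \;\leq\; \|\E_V \ketbra{V}{V}\|_{\mathrm{op}} \cdot \Tr[\rho] \;\leq\; \frac{k}{N} \;=\; N^{\alpha - 1}.
\]
Markov's inequality then bounds the fraction of $V$ with $\bra{V}\rho\ket{V} \geq \invpolyn$ by $\poly(n) \cdot N^{\alpha - 1}$, i.e., $2^{-(1-\alpha)n + O(\log n)}$; for $\alpha < 1/2$ this is already superpolynomially small in $n$.

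To strengthen the decay to beat any exponential in $\poly(n)$, I would apply the same trick to higher moments. Using $\E_V[\bra{V}\rho\ket{V}^m] = \Tr[\rho^{\otimes m}\,\E_V\ketbra{V}{V}^{\otimes m}]$ together with the PSD inequality $\ketbra{V}{V}^{\otimes m} \preceq \ketbra{V}{V}\otimes\mathbb{I}^{\otimes m-1}$, the $m$-th moment remains bounded by $k/N$, and Markov applied to $\bra{V}\rho\ket{V}^m$ then multiplies the effective exponent by $m$. An alternative I would keep in reserve is to spectrally decompose $\rho = \sum_i \lambda_i \ketbra{\phi_i}{\phi_i}$ and apply a Serfling/Hoeffding concentration for sampling $k$ coordinates of $[N]$ without replacement to each $|\braket{V}{\phi_i}|^2 = \tfrac{1}{k}|\sum_{x \in V}\phi_i(x)|^2$.

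The main obstacle in the concentration route is that Hoeffding's bound scales with $\|\phi_i\|_\infty$, which can be as large as $1$. The workaround is to observe that for such concentrated eigenvectors, $|\braket{V}{\phi_i}|^2$ is itself upper-bounded by $\sum_{x \in V}|\phi_i(x)|^2$, a hypergeometric sample from the (bounded) diagonal mass that concentrates sharply around $k/N$. Partitioning each $\phi_i$ into a spread and a concentrated component, handling each separately, and taking a union bound over the at most $N$ eigenvectors (whose $N$ factor is absorbed into the exponent) should deliver the claimed rate.
\end{proof-idea}
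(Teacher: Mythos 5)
There is a genuine gap in your primary route. Your first-moment computation is correct ($\E_V[\ketbra{V}{V}]$ has operator norm $k/N$ with $k=N^\alpha$), but Markov then gives only a fraction $\poly(n)\cdot N^{\alpha-1}=2^{-(1-\alpha)n+O(\log n)}$, a \emph{single} exponential in $n$; the lemma demands decay faster than \emph{any} exponential in $\poly(n)$ (e.g.\ faster than $2^{-n^2}$), so the strengthening step is essential, and it is exactly that step which fails. From the domination $\ketbra{V}{V}^{\otimes m}\preceq\ketbra{V}{V}\otimes\mathbb{I}^{\otimes(m-1)}$ you only get $\E_V[\bra{V}\rho\ket{V}^m]\le k/N$ \emph{uniformly in $m$}, and Markov applied at threshold $t=\invpolyn$ then yields
\begin{align}
\Pr_V\left[\bra{V}\rho\ket{V}\ge t\right]\;\le\;\frac{\E_V[\bra{V}\rho\ket{V}^m]}{t^m}\;\le\;\frac{k}{N}\cdot\poly(n)^m\,,
\end{align}
which gets \emph{worse} as $m$ grows, not better. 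The exponent would multiply by $m$ only if $\E_V[X^m]\lesssim(k/N)^m$, and that is false in general: for $\rho=\ketbra{W}{W}$ with $|W|=k$ one has $\E_V[X^m]\ge k^{-2m}\Pr[V\cap W\neq\varnothing]\approx k^{2-2m}/N$, which exceeds $(k/N)^m$ for small $\alpha$. So the tensor-power trick destroys precisely the gain you need, and a correct higher-moment argument would require computing $\norm{\E_V\ketbra{V}{V}^{\otimes m}}_{\mathrm{op}}$ (or the restriction of $\rho^{\otimes m}$ against it) much more carefully, for $m$ growing polynomially in $n$.

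Your fallback route is closer to viable and is genuinely different from the paper's. The eigenvector union bound is sound (since $\Tr[\rho]\le1$, $\bra{V}\rho\ket{V}\le\max_i|\braket{V}{\phi_i}|^2$), and the heavy/light split is indeed necessary --- for a $\delta$-function eigenvector the Cauchy--Schwarz surrogate $\sum_{x\in V}|\phi_i(x)|^2$ exceeds $\invpolyn$ with probability $k/N$, and only the explicit $1/k$ prefactor in $|\braket{V}{\phi_i}|^2=\frac{1}{k}|\sum_{x\in V}\phi_i(x)|^2$ rescues that case. Carrying out the split at threshold $|\phi_i(x)|^2\ge1/k$ and applying a hypergeometric tail to the heavy coordinates and Serfling to the light ones does give doubly exponential bounds, but none of this is verified in your sketch, and the quantitative claims ("should deliver the claimed rate") are doing real work. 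For contrast, the paper avoids concentration entirely: it packs superpolynomially many pairwise nearly orthogonal subset states, uses $\Tr[\rho^2]\le1$ to show only polynomially many of them can carry $\Omega(\invpolyn)$ weight, and then counts the subsets having non-negligible overlap with that polynomial-size family. That argument is shorter and sidesteps the eigenbasis altogether, at the price of a combinatorial packing step; your concentration route, if completed, would be more self-contained probabilistically but requires the heavy/light case analysis you have only gestured at.
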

\begin{lemma}[Not too many subsets can have elevated mean]
\label{lemma:cant_all_have_elevated_mean}
Consider any $N \times N$ POVM $\{E, \mathbb{I}-E\}$, and the set of all subsets $V \subseteq [N]$, where $|V| = N^\alpha$ for a fixed $0 < \alpha < \frac{1}{2}$. Then the fraction of subsets $V$ where
\begin{align}
    |f(V)| := \left|\frac{1}{|V|}\Tr[\mathbb{I}_V E] - \frac{1}{N}\Tr[E]\right| = \Omega(\invpolyn)\,,
\end{align}
decreases faster than any exponential in $\poly(n)$.
\end{lemma}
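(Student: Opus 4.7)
The plan is to recognize $f(V)$ as the deviation between a sample mean and the true population mean, and then invoke standard concentration for sampling without replacement. Specifically, let $e_x := \bra{x}E\ket{x}$ for each $x \in [N]$; since $0 \preceq E \preceq \mathbb{I}$, we have $e_x \in [0,1]$. Then $\Tr[\mathbb{I}_V E] = \sum_{x \in V} e_x$, so
\begin{align}
    f(V) = \frac{1}{|V|}\sum_{x \in V} e_x - \frac{1}{N}\sum_{x \in [N]} e_x\,,
\end{align}
which is exactly the gap between the empirical average of the bounded quantities $\{e_x\}$ on a uniformly chosen size-$N^\alpha$ subset $V$ and their true average over $[N]$. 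The off-diagonal entries of $E$ play no role in this lemma.

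Next, I would apply Serfling's inequality for sampling without replacement (Hoeffding's 1963 bound also suffices, since for sums of bounded random variables, sampling without replacement is dominated by sampling with replacement). This gives, for any $t > 0$,
\begin{align}
    \Pr_V\bigl[|f(V)| \ge t\bigr] \le 2 \exp\bigl(-2\, |V|\, t^2\bigr) = 2\exp\bigl(-2 N^\alpha t^2\bigr)\,.
\end{align}
Setting $t = \invpolyn$, the fraction of subsets with $|f(V)| = \Omega(\invpolyn)$ is at most $2\exp\bigl(-\Omega(N^\alpha/\poly(n))\bigr)$. Since $N^\alpha = 2^{\alpha n}$ grows faster than any polynomial in $n$, this tail bound is $2^{-\omega(\poly(n))}$, i.e.\ it decays faster than any $2^{-p(n)}$ for $p \in \poly(n)$, as required.

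The only step requiring any care is the invocation of a concentration inequality that is valid for sampling without replacement with a uniform bound on values; both Hoeffding's original argument and Serfling's refinement apply verbatim here because $e_x \in [0,1]$, and neither requires any structural assumption on $E$. In particular, no representation-theoretic or quantum-specific machinery is needed for this lemma; it reduces cleanly to a classical statement about concentration of hypergeometric-style sampling, and the exponential blow-up of $|V| = N^\alpha$ versus any polynomial in $n$ does all the work. The main conceptual obstacle, if any, is simply noticing that the POVM constraint $E \preceq \mathbb{I}$ forces the diagonal to live in $[0,1]$, which is what turns this into a standard concentration problem.
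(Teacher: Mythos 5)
Your proof is correct, but it takes a genuinely different route from the paper. You reduce the lemma to a purely classical statement: the diagonal entries $e_x = \bra{x}E\ket{x}$ lie in $[0,1]$ because $0 \preceq E \preceq \mathbb{I}$, the quantity $f(V)$ is exactly the deviation of the sample mean of the $e_x$ over a uniformly random size-$N^\alpha$ subset from the population mean, and Hoeffding/Serfling for sampling without replacement gives a tail bound of $2\exp(-2N^\alpha t^2)$, which at $t = \invpolyn$ is doubly exponentially small and hence certainly smaller than any $\invexpn$. The paper instead argues by contradiction with a deterministic covering construction: assuming an $\Omega(\invexpn)$ fraction of subsets have $f(V)$ of the same sign and magnitude $\Omega(\invpolyn)$, it greedily extracts a family $\mathcal{Q}$ of such subsets with pairwise intersections of size at most $N^{\alpha/3}$ that nearly covers $[N]$, and then shows $\sum_{V \in \mathcal{Q}} f(V)$ must simultaneously be large (each term is large and positive) and small (the union nearly tiles $[N]$, so the two traces nearly cancel), using only that the diagonal of a POVM element lies in $[0,1]$. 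Your approach is shorter, yields an explicit and much stronger quantitative bound, and is arguably the more natural argument; it is also consistent with the paper's own toolkit, since the paper invokes essentially the same sampling-without-replacement concentration idea in its \Cref{lemma:chernoff}. The paper's combinatorial argument buys nothing extra here beyond avoiding an appeal to a named concentration inequality. Your proof is complete as stated.
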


Intuitively, \Cref{lemma:cant_approx_many_subset_states} and \Cref{lemma:cant_all_have_elevated_mean} hold because subset states can approximate \emph{any} quantum state well. 
Grilo, Kerenidis, and Sikora~\cite{Grilo15} show that for any $n$-qubit quantum state $\ket{\psi}$, there exists a subset state $\ket{S}$ such that $|\braket{S}{\psi}| \geq \frac{1}{8\sqrt{n+3}}$.

We now prove the main statement:
\begin{theorem}
\label{thm:classicalwitness_notenough}
No quantum verifier that entangles oracle queries with at most $O(\log(n))$ additional qubits can efficiently decide \problemhiddensubset{IN PLACE RANDOMIZED} for any $0 < \alpha < \frac{1}{2}$, even with a polynomial-length classical witness designed for the verifier to choose YES.
\end{theorem}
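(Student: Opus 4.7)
The plan is a pigeonhole-plus-hybrid argument, with the two non-distinguishability lemmas \Cref{lemma:cant_approx_many_subset_states} and \Cref{lemma:cant_all_have_elevated_mean} supplying the key bounds. Suppose for contradiction that a $\QCMA$ protocol decides the problem. Since there are at most $2^{\poly(n)}$ classical witnesses of polynomial length but $\binom{N}{N^\alpha}$ distinct YES instances, pigeonhole yields a single witness $w$ and a collection $\mathcal{Y}' \subseteq \mathcal{Y}$ of YES instances all accepted under $w$ with $|\mathcal{Y}'|/|\mathcal{Y}| \ge 2^{-\poly(n)}$. Once the verifier reads $w$, its strategy reduces to a fixed sequence of unitaries $U_1,\ldots,U_T$ with $T=\poly(n)$ in-place oracle queries, followed by a fixed POVM $\{E,\mathbb{I}-E\}$. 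The aim is to produce some $V \in \mathcal{Y}'$ for which the acceptance probabilities $p_V$ (under $\oracletv$) and $p_\varnothing$ (under $\oracletnone$) differ by less than the $\QCMA$ gap.

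I would set up $T+1$ hybrids $H_0,\ldots,H_T$, where $H_t$ applies $\oracletnone$ for the first $t$ queries and $\oracletv$ for the remaining $T-t$ queries, so that $H_0 = \Phi_V$ and $H_T = \Phi_\varnothing$. Hybrids $H_{t-1}$ and $H_t$ agree on the first $t-1$ queries (all $\oracletnone$) and on the remaining $T-t$ queries (all $\oracletv$); they differ only at step $t$. Thus the state $\sigma_t$ just before the $t$-th query in either hybrid is the \emph{same} $V$-independent density matrix on $n + O(\log n)$ qubits, obtained by applying the fixed unitaries and $t-1$ copies of $\oracletnone$ to $\rho_0$. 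Since the remainder of the circuit is a quantum channel and hence contractive in trace norm, the triangle inequality yields
\begin{align}
    |p_V - p_\varnothing| \;\le\; \sum_{t=1}^{T} \|d_{V,\sigma_t}\|_1\,.
\end{align}

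For each fixed $\sigma_t$, the contrapositive of \Cref{lemma:good_distinguishers_form_pretty} says that $\|d_{V,\sigma_t}\|_1 = \Omega(\invpolyn)$ forces either the subset-state overlap $\bra{V,z}\sigma_t\ket{V,z}$ or the elevated-mean quantity $\Tr[\sigma_t(\mathbb{I}_{V,z} - \frac{|V|}{N}\mathbb{I}_{[N],z})]$ to be $\Omega(\invpolyn)$ for some $z \in \pmset$. Applied to the fixed matrix $\sigma_t$, \Cref{lemma:cant_approx_many_subset_states} shows that the fraction of $V$ triggering the first condition decreases faster than any inverse exponential in $\poly(n)$, while \Cref{lemma:cant_all_have_elevated_mean} does the same for the second condition (treating $\sigma_t$, after tracing out the workspace and restricting to a $z$-block, as the POVM element). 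A union bound over the polynomially many choices of $t$ and the two values of $z$ preserves this rapid decay.

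Since the resulting ``bad'' set of $V$ occupies a fraction of $\mathcal{Y}$ that is asymptotically smaller than $2^{-\poly(n)} \le |\mathcal{Y}'|/|\mathcal{Y}|$, there must exist $V \in \mathcal{Y}'$ outside the bad set, for which $\|d_{V,\sigma_t}\|_1 = o(\invpolyn)$ at every $t$ and hence $|p_V - p_\varnothing| = o(1)$, contradicting the $\QCMA$ gap on this $V$. The main obstacle I anticipate is getting the quantitative parameters of the three lemmas to compose through both the $T$-fold hybrid sum and the $2^{-\poly(n)}$ pigeonhole loss; this is precisely where the $O(\log n)$ workspace restriction bites, since keeping $\sigma_t$ on a polynomial-dimensional Hilbert space is what lets the polynomial guarantees in \Cref{lemma:good_distinguishers_form_pretty} and the tail-type bounds of \Cref{lemma:cant_approx_many_subset_states} and \Cref{lemma:cant_all_have_elevated_mean} survive the union bound.
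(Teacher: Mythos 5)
Your overall scaffolding (pigeonhole over witnesses, a $T$-step hybrid with a $V$-independent state $\sigma_t$ before the differing query, and a union bound over $t$) matches the paper's, and your handling of the subset-state condition via \Cref{lemma:cant_approx_many_subset_states} is fine. But there is a genuine gap in how you dispose of the second (elevated-mean) condition, and it is exactly the failure mode the paper flags at the start of \Cref{sec:classical_witness_lower_bound}. The quantity in \Cref{lemma:good_distinguishers_form_pretty} is $\Tr[\sigma_t(\mathbb{I}_{V,z}-\frac{|V|}{N}\mathbb{I}_{[N],z})] = |V|\cdot\bigl(\frac{1}{|V|}\Tr[\mathbb{I}_{V,z}\sigma_t]-\frac{1}{N}\Tr[\mathbb{I}_{[N],z}\sigma_t]\bigr)$, i.e.\ $|V|$ times the quantity $f(V)$ controlled by \Cref{lemma:cant_all_have_elevated_mean} with $E=\sigma_t$. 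So this condition is triggered already when $f(V)=\Omega(\frac{1}{N^\alpha\poly(n)})$, far below the lemma's $\Omega(\invpolyn)$ threshold, and the lemma gives you nothing. Concretely, for $\sigma_t=\ketbra{1}\otimes\ket{z}\bra{z}$ one has $\|d_{V,\sigma_t}\|_1=\Omega(1)$ for \emph{every} $V$ containing the element $1$ --- a $N^{\alpha-1}=2^{-(1-\alpha)n}$ fraction of all subsets. That is only singly exponentially small, whereas the pigeonhole step only guarantees $|\mathcal{Y}'|/|\mathcal{Y}|\ge 2^{-\poly(n)}$; the ``bad'' set can therefore contain all of $\mathcal{Y}'$, and your claimed contradiction never materializes. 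In short, bounding $\|d_{V,\sigma_t}\|_1$ for most $V$ is not a sound measure of progress here.

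The missing idea is to exploit that the \emph{final POVM} $\{E,\mathbb{I}-E\}$ is fixed, not merely the unitaries. When the elevated-mean condition is the one that fires, $d_{V,\sigma_t}$ is (up to small corrections) supported on the normalized direction $C_{4,z}\propto\frac{1}{|V|}\mathbb{I}_{V,z}-\frac{1}{N}\mathbb{I}_{[N],z}$, which has nuclear norm $\Theta(1)$ but Frobenius norm $\Theta(1/\sqrt{|V|})$. The paper propagates this difference operator forward through the remaining $k-\ell^*$ queries by induction (using \Cref{fact:nuclear_doesnt_increase} and the basis $\mathcal{C}$ to show it stays concentrated on $C_{4,z}$), and only at the end pairs it with $E$, obtaining $\Omega(\invpolyn)=|\Tr[E(\frac{1}{|V|}\mathbb{I}_{V,z}-\frac{1}{N}\mathbb{I}_{[N],z})]|$ --- now with the correct $\frac{1}{|V|}$ normalization --- so that \Cref{lemma:cant_all_have_elevated_mean}, applied to the single fixed operator $E$ (and to the pulled-back POVMs $E^{V,(i)}$), finally yields the super-exponentially small bad set needed to beat the pigeonhole loss. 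Your proof needs this forward-propagation step; as written, the argument stops one lemma application too early and at the wrong normalization.
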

\begin{proof}
Let the set of YES instances be $\mathcal{Y}$; note that each YES instance corresponds to a set $V \subseteq [N]$ where $|V| = N^\alpha$, for some fixed $0 < \alpha < \frac{1}{2}$.

Suppose for contradiction that there is a protocol for this problem at some $\alpha < \frac{1}{2}$. Then the verifier can distinguish $\oracletnone$ from any $\oracletv$ in a polynomial number of queries using a classical witness of size $O(\poly(n))$.
By the pigeonhole principle, there must exist a set of YES instances $\mathcal{Y}'$ such that $|\mathcal{Y}'|/|\mathcal{Y}| = \Omega(\invexpn)$, where the verifier can use the \emph{same algorithm} to distinguish  $\oracletnone$ from \emph{every} YES instance in $\mathcal{Y}'$. 

We then construct a \emph{hybrid argument} in the style of Bennett, Bernstein, Brassard, and Vazirani~\cite{bbbv} and Ambainis~\cite{Ambainis2000-kd}, which interpolates from queries of one oracle to queries of another oracle. For simplicity we write the proof without extra workspace qubits; however, we can have up to $O(\log(n))$ extra workspace qubits to satisfy \Cref{lemma:good_distinguishers_form_pretty}. Any polynomial query algorithm can be written as a set of unitaries $A = \{U^{(1)},\dots,U^{(k)}\}$ for some $k = O(\poly(n))$ (alternating between unitary evolutions and oracle queries), and a POVM $\{E, \mathbb{I}-E\}$. Consider the following ``hybrid'' algorithms:

\begin{definition}
\label{defn:hybridstate}
Given any set of $k$ unitaries $A = \{U^{(1)},\dots,U^{(k)}\}$, define the hybrid algorithm
\begin{align}
    A_{V, \ell}\left[\rho_0\right] = 
     \oracletv^{(k)}
     \circ
     \mathcal{U}^{(k)}
     \circ
    \dots
    \circ
     \oracletv^{(\ell+1)}
     \circ
      \mathcal{U}^{(\ell)} 
      \circ
       \oracletnone^{(\ell)}
      \circ
       \mathcal{U}^{(\ell)}
    \circ
    \dots 
    \oracletnone^{(1)}
    \circ
    \mathcal{U}^{(1)}\left[\rho_0 \right],
\end{align}
which evolves $\rho_0$ under the oracle $\mathcal{O}_{T_\varnothing}$ for $\ell$ steps and  under $\mathcal{O}_{T_V}$ for the other $k-\ell$ steps.
\end{definition}

 Then the following is true for each $\oracletv \in \mathcal{Y}'$:
\begin{align}
     \Omega(\invpolyn) &=  \left| \Tr[E A_{V,k}[\rho_0]] - \Tr[E A_{V,0}[\rho_0]] 
    \right| 
    \le 
    \sum_{i=0}^{k-1} \left| \Tr[E A_{V,i+1}[\rho_0]] - \Tr[E A_{V,i}[\rho_0]] \right|\,,
    \end{align}
which implies
\begin{align}
    \Omega(\invpolyn) 
    &=  \sum_{i=0}^{k-1} \left| \Tr[E 
    \left(\oracletv^{(k)} \circ
    \dots
        \mathcal{U}^{(i)} \circ
    \left( \oracletv^{(i)}
    - 
     \oracletnone^{(i)}
    \right) [
    \rho^{(i)}]\right)
    ] \right| 
    = \sum_{i=0}^{k-1} \left| \Tr[ E^{V,(i)} d_{V, \rho^{(i)}}] \right|\,,
\end{align}
for the operator $E^{V,(i)}$ constructed by
\begin{align}
    E^{V,(i)} = 
    \mathcal{U}^{\dagger(i)}\circ
    \oracletv^{(i)} 
    \circ
    \dots
    \circ
    \mathcal{U}^{\dagger(k)} \circ
    \oracletv^{(k)} \left[E\right] \,.
\end{align} 
By \Cref{fact:oracle_preserves_properties}, the operators $E^{V,(i)}$ and $\mathbb{I} - E^{V,(i)}$ are also Hermitian and positive semidefinite, so $\{E^{V,(i)}, \mathbb{I}-E^{V,(i)}\}$ is a POVM. 

Using the pigeonhole principle, there must be a step $\ell$ in the summation with magnitude $\Omega(\invpolyn)$. Each $\oracletv \in \mathcal{Y}'$ has such a step. Again by the pigeonhole principle, there is a $\ell^*$ and set $\mathcal{Y}^* \subseteq \mathcal{Y}'$ where
\begin{align}
    \left|
    \Tr[E^{V,(\ell^*)} d_{V,\rho^{(\ell^*)}}]\right|
    = \Omega(\invpolyn)\,,
\end{align} 
and $|\mathcal{Y}^*|/|\mathcal{Y}'| \ge \frac{1}{k} = \Omega(\invpolyn)$. Notice that this implies $|\mathcal{Y}^*|/|\mathcal{Y}| = \Omega(\invexpn)$.

Since the trace of $M$ with a POVM operator is at most $\|M\|_1$ (\Cref{fact:povm_trace_atmost_nuclear_norm}), we have for all $\oracletv \in \mathcal{Y}^*$,
\begin{align}
    \Omega(\invpolyn) = \left| \Tr[E^{V,(\ell^*)} d_{V,\rho^{(\ell^*)}}] \right| \le \left\|d_{V,\rho^{(\ell^*)}}\right\|_1\,.
\end{align}
When queries are entangled with at most $O(\log(n))$ additional qubits, the premise of \Cref{lemma:good_distinguishers_form_pretty} holds; then one of the quantities in the theorem statement must be large.
However, \Cref{lemma:cant_approx_many_subset_states} says that a given $\rho$ can only satisfy either of the first two quantities for a smaller-than-exponential fraction of $\mathcal{Y}$. So for most choices of $\oracletv \in \mathcal{Y}^*$,
\begin{align}
        \Tr[ \rho^{(\ell^*)} \left( \mathbb{I}_{V,z} - \frac{|V|}{N} \mathbb{I}_{[N],z} \right) ] = \Omega(\invpolyn)\,.
\end{align}
for at least one of $z \in \pmset$.

Inspecting the proof of \Cref{lemma:good_distinguishers_form_pretty}, this implies $d_{V,\rho^{(\ell^*)}}$ can only have $\Omega(\invpolyn)$ weight on $C_{4,z}$ for some $z \in \pmset$ across all matrices in $\mathcal{C}$. In fact, for most choices of $\oracletv \in \mathcal{Y}^*$, we show that this is also true for 
\begin{align}
    d_{V,\ell^*,j} := \oracletv^{(\ell^* + j)} \circ
     \mathcal{U}^{(\ell^* + j)} \circ
    \dots
    \circ
    \oracletv^{(\ell^* + 1)} \circ
        \mathcal{U}^{(\ell^* + 1)} \circ
        d_{V,\rho^{(\ell^*)}} 
        \,,
\end{align}
for all $0 \le j \le k-\ell^*$.
We show this by induction. 
Note that by \Cref{fact:povm_trace_atmost_nuclear_norm} and the fact that $d_{V,\ell^*,k-\ell^*}$ is the difference of two objects with nuclear norm $1$, $\|d_{V,\ell^*,k-\ell^*}\|_1 = \Omega(\invpolyn) = O(1)$. 

Consider $d_{V,\ell^*,i}$ for some $1 \le i \le k-\ell^*$, which can be represented with the basis $\mathcal{C}$.
By \Cref{fact:nuclear_doesnt_increase}, it has Frobenius norm at most $\|d_{V, \rho^{(\ell^*)}}\|_{Fr} = O(\frac{1}{\sqrt{|V|}})$.
So it must have $o(\invpolyn)$ weight on pure states. 
Inspecting the basis $\mathcal{C}$, this means $d_{V,\rho^{(\ell^*)}}$ can only have $\Omega(\invpolyn)$ weight on $C_{4,z}$ or $\frac{1}{N}\mathbb{I}_{[N],z}$ for $z \in \pmset$.
By \Cref{fact:nuclear_doesnt_increase}, $d_{V,\rho^{(\ell^*)}}$ has nuclear norm at least $\|d_{V,\ell^*,k-\ell^*}\|_1 = \Omega(\invpolyn)$, so it must have $\Omega(\invpolyn)$ weight on at least one such matrix. 
Suppose for contradiction that the matrix is $\frac{1}{N}\mathbb{I}_{[N],z}$ for $z \in \pmset$. 
Then 
\begin{align}
   \Omega(\invpolyn) = \Tr[\mathbb{I}_{[N],z} \oracletv \left[ \mathcal{U}^{(\ell^*+i)} \left[ d_{V,\ell^*,i-1} \right] \right]] 
   = \Tr[ 
   \left( \mathcal{U}^{(\ell^*+i)\dagger}  \circ
  \oracletv^\dagger[
   \mathbb{I}_{[N],z}
   ]\right)
   d_{V,\ell^*,i-1}]\,.
\end{align}
By the inductive hypothesis, $d_{V,\ell^*,i-1}$ only has $\Omega(\invpolyn)$ weight on some $C_{4,z}$ for $z \in \pmset$. Then for some $z' \in \pmset$,
\begin{align}
 \Omega(\invpolyn) = \Tr[ 
  \left( \mathcal{U}^{(\ell^*+i)\dagger}  \circ
  \oracletv^\dagger[
   \mathbb{I}_{[N],z}
   ]\right)
 \left( \frac{1}{|V|}\mathbb{I}_{V,z'} - \frac{1}{N} \mathbb{I}_{[N],z'} \right) ]\,.
\end{align}
Notice that for any unitary $U$, the object $\{  \mathcal{U}^{(\ell^*+i)\dagger}  \circ
  \oracletv^\dagger[
   \mathbb{I}_{[N],z=+1}
   ],
    \mathcal{U}^{(\ell^*+i)\dagger}  \circ
  \oracletv^\dagger[
   \mathbb{I}_{[N],z=-1}
   ]\}$ forms a POVM. By \Cref{lemma:cant_all_have_elevated_mean}, this can only be satisfied at either $z \in \pmset$ for a smaller-than-exponential fraction of choices of $V$. So for most choices of $\oracletv \in \mathcal{Y}^*$ (i.e. a $\Omega(\invexpn)$ fraction of choices of $V$), $d_{V,\ell^*,i}$ has $\Omega(\invpolyn)$ weight on $C_{4,z}$ for at least one of $z \in \pmset$, and for no other matrices in $\mathcal{C}$.

Since $\Omega(\invpolyn) = \left| \Tr[E d_{V,\ell^*,k-\ell^*}] \right|$, our supposition then implies that for one of $z \in \pmset$,
\begin{align}
 \Omega(\invpolyn) = \left| \Tr[E C_{4,z}] \right| =  \left|\Tr[E \left( \frac{1}{|V|}\mathbb{I}_{V,z} - \frac{1}{N} \mathbb{I}_{[N],z} \right)] \pm O(\invexpn) \right| \,.
\end{align}
But by \Cref{lemma:cant_all_have_elevated_mean}, this can only be satisfied at either $z \in \pmset$ for a smaller-than-exponential fraction of $\mathcal{Y}$. This is a contradiction. So there can be no efficient protocol for this problem.
\end{proof}

\subsection{Standard oracles: when classical witnesses are enough}
\label{sub:classical_witness_is_enough}
As shown in \Cref{thm:oracles_form_rep}, randomized standard oracles can also form a representation. But the preserved group structure is much different than for randomized in-place oracles. Consider the set $T_\varnothing$ of permutations on $[N]$. For any $f_1, f_2 \in T_\varnothing$, the element $f_1f_2$ in this group structure acts for all $x \in [N]$ and $z \in \pmset$ as 
\begin{align}
    (f_1 f_2)^z(x) = f_1^z(x) \oplus f_2^z(x)\,.
\end{align}
Note that this operation is abelian; that is, $ (f_1 f_2) =  (f_2 f_1)$. Any finite abelian group can always be represented as the direct sum of cyclic groups. In fact, under this group operation, $T_\varnothing$ can be decomposed by the input $x \in [N]$ and function inverter $z \in \pmset$:
\begin{align}
    T_\varnothing = \bigoplus_{x \in [2^n], z \in \pmset} \mathbb{Z}_{2^n}\,.
\end{align}
With this group operation, the only possible subgroups of $T_\varnothing$ have the form 
\begin{align}
    \bigoplus_{x \in [2^n], z \in \pmset} \mathbb{Z}_{2^{k_{x,z}}}\,,
\end{align}
for $0 \le k_{x,z} \le n$. As a result, there is a $\QCMA$ protocol to distinguish any strict subgroup of $T_\varnothing$ from $T_\varnothing$.
\begin{theorem}
\label{thm:qcma_alg_std_oracle}
There is a one-query $\QCMA$ protocol for \problemhiddensubgroup{STANDARD RANDOMIZED} when the group operation $\times$ is bitwise XOR, for any valid $\{H_i\}$.
\end{theorem}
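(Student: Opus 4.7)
The plan is to reduce to a single-coordinate witness using the stated decomposition, then run a Bernstein--Vazirani-style distinguishing protocol in the Fourier basis of the $c$-register. From the paragraph preceding the theorem, any proper subgroup $H_i \subsetneq T_\varnothing$ has the form $\bigoplus_{x,z} H^{(i)}_{x,z}$ with $H^{(i)}_{x,z}\subseteq\mathbb{F}_2^n$, and since $H_i \subsetneq T_\varnothing$ there must be some coordinate $(x^*,z^*)$ at which the projection is a strict subspace. Hence there exists a nonzero dual vector $w\in (H^{(i)}_{x^*,z^*})^\perp$. Merlin's classical witness is the triple $(x^*,z^*,w)$, an $O(n)$-bit string.

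The key observation is that in the Fourier basis $\ket{+_w}:=\tfrac{1}{\sqrt{N}}\sum_v (-1)^{\langle w,v\rangle}\ket{v}$, the standard oracle is diagonal: $U_f\ket{+_w,x,z} = (-1)^{\langle w, f^z(x)\rangle}\ket{+_w,x,z}$. Using the witness, Arthur prepares
\begin{align}
\ket{\phi} \;=\; \tfrac{1}{\sqrt{2}}\bigl(\ket{+_0} + \ket{+_w}\bigr)\otimes \ket{x^*,z^*,+1},
\end{align}
invokes the randomized oracle $\mathcal{O}_F$ exactly once, and measures the POVM $\{\ket{\phi}\bra{\phi},\,\mathbb{I}-\ket{\phi}\bra{\phi}\}$, accepting on the first outcome. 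Because both $\ket{+_0,x^*,z^*,+1}$ and $\ket{+_w,x^*,z^*,+1}$ are $U_f$-eigenvectors, $\mathcal{O}_F[\ket{\phi}\bra{\phi}]$ is supported on these two basis vectors with diagonal entries $\tfrac12$ each and off-diagonal entry modulated by $\mathbb{E}_{f\in F}[(-1)^{\langle w, f^{z^*}(x^*)\rangle}]$. In the YES case $F=H_i$, every $f$ satisfies $\langle w,f^{z^*}(x^*)\rangle = 0$ since $w \in (H^{(i)}_{x^*,z^*})^\perp$, so the off-diagonal equals $1$, the state is unchanged, and Arthur accepts with probability $1$. In the NO case $F=T_\varnothing$, $f^{z^*}(x^*)$ is uniform over $\{0,1\}^n$ and the character averages to $0$, leaving the maximally mixed state on $\{\ket{+_0},\ket{+_w}\}$, and Arthur accepts with probability $\tfrac12$. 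The constant completeness--soundness gap of $\tfrac12$ is sufficient to qualify as a $\QCMA$ protocol and can be amplified by parallel repetition.

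The main technical obstacle is largely bookkeeping: one must carefully extract the single-coordinate structure of $H_i$ from the additive decomposition of $T_\varnothing$ under XOR, and then verify the Fourier-basis computation that averaging over $F$ projects precisely onto the characters lying in $F^\perp$. The actual distinguishing mechanism is a textbook Bernstein--Vazirani phase-kickback applied to the coordinate singled out by Merlin; once the witness reduces the problem to testing a single linear functional $\langle w,\cdot\rangle$, no deeper analytical machinery is required.
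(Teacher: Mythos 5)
Your protocol is correct and follows the same core strategy as the paper's: the witness names a coordinate $(x^*,z^*)$ at which the subgroup's projection is proper, a single query then produces a value of $f^{z^*}(x^*)$ that is confined to that proper subgroup in a YES instance but uniform over $[N]$ in the NO instance, and the verifier tests a linear functional that is identically zero on the subgroup. The one substantive difference is in how that functional is tested. The paper has Merlin send only $(x,z)$, computes $f^z(x)$ into the answer register in the computational basis, and accepts if the result is ``even,'' i.e.\ it checks one \emph{fixed} bit; this implicitly relies on the canonical form $\mathbb{Z}_{2^{k_{x,z}}}$ of the coordinate subgroups (the encoding caveat the paper relegates to a footnote). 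You instead include the dual vector $w\in(H^{(i)}_{x^*,z^*})^\perp$ in the witness and test $\langle w,\cdot\rangle$ via phase kickback in the Fourier basis of the $c$-register, which handles an arbitrary proper subspace at that coordinate without any encoding assumption, at the cost of a slightly longer ($O(n)$-bit) witness and a projective measurement onto the prepared state (still efficiently implementable, since $\ket{+_0}+\ket{+_w}$ is $H^{\otimes n}$ applied to $\ket{0}+\ket{w}$). Both achieve completeness $1$ and soundness $1/2$ with one query; yours could equivalently be phrased in the computational basis by accepting iff $\langle w, f^{z^*}(x^*)\rangle=0$, which makes clear it is the paper's protocol with a more robust final test. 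Note that both arguments inherit the paper's (stated, but restrictive) assumption that every ``valid'' $H_i$ decomposes coordinatewise as $\bigoplus_{x,z}H^{(i)}_{x,z}$; a general subgroup of a direct sum need not split this way, and neither proof covers that case.
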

\begin{proof}
Suppose the classical witness is a bitstring of length at least $n+1$. The verifier can then:
\begin{enumerate}
    \item Use the first $n$ bits to construct $x$ and the next bit to construct $z$. \item Prepare the state $\ket{0}^{\otimes n}\ket{x,z}$.
    \item Apply $\mathcal{O}_H$, creating the state $\ket{f^z(x)}\ket{x,z}$ for some $f \in H$.
    \item Measure the first $n$ qubits, and accept if the result is even.\footnote{Depending on the encoding, one can simply measure the $n^{\text{th}}$ qubit, and accept if the result is $0$.} 
\end{enumerate}
Consider a YES instance associated with a subgroup $H \subsetneq T_\varnothing$. Then $H$ will have some $x \in [N], z \in \pmset$ such that $k_{x,z} < n$. A witness can store $x$ and $z$; since $k_{x,z} < n$, $f^z(x)$ will be even with probability $1$.

In the NO instance, $H = T_\varnothing$. Then $f^z(x)$ is even with probability $0.5$ for every $x \in [N], z \in \pmset$. 
\end{proof}

Note that \Cref{thm:qcma_alg_std_oracle} holds even if the randomized standard oracle $\mathcal{O}_F$ does not have access to the function inverse.

\section{No witness is enough for phase oracles}
\label{sec:phase_oracle_lower_bound}
We show that deciding \problemhiddensubset{RANDOMIZED}
in a phase oracle is much harder than other oracle models we consider. 
A random phase has \emph{zero} expectation. 
We use this fact to show that queries to most YES instances and the NO instance reduce the magnitude of each off-diagonal of the density matrix by an exponential factor, regardless of the input state.
We bound the Frobenius norm of the difference of query outputs to show that these instances are statistically indistinguishable when the state space is not too large.
As a result, no untrustworthy witness can help decide this problem.

\begin{theorem}
\label{thm:phase_oracle_lower_bound}
No quantum verifier that entangles oracle queries with at most $o(n)$ additional qubits can efficiently decide \problemhiddensubset{PHASE RANDOMIZED} for any $0 < \alpha < \frac{1}{2}$, even with \emph{any} witness designed for the verifier to choose YES.
Moreover, these verifiers require an exponential number of queries to statistically distinguish a YES instance from the NO instance, for \emph{each} of asymptotically all YES instances.
\end{theorem}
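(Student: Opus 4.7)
The plan is to show that a single query to $\overline{\mathcal{O}}_{T_V}$ is within trace distance $O(\polylog(N)\cdot 2^{-\alpha n})$ of a query to $\overline{\mathcal{O}}_{T_\varnothing}$ for a $1-o(1)$ fraction of subsets $V$, uniformly over the input state. A hybrid argument then forces any distinguishing algorithm to use $2^{\Omega(n)}$ queries, and the state-independence of the bound ensures that no witness helps.

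First, I would observe that applying a randomized phase oracle is pointwise (Hadamard) multiplication on density matrix entries: $\overline{\mathcal{O}}_F[\sigma] = M_F \odot \sigma$, where $M_F(xz,yz') := \mathbb{E}_{f \in F}[\omega_N^{f^z(x)-f^{z'}(y)}]$ depends only on the $(x,z)$ coordinates and acts trivially across the workspace. Entries with $(x,z)=(y,z')$ are preserved; the rest are rescaled. By elementary case analysis, using that $f\in T_V$ acts as independent uniform random permutations on $V$ and $W:=[N]\setminus V$ while $f\in T_\varnothing$ is uniform over $S_N$, I would express each $M_F(xz,yz')$ in closed form in terms of the character sum $S_V := \sum_{a\in V}\omega_N^a$ together with simple phase factors $\omega_N^{y-x}, \omega_N^{-x}, \omega_N^y$. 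By Hoeffding's inequality applied to the real and imaginary parts of $S_V$ (each a sum of $|V|$ bounded terms; sampling-without-replacement is handled by standard negative-association arguments), $|S_V|^2 = O(|V|\log N)$ with probability $1-o(1)$ over random $V$ of size $N^\alpha$. On this event every entry of $M_{T_V}-M_{T_\varnothing}$ has magnitude $O(\polylog(N)/|V|) = O(\polylog(N)\cdot 2^{-\alpha n})$.

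The core estimate is the uniform-in-$\sigma$ trace-norm bound $\|\Delta[\sigma]\|_1 = O(\polylog(N)\cdot 2^{-\alpha n})$ for $\Delta := \overline{\mathcal{O}}_{T_V}-\overline{\mathcal{O}}_{T_\varnothing}$. I would decompose $\sigma$ according to the $V/W$ partition of the $(x,z)$ coordinate into four blocks, and within each split further by $(z,z')$. In sub-blocks with $z=z'$ inside $V\times V$, $M_{T_V}-M_{T_\varnothing}$ is a single constant $c$ in $(x,y)$ on entries with $x\ne y$ and zero on $x=y$, so the Hadamard product equals $c(\sigma|_{\mathrm{sub}} - D)$ with $D$ retaining only the $x=y$ entries; positivity gives $\|D\|_1 = \|\sigma|_{\mathrm{sub}}\|_1 = P_{\mathrm{sub}} \le 1$, bounding the contribution by $2|c| = O(\polylog(N)/|V|)$. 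In sub-blocks with $z\ne z'$, $M_{T_V}-M_{T_\varnothing}$ is a short sum of rank-one outer products in $(x,y)$ (from $\omega_N^{y-x}, \omega_N^{-x}, \omega_N^y$ and a constant, with coefficients of magnitude $O(1/|V|)$). Each such outer product acts as $\mathrm{diag}(u)\sigma\mathrm{diag}(v)$ with unit-modulus diagonal $u,v$; the flanking diagonals are unitaries, so this preserves trace norm, and the same $O(1/|V|)$ bound follows. For the off-$(V/W)$-diagonal blocks, the PSD block inequality $\|\sigma|_{A,B}\|_1 \le \sqrt{\Tr[\sigma|_{AA}]\Tr[\sigma|_{BB}]} \le 1$ controls the relevant norms against the much smaller constants $O(1/N)$ and $O(1/N^{2-\alpha})$ arising in those blocks. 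Summing the four blocks gives $\|\Delta[\sigma]\|_1 = O(\polylog(N)/|V|)$ uniformly in $\sigma$.

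Finally, the hybrid argument of Bennett-Bernstein-Brassard-Vazirani and Ambainis gives, for any $T$-query algorithm with arbitrary initial state $\rho_0$ (i.e.\ any witness), $\|A_{V,T}[\rho_0] - A_{V,0}[\rho_0]\|_1 \le \sum_{\ell=0}^{T-1}\|\Delta[\sigma_\ell]\|_1 \le T\cdot O(\polylog(N)\cdot 2^{-\alpha n})$ by contractivity of CPTP maps under trace norm. Distinguishing YES from NO with constant advantage requires this to be $\Omega(1)$, forcing $T = 2^{\Omega(n)}$; uniformity of the per-step bound in both $\sigma$ and $\rho_0$ means no witness (of any type or size) reduces the query count, and the same $T$ works for every $V$ in the $1-o(1)$-measure good set. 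The main obstacle is the careful treatment of the $z\ne z'$ sub-blocks, where the multiplier $M_{T_V}$ is non-constant and must be decomposed into low-rank pieces rather than handled as a single constant. The $o(n)$ workspace restriction is only needed if one routes the bound through a naive Schatten interpolation such as $\|A\|_1\le\sqrt{d}\|A\|_2$ (which for $\alpha < 1/2$ fails at $d = 2^{n+o(n)}$); the block decomposition sketched here is essentially workspace-independent and yields the stated lower bound for any $0 < \alpha < 1/2$.
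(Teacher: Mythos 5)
Your proposal is correct, and while it shares the paper's skeleton---the BBBV/Ambainis hybrid argument, the observation that the randomized phase oracle acts by entrywise multiplication of the density matrix with a multiplier built from character sums over $V$ and $[N]\setminus V$, and a concentration bound on $S_V=\sum_{a\in V}\omega_N^a$ over random $V$---it takes a genuinely different route at the crucial step of converting entrywise bounds on the multiplier difference into a trace-norm bound on $\Delta[\sigma]$. The paper splits $\Delta[\sigma]$ into the $(V,z)\times(V,z)$ part $A_V$ and the remainder $B_V$ and applies $\|M\|_1\le\sqrt{\mathrm{rank}(M)}\,\|M\|_{Fr}$ together with $\|M\|_{Fr}\le\max_{ij}|M_{ij}|\cdot\|\rho\|_{Fr}$; this is exactly where the $o(n)$-workspace restriction (to control the ranks) and the $N^{\alpha-1/2}$ term enter, yielding $O(N^{-\alpha/4}+N^{\alpha-1/2})$ per query from the tail bound $|S_V|^2=O(N^{5\alpha/4})$ of \Cref{lemma:chernoff}. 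Your block decomposition instead exploits the exact structure of the multiplier: on same-$z$ diagonal blocks it is a constant off the $x=y$ diagonal, so the Hadamard product is $c(\sigma_{\mathrm{sub}}-D)$ with $D$ a pinching of a PSD principal submatrix, giving $\|D\|_1\le 1$; on the remaining blocks it is an $O(1)$-term sum of outer products of unit-modulus vectors (e.g.\ for $x,y\in V$, $z\ne z'$ the multiplier is $\frac{(S_V-\omega_N^{y})(\overline{S_V}-\omega_N^{-x})}{|V|(|V|-1)}+\frac{\omega_N^{y-x}}{|V|}$, confirming your claimed rank-one structure and $O(\polylog(N)/|V|)$ coefficients), each piece acting as conjugation by diagonal unitaries, with off-diagonal blocks controlled by $\|X\|_1\le\sqrt{\Tr P\,\Tr Q}\le 1$. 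This is workspace-independent, replaces the paper's somewhat informal CLT-based moment-generating-function step with a rigorous Hoeffding/Serfling bound, and gives the sharper per-query bound $O(\polylog(N)\,N^{-\alpha})$, hence a $2^{\Omega(\alpha n)}/\polylog(N)$ query lower bound without the $o(n)$-ancilla hypothesis. The only detail to spell out in a full write-up is the $x=y$, $z\ne z'$ diagonal: those entries are rescaled rather than preserved, but they are covered by the same closed-form block formulas, so nothing breaks.
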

\begin{proof}
We first explain why the query lower bound implies that a witness cannot help. In the NO instance, the witness is designed to fool the verifier; in order to overcome this, the verifier must use the witness in tandem with the oracle. But this cannot be done efficiently; regardless of input state, distinguishing the NO instance from nearly any YES instance requires an exponential number of queries.

We now prove the query lower bound. Let $k$ be the number of queries required to distinguish $\mathcal{\overline{O}}_{T_V}$ from $\mathcal{\overline{O}}_{T_\varnothing}$.
Consider any algorithm that distinguishes the two instances, defined by a starting state $\rho_0$, $k$ unitaries, $k$ oracle queries, and a POVM $\{E, \mathbb{I}-E\}$. In the framework of hybrid algorithms (\Cref{defn:hybridstate}),
\begin{align}
    \Omega(\invpolyn) &= \left| \Tr[E A_{V,k}[\rho_0]] - \Tr[E A_{V,0}[\rho_0]] \right|
    \\
    &\le \left\| A_{V,k}[\rho_0] - A_{V,0}[\rho_0] \right\|_1
    \\
    &\le k \max_{i \in \{0,\dots,k-1\}}\left \|A_{V,i+1}[\rho_0] -  A_{V,i}[\rho_0] \right\|_1 
    \\
    &\le k \max_{i \in \{0,\dots,k-1\}}
    \left \|\mathcal{\overline{O}}_{T_V}[\rho^{(i)}] - \mathcal{\overline{O}}_{T_\varnothing}[\rho^{(i)}] \right\|_1 \,,
\end{align}
where the last line follows because randomized oracles do not increase the nuclear norm (\Cref{fact:nuclear_doesnt_increase}).

We now bound $\left\|\mathcal{\overline{O}}_{T_V}[\rho] - \mathcal{\overline{O}}_{T_\varnothing}[\rho] \right\|_1$ for \emph{any} $\rho$. Recall that a phase oracle $\overline{O}_F$ acts as
\begin{align}
    \overline{O}_F\left[\ket{x_1,z_1}\bra{x_2,z_2}\right] = \frac{1}{|F|}\sum_{f \in F} \omega_N^{f^{z_1}(x_1) - f^{z_2}(x_2)}\ket{x_1,z_1}\bra{x_2,z_2}\,,
\end{align}
for any $x_1,x_2 \in [N]$ and $z_1,z_2 \in \pmset$. So every basis vector $\ket{x_1,z_1}\bra{x_2,z_2}$ acquires a coefficient $c_{x_1,z_1,x_2,z_2}$. 

We start with $\overline{O}_{T_\varnothing}$ (the NO instance). When $(x_1,z_1) = (x_2,z_2)$, the coefficient is $1$. When $x_1 \ne x_2$, $f^{z}(x_1)$ and $f^{-z}(x_2)$ are uniformly likely to be any value, so the coefficient is
\begin{align}
     \frac{1}{N^2}\sum_{a \in [N], b \in [N]} \omega_N^{a-b} = \frac{1}{N^2}\left\|\sum_{a \in [N]} \omega_N^a\right\|^2 = 0\,.
\end{align}
Similarly, when $x_1 \ne x_2$, $f^{z}(x_1)$ and $f^{z}(x_2)$ are uniformly likely to be any unequal values; the coefficient is
\begin{align}
     \frac{1}{N(N-1)}\sum_{a \in [N], b \in [N], a \ne b} \omega_N^{a-b} 
    &= \frac{1}{N(N-1)} \big[ \sum_{a \in [N], b \in [N]} \omega_N^{a-b} - \sum_{a \in [N]} \omega_N^{a-a} \big]
    = -\frac{1}{N-1}\,.
\end{align}
We now consider $\overline{O}_{T_V}$ (a YES instance). When $(x_1,z_1) = (x_2,z_2)$, the coefficient is again $1$. When $x_1 \ne x_2$, the values of $f^z(x_1)$ and $f^{-z}(x_2)$ are uniformly likely to be any value in $i_V(x_1)$ and $i_V(x_2)$, respectively, so the coefficient is
\begin{align}
    \frac{1}{|i_V(x_1)| \times |i_V(x_2)|} \sum_{a \in i_V(x_1), b \in i_V(x_2)} \omega_N^{a-b}\,.
\end{align}
Similarly, when $x_1 \ne x_2$, $f^z(x_1)$ and $f^z(x_2)$ are uniformly likely to be any unequal values in $i_V(x_1)$ and $i_V(x_2)$, respectively, so the coefficient is 
\begin{align}
    \frac{1}{|i_V(x_1)| \times |i_V(x_2)|} \sum_{a \in i_V(x_1), b \in i_V(x_2), a \ne b} \omega_N^{a-b} 
    = \frac{
\left(    \sum_{a \in i_V(x_1), b \in i_V(x_2)} \omega_N^{a-b} \right)
-\delta |i_V(x_1)|
    }
    {|i_V(x_1)| \times |i_V(x_2) - \delta|} 
    \,,
\end{align}
where $\delta$ is $1$ if $i_V(x_1) = i_V(x_2)$ and $0$ otherwise.

Consider the object $\mathcal{\overline{O}}_{T_V}[\rho] - \mathcal{\overline{O}}_{T_\varnothing}[\rho]$ as the sum of two matrices $A_V + B_V$.
Let $A_V$ contain the $(V,z) \times (V,z)$ submatrix for both $z \in \pmset$, and $B_V$ contain the rest of the entries. When the oracle query is entangled with $o(n)$ additional qubits, $A_V$ has rank $O(|V|)$, and $B_V$ has rank $O(N)$.

Since the roots of unity sum to zero,
$\sum_{a \in V} \omega_N^a = - \sum_{a \in [N]/V} \omega_N^a$ for any $V \subseteq [N]$.
Because of this,
\begin{align}
    \left \| \sum_{a, b \in i_V(x)} \omega_N^{a-b} \right \| \le \left \| \sum_{a \in i_V(x)} \omega_N^{a} \right \|^2 \le \left \| \sum_{a \in V} \omega_N^{a} \right \|^2 = O(|V|^2)\,.
\end{align}
As a result, all coefficients in $B_V$ are $O(\frac{1}{N^{1-\alpha}})$.

In \Cref{lemma:chernoff}, we show that for asymptotically all choices of $V$, all coefficients in $A_V$ are $O(\frac{1}{N^{3\alpha/4}})$. This argument uses a Chernoff bound and a central limit argument on samples without replacement.

We bound the nuclear norm of $\mathcal{\overline{O}}_{T_V}[\rho] - \mathcal{\overline{O}}_{T_\varnothing}[\rho]$ with the rank and Frobenius norm  of $A_V$ and $B_V$ (\Cref{fact:nuclearnorm_atmost_rank_and_fr}): 
\begin{align}
    \left\| \mathcal{\overline{O}}_{T_V}[\rho] - \mathcal{\overline{O}}_{T_\varnothing}[\rho]\right\|_1 
    &\le \|A_V\|_1 + \|B_V\|_1 
    \\
    &= O(\sqrt{V}) \|A_V\|_{Fr} + O(\sqrt{N})\|B_V\|_{Fr}
    \\
    &\le \left(O(\sqrt{V})O(N^{-3\alpha/4}) + O(\sqrt{N})O(N^{\alpha-1}) \right) \|\rho\|_{Fr}
    \\
    &= O(N^{-\alpha/4} + N^{\alpha-1/2})\,.
\end{align}
Thus, for most choices of $V$, distinguishing $\mathcal{\overline{O}}_{T_V}$ and $\mathcal{\overline{O}}_{T_\varnothing}$ requires $k = \Omega(\min(N^{\alpha/4}, N^{1/2-\alpha}))$ queries.
\end{proof}

We now prove the Chernoff bound:
\begin{lemma}
\label{lemma:chernoff}
Fix any $0 < \alpha < \frac{1}{2}$, and consider all subsets $V \subseteq [N]$ such that $|V| = N^\alpha$. Then for all but a doubly exponentially small fraction of choices of $V$,
\begin{align}
    \left\| \frac{1}{N^{2\alpha}}\sum_{a,b \in V} \omega_N^{a-b} \right\| = O(\frac{1}{N^{3\alpha/4}})\,.
\end{align}
\end{lemma}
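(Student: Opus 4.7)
The key observation is that the double sum factors as
\begin{align}
    \sum_{a,b \in V} \omega_N^{a-b} = \left(\sum_{a \in V} \omega_N^a\right)\left(\sum_{b \in V} \omega_N^{-b}\right) = \left|\sum_{a \in V} \omega_N^a\right|^2.
\end{align}
So the lemma reduces to showing that with all but a doubly exponentially small fraction of choices of $V$ we have $|\sum_{a \in V} \omega_N^a| = O(N^{5\alpha/8})$: squaring gives $O(N^{5\alpha/4})$, and dividing by $N^{2\alpha}$ yields the claimed $O(N^{-3\alpha/4})$.

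To bound the linear sum, I would split it into its real and imaginary parts, $R = \sum_{a \in V} \cos(2\pi a/N)$ and $I = \sum_{a \in V} \sin(2\pi a/N)$. Each is a sum over the uniformly random size-$N^\alpha$ subset $V \subseteq [N]$ of bounded summands in $[-1,1]$. Because $\sum_{a \in [N]} \omega_N^a = 0$, the two coordinate-wise population means $\tfrac{1}{N}\sum_a \cos(2\pi a/N)$ and $\tfrac{1}{N}\sum_a \sin(2\pi a/N)$ both vanish, so $\E[R] = \E[I] = 0$ under the uniform choice of $V$.

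Next I would invoke a Hoeffding-type concentration inequality for sampling without replacement. Serfling's inequality (or equivalently Hoeffding's observation that sampling without replacement is at least as concentrated as sampling with replacement from the same distribution) yields, for each of $R$ and $I$ and each $t > 0$,
\begin{align}
    \Pr_V\!\left[\,|R| \geq t\,\right] \;\leq\; 2\exp\!\left(-\frac{2t^2}{N^\alpha}\right),
\end{align}
and similarly for $I$. Setting $t = \tfrac{1}{\sqrt{2}} N^{5\alpha/8}$ makes the exponent $-N^{\alpha/4}$, and a union bound over the real and imaginary parts then gives
\begin{align}
    \Pr_V\!\left[\,\left|\textstyle\sum_{a \in V} \omega_N^a\right| \geq N^{5\alpha/8}\,\right] \;\leq\; 4\exp\!\left(-N^{\alpha/4}\right).
\end{align}
Since $N = 2^n$, the bound $\exp(-N^{\alpha/4}) = \exp(-2^{\alpha n/4})$ is doubly exponentially small in $n$, which is the claimed failure rate.

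The only subtle point is the sampling-without-replacement concentration: one must cite the appropriate version (Hoeffding 1963 or Serfling 1974) to avoid worrying about the mild negative dependence between the indicators $\mathbf{1}[a \in V]$. Everything else is an arithmetic check that the exponents $5\alpha/8$ and $\alpha/4$ match what the outer proof of \autoref{thm:phase_oracle_lower_bound} needs. No new ideas beyond the factorization and a standard concentration inequality are required.
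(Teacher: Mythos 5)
Your proposal is correct, and its skeleton matches the paper's: factor $\sum_{a,b\in V}\omega_N^{a-b} = \bigl\|\sum_{a\in V}\omega_N^a\bigr\|^2$, observe that the population mean of $\omega_N^a$ over $a\in[N]$ vanishes so the sum over a random $V$ is centered, and then apply a concentration bound for sampling without replacement with deviation scale $N^{\alpha/2+o(\alpha)}$, giving a failure probability of $\exp(-N^{\Theta(\alpha)})=\exp(-2^{\Theta(\alpha n)})$. The one substantive difference is the concentration step. The paper computes the variance of the sample mean $Y=\frac{1}{N^\alpha}\sum_{a\in V}\omega_N^a$, invokes the Erd\H{o}s--R\'enyi central limit theorem for sampling without replacement to assert that $Y$ is asymptotically normal, and then uses the Gaussian moment generating function $e^{t\mu_Y+\sigma_Y^2t^2/2}$ inside a Chernoff bound. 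Your route via Hoeffding's 1963 observation (that the sample sum without replacement is dominated in the convex order by the sum with replacement) or Serfling's inequality is actually the more defensible one: asymptotic normality does not by itself license substituting the limiting MGF into a finite-sample tail bound, whereas Hoeffding/Serfling gives a genuine non-asymptotic sub-Gaussian tail with the same variance proxy. The exponents differ ($\exp(-N^{\alpha/4})$ for you versus $\exp(-N^{\alpha/8})$ in the paper, traceable to the paper's non-optimal choice $t=2N^{\alpha/2}$), and your Hoeffding constant is off by a factor of $4$ since the summands have range $2$ rather than $1$, but neither affects the doubly exponential conclusion. So your argument goes through and, if anything, patches a small rigor gap in the paper's version.
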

\begin{proof}
Consider the distribution $X = \{ \omega_N^k \}$ where $k$ is chosen uniformly from $N$. Both $Re(X)$ and $Im(X)$ have mean zero and variance at most $1$.

Take a size-$N^\alpha$ sample from the distribution $X$, \emph{without replacement}. Denote $Y$ as the distribution of the sample mean. Both $Re(Y)$ and $Im(Y)$ have expectation $Re(X) = Im(X) = 0$, and variance
\begin{align}
    \frac{\sigma_X^2}{N^\alpha}(1 - \frac{N^\alpha - 1}{N - 1}) \le \frac{1}{N^\alpha}\,.
\end{align}
Even when sampling without replacement, $Y$ is asymptotically normally distributed \cite{erdos1959central}. So its moment generating function is
\begin{align}
    \text{MGF}_Y[t] = e^{t\mu_Y + \sigma_Y^2 t^2 / 2} \le e^{t^2/N^\alpha}\,.
\end{align}
We use a Chernoff bound to estimate when $Y$ has magnitude at least $N^{-3\alpha/8}$. Notice that 
\begin{align}
    \Pr[Y \ge a] &= \Pr[e^{tY} \ge e^{ta}] \le e^{-at} \text{MGF}_Y[t] \le e^{t^2/N^\alpha}e^{- at}\,,
\end{align}
so
\begin{align}
    \Pr[Y \ge \frac{0.5}{N^{3\alpha/8}}] 
    &\le
    \inf_{t \ge 0}\  \exp( \frac{t^2}{N^\alpha} - \frac{0.5t}{N^{3\alpha/8}})\ 
    \underset{t = 2N^{\alpha/2}}{\le}
    \exp(4 - N^{\alpha/8}) = O(\frac{1}{\exp(\exp(n))})\,.
\end{align}
This implies that $Y$ has magnitude at most $N^{-3\alpha/8}$ (and $Y^2$ at most $N^{-3\alpha/4}$) except in a doubly exponentially small fraction of choices of $V$.
\end{proof}
\begin{remark}
Consider an oracle that sends $\ket{c,x,z} \to \omega_N^{c \cdot f^z(x)}\ket{c,x,z}$, where the $c$ register has $k$ qubits. 
Note that \Cref{thm:phase_oracle_lower_bound} applies whenever $k = o(n)$.
However, there must be a phase transition, since at $k = n$, this oracle is unitarily equivalent to a standard oracle, and thus has a $\QMA$ protocol for \problemhiddensubset{RANDOMIZED} in \Cref{sec:qma}.
\end{remark}

\section*{Acknowledgements}
\label{sec:ack}
Thanks to Casey Duckering, Juspreet Singh Sandhu, Peter Shor, and Justin Yirka for collaborating on early stages of this project. KM thanks many others for engaging discussions, including 
Adam Bouland, Antares Chen, Aram Harrow, Matt Hastings, Eric Hester, Neng Huang, Robin Kothari, Brian Lawrence,  Yi-Kai Liu, Patrick Lutz, Tushant Mittal, Abhijit Mudigonda, Chinmay Nirkhe, and Aaron Potechin. Thanks to Chinmay Nirkhe for feedback on a previous version of this manuscript.

BF and RB acknowledge support from AFOSR (YIP number FA9550-18-1-0148 and
FA9550-21-1-0008). This material is based upon work partially
supported by the National Science Foundation under Grant CCF-2044923
(CAREER) and by the U.S. Department of Energy, Office of Science,
National Quantum Information Science Research Centers as well as by
DOE QuantISED grant DE-SC0020360.
KM acknowledges support from the National Science Foundation Graduate Research Fellowship Program under Grant No. DGE-1746045. 
Any opinions, findings, and conclusions or recommendations expressed in this material are those of the author(s) and do not necessarily reflect the views of the National Science Foundation.

\bibliography{research.bib}

\newcommand{\etalchar}[1]{$^{#1}$}
\begin{thebibliography}{ALM{\etalchar{+}}98}

\bibitem[Aar21]{aaronson2021open}
Scott Aaronson.
\newblock Open Problems Related to Quantum Query Complexity, 2021.
\newblock \href{https://arxiv.org/abs/2109.06917}{arXiv:2109.06917}.

\bibitem[ACL11]{acl2011}
Andris Ambainis, Andrew~M. Childs, and Yi-Kai Liu.
\newblock Quantum Property Testing for Bounded-Degree Graphs.
\newblock {\em Lecture Notes in Computer Science}, page 365–376, 2011.
\newblock \href{https://arxiv.org/abs/1012.3174}{arXiv:1012.3174}.

\bibitem[AGS22]{arora22}
Atul~Singh Arora, Alexandru Gheorghiu, and Uttam Singh.
\newblock Oracle separations of hybrid quantum-classical circuits, 2022.
\newblock \href{https://arxiv.org/abs/2201.01904}{arXiv:2201.01904}.

\bibitem[AK07]{Aaronson2007}
Scott Aaronson and Greg Kuperberg.
\newblock Quantum Versus Classical Proofs and Advice.
\newblock {\em Theory Comput.}, 3(1):129--157, 2007.
\newblock
  \href{https://arxiv.org/abs/quant-ph/0604056}{arXiv:quant-ph/0604056}.

\bibitem[ALM{\etalchar{+}}98]{arora1998proof}
Sanjeev Arora, Carsten Lund, Rajeev Motwani, Madhu Sudan, and Mario Szegedy.
\newblock Proof verification and the hardness of approximation problems.
\newblock {\em Journal of the ACM (JACM)}, 45(3):501--555, 1998.
\newblock URL \url{https://doi.org/10.1145/1236457.1236459}.

\bibitem[Amb00]{Ambainis2000-kd}
Andris Ambainis.
\newblock Quantum lower bounds by quantum arguments.
\newblock In {\em Proceedings of the thirty-second annual {ACM} symposium on
  Theory of computing - {STOC} '00}. ACM Press, 2000.
\newblock
  \href{https://arxiv.org/abs/quant-ph/0002066}{arXiv:quant-ph/0002066}.

\bibitem[AN02]{aharonovnaveh}
Dorit Aharonov and Tomer Naveh.
\newblock Quantum NP - A Survey, 2002.
\newblock
  \href{https://arxiv.org/abs/quant-ph/0210077}{arXiv:quant-ph/0210077}.

\bibitem[BBBV97]{bbbv}
Charles~H. Bennett, Ethan Bernstein, Gilles Brassard, and Umesh Vazirani.
\newblock Strengths and Weaknesses of Quantum Computing.
\newblock {\em SIAM Journal on Computing}, 26(5):1510–1523, Oct 1997.
\newblock
  \href{https://arxiv.org/abs/quant-ph/9701001}{arXiv:quant-ph/9701001}.

\bibitem[Chi22]{childs_notes}
Andrew Childs.
\newblock {Lecture Notes on Quantum Algorithms}, 2022.
\newblock URL \url{https://www.cs.umd.edu/~amchilds/qa/qa.pdf}.

\bibitem[Erd59]{erdos1959central}
Paul Erdos.
\newblock On the central limit theorem for samples from a finite population.
\newblock {\em Publications of the Mathematical Institute of the Hungarian
  Academy of Sciences}, 4:49--61, 1959.
\newblock URL \url{https://old.renyi.hu/~p_erdos/1959-13.pdf}.

\bibitem[FK18]{fefferman2018quantum}
Bill Fefferman and Shelby Kimmel.
\newblock Quantum vs Classical Proofs and Subset Verification, 2018.
\newblock \href{https://arxiv.org/abs/1510.06750}{arXiv:1510.06750}.

\bibitem[Fri04]{friedman}
Joel Friedman.
\newblock A proof of {A}lon's second eigenvalue conjecture and related
  problems.
\newblock {\em CoRR}, 2004.
\newblock \href{https://arxiv.org/abs/cs/0405020}{arXiv:cs/0405020}.

\bibitem[GKS15]{Grilo15}
Alex~Bredariol Grilo, Iordanis Kerenidis, and Jamie Sikora.
\newblock {QMA} with Subset State Witnesses.
\newblock In {\em Mathematical Foundations of Computer Science 2015 - 40th
  International Symposium}. Springer, 2015.
\newblock \href{https://arxiv.org/abs/1410.2882}{arXiv:1410.2882}.

\bibitem[GZ19]{GODSIL2019181}
Chris Godsil and Hanmeng Zhan.
\newblock Discrete-time quantum walks and graph structures.
\newblock {\em Journal of Combinatorial Theory, Series A}, 167:181--212, 2019.
\newblock \href{https://arxiv.org/abs/1701.04474}{arXiv:1701.04474}.

\bibitem[Har13]{harrowchurch}
Aram~W. Harrow.
\newblock The Church of the Symmetric Subspace, 2013.
\newblock \href{https://arxiv.org/abs/1308.6595}{arXiv:1308.6595}.

\bibitem[HR11]{harrow_uselessness}
Aram~W. Harrow and David~J. Rosenbaum.
\newblock Uselessness for an Oracle Model with Internal Randomness, 2011.
\newblock \href{https://arxiv.org/abs/1111.1462}{arXiv:1111.1462}.

\bibitem[JNV{\etalchar{+}}21]{ji2021mip}
Zhengfeng Ji, Anand Natarajan, Thomas Vidick, John Wright, and Henry Yuen.
\newblock MIP*=RE.
\newblock {\em Communications of the ACM}, 64(11):131--138, 2021.
\newblock \href{https://arxiv.org/abs/2001.04383}{arXiv:2001.04383}.

\bibitem[KKVB02]{Kashefi_2002}
Elham Kashefi, Adrian Kent, Vlatko Vedral, and Konrad Banaszek.
\newblock Comparison of quantum oracles.
\newblock {\em Physical Review A}, 65(5), May 2002.
\newblock
  \href{https://arxiv.org/abs/quant-ph/0109104}{arXiv:quant-ph/0109104}.

\bibitem[KL20]{katz2020introduction}
Jonathan Katz and Yehuda Lindell.
\newblock {\em Introduction to modern cryptography}.
\newblock CRC press, 2020.
\newblock URL \url{http://www.cs.umd.edu/~jkatz/imc.html}.

\bibitem[KM93]{KM93}
Eyal Kushilevitz and Yishay Mansour.
\newblock Learning Decision Trees Using the Fourier Spectrum.
\newblock {\em {SIAM} J. Comput.}, 22(6):1331--1348, 1993.
\newblock URL \url{https://doi.org/10.1137/0222080}.

\bibitem[Lut11]{lutomirski2011component}
Andrew Lutomirski.
\newblock Component mixers and a hardness result for counterfeiting quantum
  money, 2011.
\newblock \href{https://arxiv.org/abs/1107.0321}{arXiv:1107.0321}.

\bibitem[NN22]{nn_2022_classical_randomized}
Anand Natarajan and Chinmay Nirkhe.
\newblock A classical oracle separation between QMA and QCMA, 2022.
\newblock \href{https://arxiv.org/abs/2210.15380}{arXiv:2210.15380}.

\bibitem[Pet00]{petersen_graphs}
Julius Petersen.
\newblock {Die Theorie der regulären graphs}.
\newblock {\em Acta Mathematica}, 15:193 -- 220, 1900.
\newblock URL \url{https://doi.org/10.1007/BF02392606}.

\bibitem[Ren05]{rennie_tracefrobenius}
Jason D.~M. Rennie.
\newblock Relating the Trace and Frobenius Matrix Norms, August 2005.
\newblock URL
  \url{http://people.csail.mit.edu/jrennie/writing/traceFrobenius.pdf}.

\bibitem[Sha92]{shamir1992ip}
Adi Shamir.
\newblock {IP=PSPACE}.
\newblock {\em Journal of the ACM (JACM)}, 39(4):869--877, 1992.
\newblock URL \url{https://doi.org/10.1145/146585.146609}.

\bibitem[Tre08]{trevisan_bipartitenessratio}
Luca Trevisan.
\newblock Max Cut and the Smallest Eigenvalue, 2008.
\newblock \href{https://arxiv.org/abs/0806.1978}{arXiv:0806.1978}.

\bibitem[Wat98]{watrous98}
John Watrous.
\newblock Quantum simulations of classical random walks and undirected graph
  connectivity, 1998.
\newblock \href{https://arxiv.org/abs/cs/9812012}{arXiv:cs/9812012}.

\end{thebibliography}
\bibliographystyle{alpha-betta-url}

\clearpage
\newpage
\appendix

\section{Basis elements of symmetric subspaces}
\label{appendix:basiselements}
In this section we find an orthogonal basis for $\symsubtnone$ and $\symsubtv$, and prove \Cref{lemma:good_distinguishers_form_pretty}.
 
\begin{theorem}[Basis elements of $\symsubtnone$]
\label{thm:basis_elements_fully_symmetric}
Consider $T_\varnothing$, the set of all permutations of $[N]$. Then $\symsubtnone$ is spanned by the matrices
\begin{align}
    A_{z_1, z_2} &=  \plus\braplus \otimes \ket{z_1}\bra{z_2}\,,
    \\
     B'_{z_1} &= \mathbb{I}_{[N]} \otimes \ket{z_1}\bra{z_1}\,,
\end{align}
for $z_1, z_2 \in \pmset$.
\end{theorem}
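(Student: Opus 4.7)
The plan is to decompose $\rho \in \mathbb{C}^{2N \times 2N}$ along the $\pmset$ control register: write
\begin{align}
\rho \;=\; \sum_{z_1, z_2 \in \pmset} \rho^{(z_1, z_2)} \otimes \ket{z_1}\bra{z_2},
\end{align}
with each block $\rho^{(z_1, z_2)} \in \mathbb{C}^{N \times N}$. Because $\widetilde{U}_\pi$ is block-diagonal on this register and acts as the permutation matrix $P_{\pi^{z}}$ on the $z$-subspace, the invariance condition $\widetilde{\mathcal{U}}_\pi[\rho] = \rho$ decouples into independent block-wise conditions
\begin{align}
\rho^{(z_1, z_2)}_{x, y} \;=\; \rho^{(z_1, z_2)}_{\pi^{-z_1}(x),\, \pi^{-z_2}(y)} \qquad \text{for all } \pi \in T_\varnothing = S_N.
\end{align}

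For the diagonal blocks ($z_1 = z_2$), the condition is invariance under the standard diagonal action of $S_N$ on $[N] \times [N]$. This action has exactly two orbits --- the diagonal and the off-diagonal --- so the block is a linear combination of $\mathbb{I}_{[N]}$ and $N\plus\braplus$, contributing the basis elements $B'_{z_1}$ and $A_{z_1, z_1}$.

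The main work is in the off-diagonal blocks. For $(z_1, z_2) = (1, -1)$ the condition becomes $\rho^{(1,-1)}_{x, y} = \rho^{(1, -1)}_{\sigma(x),\, \sigma^{-1}(y)}$ for every $\sigma \in S_N$, and the claim is that for $N \ge 3$ any such matrix is a scalar multiple of $N\plus\braplus$, yielding $A_{1,-1}$ (and symmetrically $A_{-1,1}$). I would prove this in two short steps. First, applying the condition at $(x_0, x_0)$ with $\sigma = (x_0\ y_0)$ gives $\sigma(x_0) = \sigma^{-1}(x_0) = y_0$, hence $\rho^{(1,-1)}_{x_0, x_0} = \rho^{(1,-1)}_{y_0, y_0}$, so all diagonal entries agree. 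Second, for any pairwise distinct $x_0, a, b$ (available since $N \ge 3$), the 3-cycle $\sigma = (x_0\ a\ b)$ satisfies $\sigma(x_0) = a$ and $\sigma^{-1}(x_0) = \sigma^2(x_0) = b$, so applying the condition at $(x_0, x_0)$ yields $\rho^{(1,-1)}_{x_0, x_0} = \rho^{(1,-1)}_{a, b}$. Since for every off-diagonal pair $(a,b)$ one can pick $x_0 \notin \{a,b\}$, every off-diagonal entry equals the common diagonal value, so $\rho^{(1,-1)}$ is constant.

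The main obstacle is this off-diagonal analysis. The map $\sigma \mapsto \tau_\sigma := ((x,y) \mapsto (\sigma(x), \sigma^{-1}(y)))$ is not a homomorphism into $\mathrm{Sym}([N]^2)$: composing gives $\sigma_1\sigma_2$ on the first coordinate but $\sigma_1^{-1}\sigma_2^{-1} \ne (\sigma_1\sigma_2)^{-1}$ on the second. Consequently Schur's lemma applied naively to the "representation" on this block would miscount invariants (it would suggest two, corresponding to $\mathbb{I}$ and $\plus\braplus$), so one cannot simply quote character theory and must run the elementary transposition-plus-3-cycle argument. Combining the four block contributions produces exactly the six matrices $\{A_{z_1, z_2}\}_{z_1, z_2 \in \pmset} \cup \{B'_{z_1}\}_{z_1 \in \pmset}$ claimed in the theorem.
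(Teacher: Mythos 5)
Your proof is correct and follows the same route as the paper: decompose along the $\pmset$ register, write out the invariance condition on the coefficients, and identify the orbits. Where you add value is the off-diagonal block $(z_1,z_2)=(1,-1)$, and your extra care there is warranted. The paper disposes of this case by asserting that $(\pi^{-z_1}(x_1),\pi^{-z_2}(x_2))$ ``can take on any value $(j,k)\in[N]\times[N]$,'' but for a \emph{single} permutation this is not quite true (no one $\pi$ sends $(x_1,x_2)$ to $(j,x_1)$ with $j\ne x_2$), and, as you observe, $\sigma\mapsto(\sigma(x),\sigma^{-1}(y))$ is not an action of $S_N$ on $[N]^2$, so one cannot just count orbits of a permutation action. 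One must either chain several permutations or, as you do, exhibit explicit transpositions and $3$-cycles forcing all entries of that block to coincide; your argument does this cleanly and makes explicit the (harmless, since $N=2^n$ is large) requirement $N\ge 3$ --- which is genuinely where the statement would fail, since at $N=2$ the off-diagonal block has a two-dimensional invariant space. The diagonal blocks and the final list of six spanning matrices agree with the paper.
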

\begin{proof}
By definition, $\symsubtnone$ contains the set of matrices $M$ such that $U_\pi M U_\pi^\dagger = M$ for all permutations $\pi: [N] \to [N]$. We can always write $M$ in the following form:
\begin{align}
    M &:= \sum_{x_1, x_2 \in [N], z_1, z_2 \in \pmset} \alpha_{x_1,z_1,x_2,z_2} \ket{x_1,z_1}\bra{x_2,z_2}\,.
\end{align}
Then $U_\pi M U_\pi^\dagger$ has the form
\begin{align}
    U_\pi M U_\pi^\dagger &= \sum_{x_1, x_2 \in [N], z_1, z_2 \in \pmset} \alpha_{x_1,z_1,x_2,z_2} \ket{\pi^{z_1}(x_1),z_1}\bra{\pi^{z_2}(x_2),z_2} 
    \\
    &= \sum_{x_1, x_2 \in [N], z_1, z_2 \in \pmset} \alpha_{\pi^{-z_1}(x_1),z_1,\pi^{-z_2}(x_2),z_2} \ket{x_1,z_1}\bra{x_2,z_2}\,.
\end{align}
Thus, for all permutations $\pi$, and for all $x_1, x_2 \in [N]$ and $z_1, z_2 \in \pmset$,
\begin{align}
    \alpha_{x_1, z_1, x_2, z_2} = \alpha_{\pi^{-z_1}(x_1),z_1,\pi^{-z_2}(x_2),z_2}\,.
\end{align}
Given $(x_1,z_1,x_2,z_2)$, this restriction depends on the possible values of $(\pi^{-z_1}(x_1), \pi^{-z_2}(x_2))$:
\begin{itemize}
    \item When $z_1 \ne z_2$, this can take on any value $(j,k) \in [N] \times [N]$.
    \item  When $z_1 = z_2$ and $x_1 \ne x_2$, this can take on any value  $(j,k) \in [N] \times [N]$ such that $j \ne k$.
    \item When $z_1 = z_2$ and $x_1 = x_2$, this can take on any value $(j,j) \in [N] \times [N]$.
\end{itemize}
As a result, for a given choice of $z_1, z_2 \in \pmset$, all coefficients are equal except possibly on the diagonal when $z_1 = z_2$. There are then six basis elements, one proportional to $\mathbb{1}_N \otimes \ket{z_1}\bra{z_2} \propto \plus \braplus \otimes \ket{z_1}\bra{z_2}$ for each $z_1,z_2 \in \pmset$, and one proportional to $\mathbb{I}_{[N]} \otimes \ket{z}\bra{z}$ for each $z \in \pmset$.
\end{proof}
\begin{theorem}[Basis elements of $\symsubtv$]
\label{thm:basis_stabilized_V}
Consider $T_V$ for some non-empty $V \subsetneq [N]$. Then $\symsubtv$ is spanned by the matrices
\begin{align}
    C'_{S_1,S_2,z_1,z_2} &= \ket{S_1}\bra{S_2} \otimes \ket{z_1} \bra{z_2} \,, \\
     D_{S_1,z_1} &= \mathbb{I}_{S_1}  \otimes \ket{z_1}\bra{z_1}\,,
\end{align}
for $S_1,S_2 \in \{V, [N]/V\}$ and $z_1, z_2 \in \pmset$.
\end{theorem}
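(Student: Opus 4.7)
The plan is to follow the template of the proof of \autoref{thm:basis_elements_fully_symmetric}, adapting it to the structure of $T_V$. Since $T_V$ consists of permutations that stabilize $V$ (and hence stabilize $[N]/V$), the action of $T_V$ on index pairs $(x_1, x_2)$ now has more orbits than that of $T_\varnothing$, and I expect the basis to grow accordingly.

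First I would expand an arbitrary $M \in \symsubtv$ in the computational basis as
\begin{align}
    M = \sum_{x_1, x_2 \in [N],\, z_1, z_2 \in \pmset} \alpha_{x_1, z_1, x_2, z_2} \ket{x_1, z_1}\bra{x_2, z_2}\,,
\end{align}
and use $\widetilde{U}_\pi M \widetilde{U}_\pi^\dagger = M$ for all $\pi \in T_V$ to derive the invariance condition $\alpha_{x_1, z_1, x_2, z_2} = \alpha_{\pi^{-z_1}(x_1), z_1, \pi^{-z_2}(x_2), z_2}$ exactly as before.

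Next I would enumerate the orbits of the $T_V$-action on tuples $(x_1, z_1, x_2, z_2)$. The key observation is that for any $\pi \in T_V$ and any $x$, the image $\pi^{\pm 1}(x)$ lies in $i_V(x) \in \{V, [N]/V\}$; moreover, $T_V$ acts transitively on $V$ and on $[N]/V$ separately. So fixing $S_1 = i_V(x_1)$ and $S_2 = i_V(x_2)$, the orbit of $(\pi^{-z_1}(x_1), \pi^{-z_2}(x_2))$ is: (a) all of $S_1 \times S_2$ when $z_1 \ne z_2$; (b) all of $S_1 \times S_2$ minus the diagonal when $z_1 = z_2$ and $x_1 \ne x_2$; (c) the diagonal of $S_1 \times S_1$ when $z_1 = z_2$ and $x_1 = x_2$ (which forces $S_1 = S_2$). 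This is the same three-case split as in \autoref{thm:basis_elements_fully_symmetric}, only now refined by the block $S_1, S_2$.

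Reading off basis vectors from these orbits, for each choice of $S_1, S_2 \in \{V, [N]/V\}$ and $z_1, z_2 \in \pmset$, the constant-on-orbit coefficient on the full block $S_1 \times S_2$ gives a basis element proportional to $\ket{S_1}\bra{S_2} \otimes \ket{z_1}\bra{z_2}$, i.e. the $C'_{S_1, S_2, z_1, z_2}$. When $z_1 = z_2 = z$ and $S_1 = S_2 = S$, the diagonal entries of $S \times S$ form a separate orbit, contributing the additional basis element $\mathbb{I}_S \otimes \ket{z}\bra{z} = D_{S, z}$. Collecting these covers every orbit, so the listed matrices span $\symsubtv$.

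The proof is essentially a bookkeeping exercise; the only subtlety I anticipate is making sure the orbit analysis in case (b) truly uses the fact that $T_V$ is $2$-transitive on each of $V$ and $[N]/V$ (which holds because $T_V$ contains the full symmetric group on each block), so that every off-diagonal pair in $S_1 \times S_2$ is indeed in a single orbit. Once that is confirmed, counting orbits yields exactly the basis $\{C'_{S_1,S_2,z_1,z_2}\} \cup \{D_{S,z}\}$ as claimed.
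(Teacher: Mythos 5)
Your proposal is correct and follows essentially the same route as the paper: expand $M$ in the computational basis, impose $\alpha_{x_1,z_1,x_2,z_2} = \alpha_{\pi^{-z_1}(x_1),z_1,\pi^{-z_2}(x_2),z_2}$ for all $\pi \in T_V$, and enumerate the orbits of index tuples, which refine the three cases of \autoref{thm:basis_elements_fully_symmetric} by the blocks $S_1, S_2 \in \{V, [N]/V\}$ (the paper merely groups the cases slightly differently, folding ``$i_V(x_1) \ne i_V(x_2)$'' in with ``$z_1 \ne z_2$''). Your explicit appeal to $2$-transitivity of $T_V$ on each block is exactly the fact the paper uses implicitly, and the resulting twenty spanning matrices match.
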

\begin{proof}
The proof is similar to \Cref{thm:basis_elements_fully_symmetric}.
By definition, $\symsubtv$ contains the set of matrices $M$ such that $U_\pi M U_\pi^\dagger = M$ for all permutations $\pi \in T_V$. We can always write $M$ in the following form:
\begin{align}
    M &:= \sum_{x_1, x_2 \in [N], z_1, z_2 \in \pmset} \alpha_{x_1,z_1,x_2,z_2} \ket{x_1,z_1}\bra{x_2,z_2}\,.
\end{align}
Then for all $\pi \in T_V$, $x_1, x_2 \in [N]$ and $z_1, z_2 \in \pmset$,
\begin{align}
    \alpha_{x_1, z_1, x_2, z_2} = \alpha_{\pi^{-z_1}(x_1),z_1,\pi^{-z_2}(x_2),z_2}\,.
\end{align}
Given $(x_1, z_1, x_2, z_2)$, this restriction depends on the possible values of $(\pi^{-z_1}(x_1), \pi^{-z_2}(x_2))$:
\begin{itemize}
    \item When $z_1 \ne z_2$ or $i_V(x_1) \ne i_V(x_2)$, $\alpha_{x_1,z_1,x_2,z_2}= \alpha_{j,z_1,k,z_2}$ whenever $j \in i_V(x_1)$ and $k \in i_V(x_2)$.
    \item When $z = z_1 = z_2$ and $i_V(x_1) = i_V(x_2)$, $\alpha_{x_1,z,x_2,z}= \alpha_{j,z,k,z}$ whenever $j, k \in i_V(x_1)$ and $j \ne k$.
    \item When $z = z_1 = z_2$ and $x = x_1 = x_2$, $\alpha_{x,z,x,z} = \alpha_{j,z,j,z}$ whenever $j \in i_V(z)$. 
\end{itemize}
As a result, for a given choice of $z_1, z_2 \in \pmset$ and $i_V(x_1),i_V(x_2) \in \{V, [N]/V\}$, all coefficients are equal except possibly on the diagonal when $z_1 = z_2$ and $i_V(x_1) = i_V(x_2)$. There are then twenty basis elements; one proportional to $\ket{i_V(x_1)}\bra{i_V(x_2)} \otimes \ket{z_1}\bra{z_2}$ for each $i_V(x_1),i_V(x_2) \in \{V, [N]/V\}$ and $z_1,z_2 \in \pmset$, and one proportional to $\mathbb{I}_{i_V(x)} \otimes \ket{z}\bra{z}$ for each $i_V(x) \in  \{V, [N]/V\}$ and $z \in \pmset$.
\end{proof}
We can now find an orthogonal basis for $\symsubtnone$ and $\symsubtv$:
\begin{theorem}[Orthogonal basis elements of $\symsubtnone$ and $\symsubtv$]
\label{thm:basis_tv_not_tnothing_orthogonal}
The symmetric subspace according to $T_\varnothing$ is spanned by the orthogonal matrices 
\begin{align}
    A_{z_1,z_2} &= \plus \braplus \otimes \ket{z_1}\bra{z_2}\,,
    \\
    B_{z_1} &= \frac{1}{N}\left(\mathbb{I}_{[N]} - \plus \braplus \right)\otimes \ket{z_1}\bra{z_1}\,,
\end{align}
for $z_1,z_2 \in \pmset$. 

Let $\delta_{i,j}$ be the Kronecker delta function. The symmetric subspace according to $T_V$ for any nonempty $V \subsetneq [N]$ is spanned by the matrices above, and the orthogonal matrices in $\mathcal{C}$, which are
\begin{align}
 C_{1,z_1,z_2} &= \left(\ket{V}\bra{[N]/V} - \ket{[N]/V}\bra{V}\right) \otimes \ket{z_1}\bra{z_2}\,,
    \\
C_{2,z_1,z_2} &= \left( \ket{V}\bra{[N]/V} + \ket{[N]/V}\bra{V} - \frac{2}{N\sqrt{|V|(N-|V|)}} \left(\plus \braplus - \delta_{z_1,z_2}\frac{1}{N} \mathbb{I}_{[N]} \right) \right) \otimes \ket{z_1}\bra{z_2}\,,
\\
    C_{3,z_1,z_2} &= \left( \ket{V}\bra{V} -  \frac{|V|}{N-|V|}\ket{[N]/V}\bra{[N]/V} - \delta_{z_1,z_2}\frac{1-\frac{|V|}{N-|V|}}{N}\mathbb{I}_{[N]} \right) \otimes \ket{z_1}\bra{z_2}\,, \\
    \\
    C_{4,z_1} &= \left( \frac{1}{|V|} (\mathbb{I}_V - \ket{V}\bra{V}) - \frac{1}{N-|V|}(\mathbb{I}_{[N]/V} - \ket{[N]/V}\bra{[N]/V})\right) \otimes \ket{z_1}\bra{z_1}\,,
\end{align}
for $z_1,z_2 \in \pmset$. Moreover, these matrices have Frobenius norm at most $O(1)$ and nuclear norm $\Theta(1)$.
\end{theorem}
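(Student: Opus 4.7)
The plan is to use the non-orthogonal spanning sets produced by Theorems~\ref{thm:basis_elements_fully_symmetric} and~\ref{thm:basis_stabilized_V} and apply Gram--Schmidt under the Frobenius inner product $(X|Y) = \Tr[X^\dagger Y]$. A key structural observation is that $T_V \subseteq T_\varnothing$, so every fully symmetric matrix is also $T_V$-symmetric, giving the inclusion $\symsubtnone \subseteq \symsubtv$. Thus I first produce an orthogonal basis for $\symsubtnone$ and then extend it inside $\symsubtv$. Throughout, the second tensor factor $\ket{z_1}\bra{z_2}$ is only responsible for a diagonal block decomposition by $(z_1,z_2)$: matrices with distinct $(z_1,z_2)$ are automatically orthogonal, and within each block I need only orthogonalize the first tensor factor.

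For $\symsubtnone$, the only overlap among $\{A_{z_1,z_2}\}\cup\{B'_{z_1}\}$ is between $B'_{z_1} = \mathbb{I}_{[N]} \otimes \ket{z_1}\bra{z_1}$ and $A_{z_1,z_1} = \plus\braplus \otimes \ket{z_1}\bra{z_1}$. Writing $\mathbb{I}_{[N]} = \plus\braplus + (\mathbb{I}_{[N]} - \plus\braplus)$ and noting that $\plus\braplus$ is Frobenius-orthogonal to $\mathbb{I}_{[N]}-\plus\braplus$, the matrix $B_{z_1} = \tfrac{1}{N}(\mathbb{I}_{[N]}-\plus\braplus)\otimes \ket{z_1}\bra{z_1}$ is orthogonal to every $A_{z_1',z_2'}$ (the $\tfrac{1}{N}$ is just a normalization choice).

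For $\symsubtv$, I decompose the $\ket{S_1}\bra{S_2}$ pieces into four pieces adapted to the symmetry $V \leftrightarrow [N]/V$: an antisymmetric off-diagonal piece (yielding $C_{1,z_1,z_2}$), a symmetric off-diagonal piece (yielding $C_{2,z_1,z_2}$), and two diagonal pieces whose sum is already in $\symsubtnone$ but whose difference is new (yielding $C_{3,z_1,z_2}$). I then split the $\mathbb{I}_{S_1}$ diagonal generators into the parts inside $\spn\{\ket{V}\bra{V},\ket{[N]/V}\bra{[N]/V}\}$ (already absorbed above) and the parts on the orthogonal complement (giving $C_{4,z_1}$). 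The antisymmetric combination $C_1$ is automatically orthogonal to every symmetric combination. The symmetric off-diagonal combination has nonzero overlap with $\plus\braplus$ (and, in the $z_1=z_2$ case, with $\mathbb{I}_{[N]}-\plus\braplus$), so I subtract the Frobenius projections onto $A$ and, if applicable, $B$; this produces the correction $\propto(\plus\braplus - \delta_{z_1,z_2}\tfrac{1}{N}\mathbb{I}_{[N]})$ appearing in $C_2$. Analogously, the diagonal combination $\ket{V}\bra{V} - \tfrac{|V|}{N-|V|}\ket{[N]/V}\bra{[N]/V}$ is the unique multiple that is Frobenius-orthogonal to $\plus\braplus$, and the trailing $\mathbb{I}_{[N]}$ correction makes it also orthogonal to $B_{z_1}$ when $z_1=z_2$.

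For the norm claims, observe that every basis matrix is supported either on the 2D subspace $\spn\{\ket{V},\ket{[N]/V}\}$ (plus a bounded-rank correction along $\mathbb{I}_{[N]}$) or on its orthogonal complement. Frobenius norms follow from direct trace computations using $\Tr[\ket{S_1}\bra{S_2}\ket{S_3}\bra{S_4}] = \delta_{S_2,S_3}\delta_{S_1,S_4}$. For the nuclear norms, $A$ and $C_1$ are handled by inspection (rank-one and a $2\times 2$ block whose singular values are both $1$, respectively); $B$ and $C_4$ are weighted differences of mutually orthogonal projectors, so their singular values are read off directly, giving nuclear norm $\tfrac{N-1}{N}$ and $2 - \tfrac{1}{|V|} - \tfrac{1}{N-|V|}$ respectively; and $C_2, C_3$ decompose as a $2\times 2$ block in the $\{\ket{V},\ket{[N]/V}\}$ basis plus a scalar multiple of a projector on the orthogonal complement, whose singular values can again be computed explicitly. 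In all cases the Frobenius norm is $O(1)$ and the nuclear norm is $\Theta(1)$.

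The main obstacle is bookkeeping: verifying that each $C_i$ is simultaneously orthogonal to \emph{all} previously constructed matrices (including the $A$ and $B$ from the coarser basis) and computing the explicit Gram--Schmidt coefficients so that the resulting matrices match the statement. The nuclear norm calculation for $C_2$ and $C_3$ is the most delicate, because these matrices are non-Hermitian when $z_1 \neq z_2$, so one must compute singular values rather than eigenvalues; fortunately, the block-plus-correction structure decouples the spectrum into a $2\times 2$ part and an easily handled scalar part on the complement.
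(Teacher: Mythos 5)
Your proposal is correct and follows essentially the same route as the paper: both start from the non-orthogonal spanning sets of Theorems~\ref{thm:basis_elements_fully_symmetric} and~\ref{thm:basis_stabilized_V}, exploit the block structure in $(z_1,z_2)$ and the antisymmetry of $C_1$, and Gram--Schmidt the remaining generators against $A$ and $B$ under the Frobenius inner product. The only (harmless) divergence is at the norm step, where you compute exact singular values for $B$, $C_2$, $C_3$, $C_4$ via the projector/block decomposition, whereas the paper settles for Cauchy--Schwarz bounds; both yield the stated $O(1)$ Frobenius and $\Theta(1)$ nuclear norms.
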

\begin{proof}
All matrices are orthogonal when the last qubit $\ket{z_1}\bra{z_2}$ doesn't match. We first consider the basis elements of $\symsubtnone$, as listed in \Cref{thm:basis_elements_fully_symmetric}. When $z_1 \ne z_2$, $A_{z_1,z_2}$ is the only basis vector; when $z = z_1 = z_2$, observe that $B_z$ is a linear combination of $A_{z,z}$ and $B'_z$, and
\begin{align}
    \Tr[A_{z,z}^\dagger B_z] = \frac{1}{N} \left( \braplus \mathbb{I}_{[N]} \plus - \braplus\plus \braplus \plus \right) = 0\,.
\end{align}

We now construct the basis of $\symsubtv/\symsubtnone$ by orthogonalizing the vectors in \Cref{thm:basis_stabilized_V}. When $z_1 \ne z_2$, the only vectors are linear combinations of $C'_{S_1,S_2,z_1,z_2}$ for $S_1,S_2 \in \{V, [N]/V\}$ and $z_1,z_2 \in \pmset$.  Note that $\braplus \ket{V} = \frac{1}{\sqrt{|V|N}}$.
\begin{itemize}
    \item First of all, $C_{1,z_1,z_2}$ is antisymmetric; for every state $\ket{\psi}$, $\bra{\psi} C_{1,z_1,z_2} \ket{\psi} = 0$.
    \item $C_{2,z_1,z_2}$ is constructed as the ``symmetric'' version of $C_{1,z_1,z_2}$, and then offset by $\plus \braplus$ to account for overlap with $A_{z_1,z_2}$.
    \item $C_{3,z_1,z_2}$ is immediately orthogonal $C_{1,z_1,z_2}$ and $C_{2,z_1,z_2}$, and inspection shows orthogonality with $A_{z_1,z_2}$.
\end{itemize}
When $z = z_1 = z_2$, we modify $C_{1,z,z}, C_{2,z,z}, C_{3,z,z}$ to be traceless (i.e. orthogonal to $\mathbb{I}_N \otimes \ket{z}\bra{z}$). The last matrix $C_{4,z}$ must include $\mathbb{I}_V$; it is immediately orthogonal to $C_{1,z,z},C_{2,z,z}$, and  inspection shows that it is traceless and orthogonal to any $\ket{S}\bra{S}\otimes \ket{z}\bra{z}$ for $S \in \{V, [N]/V\}$.

We now calculate the norms of each vector. The Frobenius norm and nuclear norm of an outer product $\ket{\psi}\bra{\psi}$ is exactly $1$. The nuclear norm of $\frac{1}{N}\mathbb{I}_{[N]}$ is $1$, and the Frobenius norm is $\frac{1}{\sqrt{N}}$. Using Cauchy-Schwarz, the Frobenius norm and nuclear norm of $A_{z_1,z_2}, C_{1,z_1,z_2}, C_{2,z_1,z_2}, C_{3,z_1,z_2}$ are all $\Theta(1)$. The nuclear norm of $B_z$ and $C_{4,z}$ are both $\Theta(1)$, and their Frobenius norms are $\Theta(\frac{1}{\sqrt{N}})$ and $\Theta(\frac{1}{\sqrt{|V|}})$, respectively.
\end{proof}

We use the orthogonal basis in \Cref{thm:basis_tv_not_tnothing_orthogonal} to describe the form of good distinguishers of $\oracletv$ and $\oracletnone$.
\begin{fact}[Representing the difference of randomized oracles]
\label{fact:dvrho_form_orthogonal}
Since $\oracletv$ and $\oracletnone$ are both orthogonal projectors under the Frobenius inner product, the difference of their outputs can be represented with the basis elements $\mathcal{C}$ of $\symsubtv/\symsubtnone$ (computed in \Cref{appendix:basiselements}). That is,
\begin{align}
\label{eqn:dvrho_basis}
    d_{V,\rho} := \oracletv[\rho] - \oracletnone[\rho] = \sum_{M \in \mathcal{C}} c_M M \,,
\end{align}
where
\begin{align}
    c_M =  \frac{\Tr[M^\dagger \rho]}{\|M\|_{Fr}^2}\,.
\end{align}
We refer to $c_M$ as the \emph{weight} (of $\rho$) on the matrix $M$.
\end{fact}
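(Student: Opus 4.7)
The plan is to lean on the projector structure provided by \autoref{thm:oracles_are_projectors}. Since $T_V$ is a subgroup of $T_\varnothing$ (every permutation stabilizing $V$ is a permutation of $[N]$), any matrix fixed by all of $T_\varnothing$ is also fixed by all of $T_V$, giving the containment $\symsubtnone \subseteq \symsubtv$. Both $\oracletnone$ and $\oracletv$ are orthogonal projectors under the Frobenius inner product, so from the subspace nesting I would deduce the standard projector identities $\oracletv \circ \oracletnone = \oracletnone \circ \oracletv = \oracletnone$. This means $\oracletv - \oracletnone$ is itself an orthogonal projector, namely onto the orthogonal complement $\symsubtv \cap \symsubtnone^\perp$. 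Hence $d_{V,\rho} = (\oracletv - \oracletnone)[\rho]$ lies in that complement.

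Next I would invoke \autoref{thm:basis_tv_not_tnothing_orthogonal}, which furnishes an orthogonal basis $\mathcal{C} = \{C_{1,z_1,z_2}, C_{2,z_1,z_2}, C_{3,z_1,z_2}, C_{4,z_1}\}$ for $\symsubtv \cap \symsubtnone^\perp$. Expanding $d_{V,\rho}$ in this orthogonal basis yields \eqref{eqn:dvrho_basis} with coefficients
\begin{align}
c_M \;=\; \frac{\Tr[M^\dagger d_{V,\rho}]}{\|M\|_{Fr}^2},
\end{align}
which is the defining formula for orthogonal projection under the Frobenius inner product.

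The last step is to show that $\Tr[M^\dagger d_{V,\rho}]$ simplifies to $\Tr[M^\dagger \rho]$. For this I would use that orthogonal projectors are self-adjoint with respect to the Frobenius inner product, so $\Tr[M^\dagger \oracletv[\rho]] = \Tr[\oracletv[M]^\dagger \rho]$ and likewise for $\oracletnone$. Since $M \in \mathcal{C} \subseteq \symsubtv$, the projector fixes it: $\oracletv[M] = M$. Since $M \perp \symsubtnone$, the other projector annihilates it: $\oracletnone[M] = 0$. Combining these,
\begin{align}
\Tr[M^\dagger d_{V,\rho}] \;=\; \Tr[\oracletv[M]^\dagger \rho] - \Tr[\oracletnone[M]^\dagger \rho] \;=\; \Tr[M^\dagger \rho],
\end{align}
which gives the stated expression for $c_M$.

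There is no real obstacle — this is a bookkeeping exercise once the projector identities are in place. The only subtle point is being careful that the Frobenius self-adjointness of $\oracletv$ and $\oracletnone$ is being used (which follows from their being orthogonal projectors, not merely idempotents), so that the coefficient can be read off from $\rho$ itself rather than from $d_{V,\rho}$.
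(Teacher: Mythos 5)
Your argument is correct and is exactly the justification the paper intends: the paper states this as a Fact without a detailed proof, relying on \autoref{thm:oracles_are_projectors} (both channels are Frobenius-orthogonal projectors) and \autoref{thm:basis_tv_not_tnothing_orthogonal} ($\mathcal{C}$ is an orthogonal basis for $\symsubtv \cap \symsubtnone^\perp$), which is precisely the route you take. Your final step---using self-adjointness to replace $\Tr[M^\dagger d_{V,\rho}]$ by $\Tr[M^\dagger \rho]$---is the one detail the paper glosses over, and you handle it correctly.
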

\begin{lemma}[\Cref{lemma:good_distinguishers_form_pretty}, restated]
Consider a density matrix $\rho$ and up to $O(\log(n))$ extra workspace qubits. Suppose $\|d_{V,\rho}\|_1 = \Omega(\invpolyn)$. Then among the quantities
\begin{align}
    \bra{V,z}\rho\ket{V,z}\,,
    \\
    \Tr[ \rho \left( \mathbb{I}_{V,z} - \frac{|V|}{N} \mathbb{I}_{[N],z} \right) ]\,,
\end{align}
for any $z \in \pmset$, at least one has magnitude $\Omega(\invpolyn)$.
\end{lemma}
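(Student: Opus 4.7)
The plan is to leverage the orthogonal decomposition of $d_{V,\rho}$ granted by \Cref{fact:dvrho_form_orthogonal} and \Cref{thm:basis_tv_not_tnothing_orthogonal}, translating a lower bound on $\|d_{V,\rho}\|_1$ into a lower bound on the magnitude of at least one coefficient in the basis expansion, and then identifying each basis coefficient with one of the two stated quantities (up to polynomial factors).

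First, extend the orthogonal basis $\mathcal{C}$ of $\symsubtv \cap \symsubtnone^\perp$ by tensoring with an orthonormal basis of operators on the workspace. Because the workspace has only $O(\log n)$ qubits, its operator space has dimension $\poly(n)$; combined with the constant number of matrices in $\mathcal{C}$, the extended basis has $\poly(n)$ elements. By \Cref{thm:basis_tv_not_tnothing_orthogonal}, each original matrix in $\mathcal{C}$ has $\|M\|_1 = \Theta(1)$, and each tensor factor on the workspace can contribute at most $\sqrt{w} = \poly(n)$ to the nuclear norm. Hence by the triangle inequality, $\|d_{V,\rho}\|_1 = \Omega(\invpolyn)$ forces at least one coefficient $c_M = \Tr[M^\dagger \rho]/\|M\|_{Fr}^2$ to satisfy $|c_M| = \Omega(\invpolyn)$, which (using the Frobenius norms given in \Cref{thm:basis_tv_not_tnothing_orthogonal}) gives $|\Tr[M^\dagger \rho]| = \Omega(\invpolyn \cdot \|M\|_{Fr}^2)$.

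Second, case analyze on the type of the dominant basis element $M$. For $M$ of $C_{3,z,z}$-type the trace is dominated by $\bra{V,z}\rho\ket{V,z}$ (with the $\plus\braplus$ and $\mathbb{I}_{[N]}/N$ corrections being $o(\invpolyn)$), directly yielding the first quantity. For $M$ of $C_{4,z}$-type, because $\|C_4\|_{Fr}^2 = \Theta(1/|V|)$, the coefficient lower bound $|c_{C_4}| = \Omega(\invpolyn)$ rearranges to $|\Tr[\rho\,(\mathbb{I}_{V,z}-\tfrac{|V|}{N}\mathbb{I}_{[N],z})]| = \Omega(\invpolyn)$ unless the subtracted $\bra{V,z}\rho\ket{V,z}/|V|$ correction dominates (in which case the first quantity is large); the other correction $\bra{[N]/V,z}\rho\ket{[N]/V,z}/(N-|V|)$ is $O(1/N)$ and hence negligible. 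Finally, for $M$ of type $C_{1,z_1,z_2}$, $C_{2,z_1,z_2}$, or $C_{3,z_1,z_2}$ with $z_1 \ne z_2$, the trace picks out off-diagonal entries such as $\bra{V,z_1}\rho\ket{[N]/V, z_2}$ or $\bra{V,z_1}\rho\ket{V,z_2}$. Since $\rho$ is positive semidefinite, $|\rho_{ij}|^2 \le \rho_{ii}\rho_{jj}$, so an off-diagonal of magnitude $\Omega(\invpolyn)$ forces some diagonal $\bra{V,z}\rho\ket{V,z}$ to be $\Omega(\invpolyn^2)$, again delivering the first quantity.

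The main obstacle will be the workspace bookkeeping: the dominant basis element is really $M \otimes K$ for some workspace operator $K$, and the lemma's scalar quantities correspond to $K = \mathbb{I}_W$. One must argue that $|\Tr[(M \otimes K)^\dagger \rho]| = \Omega(\invpolyn)$ for any workspace operator $K$ (of unit Frobenius norm) still implies one of the identity-averaged scalar quantities is polynomially large. This follows from standard Schatten-norm inequalities relating $\|K\|_1$, $\|K\|_{Fr}$, and $\|K\|_\infty$ with the polynomial dimension $w$: any non-negligible weight on $M \otimes K$ forces non-negligible weight on $M \otimes \mathbb{I}_W / \sqrt{w}$ after absorbing a $\poly(n)$ loss, at which point the oracle-only case analysis above goes through.
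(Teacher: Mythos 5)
Your proposal follows the same route as the paper's proof: expand $d_{V,\rho}$ in the orthogonal basis $\mathcal{C}$ of $\symsubtv/\symsubtnone$ via \Cref{fact:dvrho_form_orthogonal}, use the triangle inequality and pigeonhole over the $\poly(n)$-sized basis to isolate one dominant coefficient, and then case-analyze on which $C_i$ carries the weight --- converting off-diagonal weight into $\bra{V,z}\rho\ket{V,z}$ via positive semidefiniteness (this is exactly \Cref{fact:offdiag_densitymatrix}) and reading the $C_{4,z}$ weight off as the second quantity. The norm bookkeeping ($\|C_i\|_1 = \Theta(1)$, $\|C_{4,z}\|_{Fr}^2 = \Theta(1/|V|)$) and the case analysis match the paper's.

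The step that fails as written is the workspace reduction in your last paragraph. It is \emph{not} true that non-negligible weight on $M \otimes K$ forces non-negligible weight on $M \otimes \mathbb{I}_W/\sqrt{w}$: take $\rho = \tfrac{1}{2}(\rho_+ \otimes \ket{0}\bra{0} + \rho_- \otimes \ket{1}\bra{1})$ with $\Tr[M^\dagger \rho_\pm] = \pm c$; then the weight on $M \otimes Z$ is $c$ while the weight on $M \otimes \mathbb{I}_W$ vanishes identically, so no Schatten-norm inequality can rescue the implication. The repair is to keep the workspace block operator-valued rather than projecting onto $K = \mathbb{I}_W$. For the off-diagonal cases, $|\Tr[(\ket{V,z_1}\bra{S,z_2} \otimes K)^\dagger \rho]| = \Omega(\invpolyn)$ with $\|K\|_{Fr} = 1$ forces the workspace block $\bra{V,z_1}\rho\ket{S,z_2}$ to have Frobenius norm $\Omega(\invpolyn)$ by Cauchy--Schwarz; writing $\rho = \sum_k \ket{\psi_k}\bra{\psi_k}$ expresses this block as $PQ^\dagger$ with $\|P\|_{Fr}^2 = \Tr[(\ket{V,z_1}\bra{V,z_1}\otimes\mathbb{I}_W)\,\rho]$ and $\|Q\|_{Fr}^2 \le 1$, so H\"older gives $\Tr[(\ket{V,z_1}\bra{V,z_1}\otimes\mathbb{I}_W)\,\rho] \ge \|PQ^\dagger\|_{Fr}^2 = \Omega(1/\poly(n)^2)$, which is the first quantity averaged over the workspace. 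For the $C_{4,z}$ case, the workspace operator $\sigma_{V,z}$ defined by $\Tr[K^\dagger\sigma_{V,z}] = \Tr[(\mathbb{I}_{V,z}\otimes K^\dagger)\rho]$ is positive semidefinite, so $|\Tr[(\mathbb{I}_{V,z}\otimes K)^\dagger\rho]| \le \|K\|_\infty\|\sigma_{V,z}\|_1 \le \Tr[(\mathbb{I}_{V,z}\otimes\mathbb{I}_W)\,\rho]$; since the $\tfrac{|V|}{N}\mathbb{I}_{[N],z}$ correction contributes only $O(\invexpn)$, large weight on $C_{4,z}\otimes K$ still forces the second quantity (with identity on the workspace) to be $\Omega(\invpolyn)$. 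With that substitution your argument closes and coincides with the paper's. (You also do not treat controlled access to the oracle, which the paper handles separately via \Cref{lemma:controlqubitbasis}; that is an extension beyond the bare statement of the lemma.)
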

\begin{proof}
By \Cref{fact:dvrho_form_orthogonal}, $d_{V,\rho}$ can be written as a linear combination of basis elements in $\mathcal{C}$, i.e.
\begin{align}
    d_{V,\rho} = \sum_{M \in \mathcal{C}} c_M M = \sum_{M \in \mathcal{C}} \frac{\Tr[M^\dagger \rho] M}{\|M\|_{Fr}^2}\,.
\end{align}
By the pigeonhole principle and the triangle inequality, there is an $M \in \mathcal{C}$ where
\begin{align}
\|c_M M\|_1 \ge \frac{1}{|\mathcal{C}|}\Omega(\invpolyn) = \Omega(\invpolyn)\,,
\end{align}
Note that the last equality holds when $|\mathcal{C}| = O(\poly(n))$; this is true allowing up to $O(\log(n))$ extra workspace qubits.
Inspecting each condition:
\begin{itemize}
    \item Suppose the matrix $M$ is $C_{1,z_1,z_2}$ or $C_{2,z_1,z_2}$ for some $z_1,z_2 \in \pmset$. By the proof of \Cref{thm:basis_tv_not_tnothing_orthogonal}, the Frobenius norm and nuclear norm are both $\Theta(1)$, so
\begin{align}
    \Omega(\invpolyn) = \Tr[C_{1.5 \pm 0.5,z_1,z_2}^\dagger\rho] = \bra{[N]/V,z_2}\rho \ket{V,z_1} \mp \bra{V,z_2}\rho \ket{[N]/V,z_1} \pm O(\invexpn)\,.
\end{align}
By pigeonhole, this implies that there is some $z_1, z_2 \in \pmset$ such that
$
    \left| \bra{[N]/V,z_2}\rho \ket{V,z_1} \right| = \Omega(\invpolyn)
$,
and by \Cref{fact:offdiag_densitymatrix}, $\bra{V,z_1}\rho\ket{V,z_1} = \Omega(\invpolyn)$.
\item  Suppose the matrix $M$ is $C_{3,z_1,z_2}$ for some $z_1,z_2 \in \pmset$. By the proof of \Cref{thm:basis_tv_not_tnothing_orthogonal}, the Frobenius norm and nuclear norm are both $\Theta(1)$, so
\begin{align}
    \Omega(\invpolyn) = \Tr[C_{3,z_1,z_2}^\dagger\rho] = \bra{V,z_2}\rho \ket{V,z_1} \pm O(\invexpn)\,,
\end{align}
and by \Cref{fact:offdiag_densitymatrix}, $\bra{V,z_1}\rho\ket{V,z_1} = \Omega(\invpolyn)$.
\item Suppose the matrix $M$ is $C_{4,z}$ for some $z \in \pmset$. By the proof of \Cref{thm:basis_tv_not_tnothing_orthogonal}, the Frobenius norm is $\Theta(\frac{1}{\sqrt{|V|}})$ and the nuclear norm is $\Theta(1)$, so
\begin{align}
    \Omega(\invpolyn) = |V|\Tr[C_{4,z}^\dagger \rho] = \Tr[\rho \left(\mathbb{I}_{V,z} - \frac{|V|}{N}\mathbb{I}_{[N]/V,z} \right)] - \bra{V,z}\rho\ket{V,z} \pm O(\invexpn)\,.
\end{align}
By pigeonhole, this implies that at least one of the two terms has magnitude $\Omega(\invpolyn)$.
\end{itemize}
This proof holds even if we allow controlled access to the oracle. Consider the addition of a control qubit $\ket{a}$, such that if $a = 0$ the oracle has no effect, and if $a = 1$ the oracle acts as usual.
We can separate $d_{V,\rho}$ into four matrices $d_{V,\rho}^{(a_1,a_2)}$ based on the values $(a_1,a_2)$.

If $\|d_{V,\rho}\|_1 = \Omega(\invpolyn)$, one of the four matrices $d_{V,\rho}^{(a_1,a_2)}$ must have $\Omega(\invpolyn)$ nuclear norm. 
 We have already seen what happens if that matrix is $d_{V,\rho}^{(1,1)}$.
 Recall that the oracle has no effect when $a_1 = a_2 = 0$. Every matrix $M$ with control register $\ket{0}\bra{0}$ satisfies $U_\pi M U_\pi^\dagger = M$ for all permutations $\pi$, and so $M \in \symsubtnone$. As a result, the matrix $d_{V,\rho}^{(0,0)} \in \symsubtv/\symsubtnone$ is identically zero.

Suppose the matrix $d_{V,\rho}^{(1,0)}$ has $\Omega(\invpolyn)$ nuclear norm. (The proof for  $d_{V,\rho}^{(0,1)}$ is nearly identical.) Despite an exponentially large basis of matrices of this form, we show in \Cref{lemma:controlqubitbasis} that this matrix has small rank. In particular, up to $O(\invexpn)$ multiplicative error, $d_{V,\rho}^{(1,0)}$ is a weighted sum of outer products $P_{z=+1} = \ket{1,V,+1}\bra{\psi_{z=+1}}$ and $P_{z=-1} = \ket{1,V,-1}\bra{\psi_{z=-1}}$  for some normalized states $\ket{\psi_z}$.
These outer products have nuclear norm $1$, so by the pigeonhole principle, there is a $z \in \pmset$ such that the weight $c_{P_z}$ on $P_z$ has magnitude at least $O(\invpolyn)$. Since the Frobenius norm of each outer product is also $1$, the weight is $c_{P_z} = \Tr[P_z^\dagger \rho] = \bra{1,V,z}\rho\ket{\psi_z}$. 
Thus, by \Cref{fact:offdiag_densitymatrix}, $\bra{1,V,z}\rho\ket{1,V,z} = \Omega(\invpolyn)$ for some $z \in \pmset$.
\end{proof}

We now prove the lemma:
\begin{lemma}
\label{lemma:controlqubitbasis}
Consider any matrix $A \in \symsubtv/\symsubtnone$ where the control register is $\ket{a_1=1}\bra{a_2=0}$ (or $\ket{a_1=0}\bra{a_2=1}$). Up to $O(\invexpn)$ multiplicative error, $A$ is a linear combination of outer products $\ket{1,V,+1}\bra{\psi_{z=+1}}$ and $\ket{1,V,-1}\bra{\psi_{z=-1}}$ (or their conjugate transposes) for some states $\ket{\psi_{z=+1}}$ and $\ket{\psi_{z=-1}}$.
\end{lemma}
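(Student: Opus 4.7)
The plan is to characterize the structure of matrices in $\symsubtv$ whose control register is $\ket{1}\bra{0}$, quotient out the analogous piece of $\symsubtnone$, and then approximate a residual left-register vector by $\ket{V}$.

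First I would expand $A = \ket{1}\bra{0}\otimes \sum_{x_1,z_1,x_2,z_2}\alpha_{x_1,z_1,x_2,z_2}\ket{x_1,z_1}\bra{x_2,z_2}$, where the outer factor is the control register. Since $A\in\symsubtv$ and the controlled oracle acts on the left register only (because $a_1=1$) and trivially on the right (because $a_2=0$), the invariance condition $\mathcal{\widetilde{U}}_\pi[A]=A$ for every $\pi\in T_V$ reduces to $\alpha_{x_1,z_1,x_2,z_2}=\alpha_{\pi^{-z_1}(x_1),z_1,x_2,z_2}$. As $\pi\in T_V$ can send $x_1$ to any element of $i_V(x_1)\in\{V,[N]/V\}$, the coefficient depends on $x_1$ only through whether $x_1\in V$. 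Thus
\begin{equation*}
A=\sum_{z_1\in\pmset}\bigl(\ket{1,V,z_1}\bra{\phi_{V,z_1}}+\ket{1,[N]/V,z_1}\bra{\phi_{[N]/V,z_1}}\bigr).
\end{equation*}

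Applying the same argument to $T_\varnothing$ shows that the analogous control block of $\symsubtnone$ consists exactly of matrices $\sum_{z_1}\ket{1,+,z_1}\bra{\phi_{z_1}}$ with left register proportional to $\plus$. Let $\ket{u}$ be the unit vector in $\spn\{\ket{V},\ket{[N]/V}\}$ orthogonal to $\plus$, so
\begin{equation*}
\ket{V}=\sqrt{|V|/N}\,\plus+\sqrt{(N-|V|)/N}\,\ket{u},\qquad \ket{[N]/V}=\sqrt{(N-|V|)/N}\,\plus-\sqrt{|V|/N}\,\ket{u}.
\end{equation*}
Substituting into the expression for $A$, and using that $A$ (as an element of $\symsubtv/\symsubtnone$) is Frobenius-orthogonal to every matrix of the form $\sum_{z_1}\ket{1,+,z_1}\bra{\phi_{z_1}}$, forces all $\plus$ components to vanish. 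Hence $A=\sum_{z_1\in\pmset}\ket{1,u,z_1}\bra{\psi'_{z_1}}$ for some right-register vectors $\ket{\psi'_{z_1}}$, which already has rank at most $2$ independent of the size of the basis.

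Finally I would replace $\ket{u}$ by $\ket{V}$. Since $|V|=N^\alpha$ with $\alpha<1/2$, we have $\sqrt{|V|/N}=N^{(\alpha-1)/2}=O(\invexpn)$ and $\sqrt{(N-|V|)/N}=1-O(\invexpn)$, so $\|\ket{u}-\ket{V}\|=O(\invexpn)$. Setting $\ket{\psi_{z_1}}:=\ket{\psi'_{z_1}}$ then yields $A=\sum_{z_1\in\pmset}\ket{1,V,z_1}\bra{\psi_{z_1}}+R$ with $\|R\|_{Fr}=O(\invexpn)\cdot\sum_{z_1}\|\psi'_{z_1}\|$. The main obstacle is to rigorously convert this additive bound into the \emph{multiplicative} error in the lemma statement: one must show that removing the $\symsubtnone$ part does not shrink $\|A\|$ so drastically that the exponentially small tail dominates. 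This should follow from the orthogonality of $\ket{u}$ to $\plus$ and the exponential smallness of $\sqrt{|V|/N}$, which together ensure $\|A\|_{Fr}$ and $\sum_{z_1}\|\psi'_{z_1}\|$ differ by at most an $O(1)$ factor, making $R$ a genuine multiplicative perturbation.
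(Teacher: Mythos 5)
Your proposal is correct and follows essentially the same route as the paper: expand the block with control register $\ket{1}\bra{0}$, use $T_V$-invariance (which only permutes the left register) to collapse the left-register dependence to $\{V,[N]/V\}$, quotient out the $\symsubtnone$ piece (left register proportional to $\plus$) by Frobenius orthogonality, and observe that the remaining unit vector orthogonal to $\plus$ in $\spn\{\ket{V},\ket{[N]/V}\}$ is $O(\sqrt{|V|/N})=O(\invexpn)$-close to $\ket{V}$. Your closing worry about additive versus multiplicative error resolves exactly as you sketch, since the two $z$-blocks are orthogonal and the left-register vector being replaced is a unit vector, so the substitution is a genuine relative perturbation.
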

\begin{proof}
We set $a_1 = 1$ and $a_2 = 0$; the proof in the reverse case is nearly identical. Consider any matrix with the form
\begin{align}
      M := \sum_{x_1,x_2 \in [N], z_1,z_2 \in \pmset} \alpha_{x_1,z_1,x_2,z_2} \ket{1,x_1,z_1}\bra{0,x_2,z_2}\,.
\end{align}
For any permutation $\pi$, the matrix $U_\pi M U_\pi^\dagger$ is 
\begin{align}
    U_\pi M U_\pi^\dagger &= \sum_{x_1,x_2 \in [N], z_1,z_2 \in \pmset} \alpha_{x_1,z_1,x_2,z_2} \ket{1,\pi^{z_1}(x_1),z_1}\bra{0,x_2,z_2}
    \\ &= \sum_{x_1,x_2 \in [N], z_1,z_2 \in \pmset} \alpha_{\pi^{-z_1}(x_1),z_1,x_2,z_2} \ket{1,x_1,z_1}\bra{0,x_2,z_2}
    \,.
\end{align}
For every matrix $M \in \symsubtnone$ and permutation $\pi$, $M = U_\pi M U_\pi^\dagger$. Thus, the coefficients do not depend on $x_1$: every $M \in \symsubtnone$ has the form 
\begin{align}
     M = \sum_{x_2 \in [N], z_1,z_2 \in \pmset} \alpha_{z_1,x_2,z_2} \ket{1,+^{\otimes n},z_1}\bra{0,x_2,z_2}
     = \sum_{z \in \pmset}\alpha_{z} \ket{1,+^{\otimes n},z} \bra{\chi_{z}}
     \,.
\end{align}
Similarly, any matrix in $\symsubtv$ satisfies $M = U_\pi M U_\pi^\dagger$ when $\pi$ stabilizes the subset $V$. So if $M \in \symsubtv$, 
\begin{align}
     M = \sum_{S \in \{V, [N]/V\}, x_2 \in [N], z_1,z_2 \in \pmset} \alpha_{S,z_1,x_2,z_2} \ket{1,S,z_1}\bra{0,x_2,z_2}
       = \sum_{S \in \{V, [N]/V\}, z \in \pmset}\alpha_{S,z} \ket{1,S,z} \bra{\chi_{z}}
     \,.
\end{align}
Any matrix in $\symsubtv/\symsubtnone$ must be orthogonal to all matrices in $\symsubtnone$. Thus, if $M \in\symsubtv/\symsubtnone$,
\begin{align}
    M = \sum_{z \in \pmset} \alpha_z \ket{1}\ket{V'}\ket{z} \bra{\psi_{z}}\,,
\end{align}
where
\begin{align}
       \ket{V'} = \sqrt{\frac{N-|V|}{N}}\ket{V} + \sqrt{\frac{|V|}{N}}\ket{[N]/V}\,.
\end{align}
Note that $\ket{V'}$ is equal to $\ket{V}$ up to  $O(\sqrt{\frac{|V|}{N}}) = O(\invexpn)$ multiplicative error.
\end{proof}

\newpage

\section{Norms and inner products}
\label{appendix:norms}
Note that we work with arbitrary matrices, not just positive semidefinite ones.
\begin{definition}[Nuclear norm of a matrix]
\label{defn:nuclearnorm}
The nuclear norm of a matrix $M$ is the sum of its singular values; that is, 
\begin{align}
    \|M\|_1 = \sum_i \sigma_i(M) = \Tr[\sqrt{M^\dagger M}]\,.
\end{align}
\end{definition}
\begin{definition}[Frobenius norm and inner product of a matrix]
The Frobenius inner product of $N \times N$ matrices $A,B$ is 
\begin{align}
    (A|B) = \Tr[A^\dagger B]
\end{align}
This induces a norm, which is the square root of the sum of squares of the singular values:
\begin{align}
    \|A\|_{Fr} = \sqrt{\sum_i \sigma_i(A)^2} = \sqrt{\sum_{ij \in [N]} |A_{ij}|^2 }\,.
\end{align}
\end{definition}
\begin{fact}
\label{fact:nuclearnorm_atmost_rank_and_fr}
The nuclear norm of a matrix is at most the product of its Frobenius norm and the square root of its rank.
\end{fact}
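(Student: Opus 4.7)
The plan is to reduce the statement to an inequality purely about the singular values of $A$ and then apply the Cauchy--Schwarz inequality. Recall from the definitions given in the excerpt that $\|A\|_1 = \sum_i \sigma_i(A)$ and $\|A\|_{Fr} = \sqrt{\sum_i \sigma_i(A)^2}$, where the sums may as well be taken over the $r := \mathrm{rank}(A)$ nonzero singular values, since the zero singular values contribute nothing to either sum.

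First I would write the nuclear norm as an inner product of the vector of nonzero singular values with the all-ones vector of length $r$:
\begin{align}
\|A\|_1 \;=\; \sum_{i=1}^{r} \sigma_i(A) \;=\; \sum_{i=1}^{r} 1 \cdot \sigma_i(A).
\end{align}
Then the Cauchy--Schwarz inequality gives
\begin{align}
\sum_{i=1}^{r} 1 \cdot \sigma_i(A) \;\le\; \sqrt{\sum_{i=1}^{r} 1^2} \cdot \sqrt{\sum_{i=1}^{r} \sigma_i(A)^2} \;=\; \sqrt{r}\,\|A\|_{Fr},
\end{align}
which is exactly the claimed bound $\|A\|_1 \le \sqrt{\mathrm{rank}(A)}\,\|A\|_{Fr}$.

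There is essentially no obstacle here: the only subtlety is making sure that restricting to the nonzero singular values is valid, which is immediate since zero singular values contribute nothing to the nuclear norm, the Frobenius norm, or the rank count. One could equivalently phrase the argument as the $\ell^1$--$\ell^2$ norm inequality $\|v\|_1 \le \sqrt{r}\,\|v\|_2$ applied to the vector $v \in \mathbb{R}^r$ of nonzero singular values.
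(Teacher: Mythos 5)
Your proof is correct: the number of nonzero singular values equals the rank, and the $\ell^1$--$\ell^2$ (Cauchy--Schwarz) inequality on that vector gives exactly the claimed bound. The paper does not prove this fact inline but defers to a reference, and the cited argument is the same Cauchy--Schwarz computation you give, so your proposal is essentially the standard proof made self-contained.
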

\begin{proof}
See Rennie~\cite{rennie_tracefrobenius} for a proof with explanation.
\end{proof}
\begin{fact}[Nuclear norm of a positive semidefinite matrix]
\label{fact:psd_nuclear_is_trace}
The nuclear norm of a positive semidefinite Hermitian matrix is simply its trace; that is, if $\rho$ is Hermitian and positive semidefinite, then 
\begin{align}
    \|\rho\|_1 = \Tr[\rho]\,.
\end{align}
\end{fact}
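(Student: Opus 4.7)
The plan is to unfold \Cref{defn:nuclearnorm}, which gives $\|\rho\|_1 = \Tr[\sqrt{\rho^\dagger \rho}]$, and then use the Hermitian positive semidefinite structure of $\rho$ to simplify the expression inside the trace down to $\rho$ itself. This reduces the identity to a trivial computation.

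First I would observe that since $\rho$ is Hermitian, $\rho^\dagger = \rho$, so the matrix inside the square root in \Cref{defn:nuclearnorm} becomes $\rho^\dagger \rho = \rho^2$. Next, because $\rho$ is positive semidefinite, it admits a spectral decomposition $\rho = \sum_i \lambda_i \ket{v_i}\bra{v_i}$ with $\lambda_i \ge 0$; the unique positive semidefinite square root of $\rho^2 = \sum_i \lambda_i^2 \ket{v_i}\bra{v_i}$ is therefore $\sum_i \lambda_i \ket{v_i}\bra{v_i} = \rho$. Substituting back gives $\|\rho\|_1 = \Tr[\rho]$.

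An equivalent route, which I would mention briefly as a sanity check, is to note that for a normal matrix the singular values equal the absolute values of the eigenvalues, and positive semidefiniteness of $\rho$ forces $\sigma_i(\rho) = |\lambda_i(\rho)| = \lambda_i(\rho)$; then $\sum_i \sigma_i(\rho) = \sum_i \lambda_i(\rho) = \Tr[\rho]$. There is no real obstacle here — the only subtlety is invoking uniqueness of the positive semidefinite square root, which is standard from the spectral theorem.
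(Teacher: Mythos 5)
Your proposal is correct and matches the paper's proof, which gives exactly the same two arguments (singular values equal eigenvalues for a PSD Hermitian matrix, and alternatively $\rho = \sqrt{\rho^\dagger\rho}$ plugged into the definition of the nuclear norm); you simply present them in the opposite order of emphasis and with more detail on the uniqueness of the PSD square root.
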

\begin{proof}
For a Hermitian and positive semidefinite matrix, the eigenvalues are all real and nonnegative, so the singular values are exactly the eigenvalues. Alternatively, notice that $\rho = \sqrt{\rho^\dagger \rho}$ and use \Cref{defn:nuclearnorm}.
\end{proof}

\begin{fact}[POVM trace is at most the nuclear norm]
\label{fact:povm_trace_atmost_nuclear_norm}
Consider any Hermitian matrix $M$ and a POVM $\{E, \mathbb{I} - E\}$. Then 
\begin{align}
   \Tr[E M]  \le \|M\|_1\,.
\end{align}
\end{fact}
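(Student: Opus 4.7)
The plan is to reduce the inequality to two elementary facts about Hermitian matrices: (i) every Hermitian $M$ admits a Jordan decomposition $M = M_+ - M_-$ with $M_+, M_- \succeq 0$ having orthogonal support, and (ii) for any two positive semidefinite matrices $A, B \succeq 0$, $\Tr[AB] \ge 0$ (since $A^{1/2} B A^{1/2}$ is PSD and has the same trace as $AB$). Both are standard. From (i) and \Cref{fact:psd_nuclear_is_trace} applied to each orthogonal piece, we get $\|M\|_1 = \Tr[M_+] + \Tr[M_-]$.

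First I would expand
\begin{align}
    \Tr[E M] = \Tr[E M_+] - \Tr[E M_-] \le \Tr[E M_+],
\end{align}
using (ii) with $A = E \succeq 0$ and $B = M_- \succeq 0$ to discard the second term. Next I would rewrite the remaining upper bound as $\Tr[E M_+] = \Tr[M_+] - \Tr[(\mathbb{I} - E) M_+]$; since $\mathbb{I} - E \succeq 0$ by the POVM condition and $M_+ \succeq 0$, fact (ii) again gives $\Tr[(\mathbb{I} - E) M_+] \ge 0$, so $\Tr[E M_+] \le \Tr[M_+]$. Chaining these inequalities yields $\Tr[E M] \le \Tr[M_+] \le \Tr[M_+] + \Tr[M_-] = \|M\|_1$, which is the claim.

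No step looks like a genuine obstacle; the entire argument is a direct application of the Jordan decomposition together with the positivity built into the definition of a POVM element. As an even shorter alternative, one could appeal to the matrix H\"older inequality $|\Tr[AB]| \le \|A\|_\infty \|B\|_1$ applied with $A = E$ and $B = M$: the POVM condition $0 \preceq E \preceq \mathbb{I}$ forces $\|E\|_\infty \le 1$, giving $|\Tr[EM]| \le \|M\|_1$ and hence the stated bound on $\Tr[EM]$. I would likely include the self-contained Jordan decomposition argument in the paper's appendix since it only uses machinery already in use (\Cref{fact:psd_nuclear_is_trace}).
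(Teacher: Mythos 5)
Your proof is correct. The paper's own argument diagonalizes $M = U D U^\dagger$, writes $\Tr[EM] = \sum_i E'_{ii} D_i$ with $E' = U^\dagger E U$ still a POVM element, and then uses the fact that the diagonal entries of a POVM element lie in $[0,1]$ to conclude $\sum_i E'_{ii} D_i \le \sum_i |D_i| = \|M\|_1$. Your route is a basis-free repackaging of the same underlying fact $0 \preceq E \preceq \mathbb{I}$: the Jordan decomposition $M = M_+ - M_-$ plays the role of splitting the eigenvalues $D_i$ by sign, and the two applications of ``$\Tr[AB] \ge 0$ for $A, B \succeq 0$'' replace the entrywise bounds $E'_{ii} \ge 0$ and $E'_{ii} \le 1$. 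What your version buys is that it never touches matrix entries and cleanly isolates which half of the POVM condition is used where (positivity of $E$ to discard $\Tr[EM_-]$, positivity of $\mathbb{I}-E$ to cap $\Tr[EM_+]$ by $\Tr[M_+]$); what the paper's version buys is that it needs no auxiliary lemma beyond the eigendecomposition itself. Your one-line alternative via the matrix H\"older inequality $|\Tr[EM]| \le \|E\|_\infty \|M\|_1$ is also valid and in fact proves the slightly stronger two-sided bound; either writeup would serve. One cosmetic note: the paper mislabels the eigendecomposition of the Hermitian $M$ as a ``singular value decomposition'' (the $D_i$ there are eigenvalues, not singular values), a slip your formulation avoids by construction since $\Tr[M_+] + \Tr[M_-]$ is manifestly the sum of singular values.
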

\begin{proof}
Consider the singular value decomposition of a Hermitian $M = UDU^\dagger$. Then
\begin{align}
    \Tr[E M] = \Tr[(U^\dagger E U) D]= \Tr[E' D]
\end{align}
for some matrix $E'$. Note that $\{E', \mathbb{I}-E'\}$ make a POVM; they have the same eigenvalues of $\{E, \mathbb{I}-E\}$, respectively, and so are both positive semidefinite. Recall that the diagonal elements of a POVM are all nonnegative and at most $1$. Then
\begin{align}
     \Tr[E M] = \Tr[E'D] = \sum_{i} E'_{ii} D_i \le \sum_i |D_i| = \|D\|_{1} = \|M\|_1\,.
\end{align}
\end{proof}

\begin{fact}[Trace of outer product is inner product]
\label{fact:outerproduct_vs_innerproduct}
Consider vectors $\ket{x},\ket{y} \in \mathbb{C}^m$ and a matrix $A \in \mathbb{C}^{m \times m}$. Then the inner product of $\ket{y}\bra{x}$ and $A$ is
\begin{align}
    \Tr[(\ket{y}\bra{x})^\dagger A] = \Tr[A\ket{x}\bra{y}] = \bra{y}A\ket{x}\,.
\end{align}
\end{fact}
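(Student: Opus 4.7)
The plan is to verify the identity by routine linear algebra, handling the two equalities in turn.

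First I would compute $(\ket{y}\bra{x})^\dagger$. Taking the Hermitian adjoint of a rank-one operator swaps the ket and bra and reverses their order, giving $(\ket{y}\bra{x})^\dagger = \ket{x}\bra{y}$. So the leftmost expression equals $\Tr[\ket{x}\bra{y}\,A]$, and the middle equality $\Tr[\ket{x}\bra{y}A] = \Tr[A\ket{x}\bra{y}]$ is then immediate from the cyclic property of the trace.

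For the last equality, the quickest argument is that $\bra{y}A\ket{x}$ is a $1\times 1$ scalar, hence equal to its own trace, and by cyclicity again $\bra{y}A\ket{x} = \Tr[\bra{y}A\ket{x}] = \Tr[A\ket{x}\bra{y}]$. Alternatively, one can expand the trace in any orthonormal basis $\{\ket{i}\}$ and use completeness $\sum_i \ket{i}\bra{i} = \mathbb{I}$:
\[
    \Tr[A\ket{x}\bra{y}] = \sum_i \bra{i}A\ket{x}\bra{y}\ket{i} = \sum_i \bra{y}\ket{i}\bra{i}A\ket{x} = \bra{y}\Big(\sum_i \ket{i}\bra{i}\Big)A\ket{x} = \bra{y}A\ket{x}.
\]

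There is no substantive obstacle here: the statement is a routine linear-algebra identity that amounts to one invocation of the adjoint rule for rank-one operators plus one or two uses of trace cyclicity. I expect the proof to fit in two or three lines, which is presumably why the authors stated it as a \emph{fact} rather than a lemma.
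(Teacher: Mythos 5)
Your proof is correct and matches the paper's in substance: the paper simply expands $\Tr[A\ket{x}\bra{y}]$ in coordinates, which is exactly your orthonormal-basis alternative, and the adjoint/cyclicity observations you add are routine. Nothing further is needed.
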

\begin{proof}
\begin{align}
    \Tr[A \ket{x}\bra{y}] 
    = \Tr[\sum_{i,k \in [m]} \left(\sum_{j\in [m]} A_{ij} x_j \right)_i y^\dagger_k ] = \sum_{k,j \in [m]} A_{kj} x_j y^\dagger_k = \bra{y} A \ket{x}\,.
\end{align}
\end{proof}

\begin{remark}[Orthogonal basis for an input density matrix]
\label{remark:rho_basis}
We can decompose $\rho$ into a basis  $\mathcal{M}$ that is orthogonal under the Frobenius inner product $(a|b) = \Tr[a^\dagger b]$:
\begin{align}
    \rho = \sum_{M \in \mathcal{M}} c_M M\,.
\end{align}
Because the basis is orthogonal, for any $M \in \mathcal{M}$,
\begin{align}
    \Tr[M^\dagger \rho] = \sum_{M' \in \mathcal{M}} c_{M'} \Tr[M^\dagger M'] = c_M \|M\|^2_{Fr}\,.
\end{align}
Moreover, by Cauchy-Schwarz, the inner product of $M$ and $\rho$ is at most the product of the norm of each, so 
\begin{align}
\label{eqn:coeff_upperbound_frob}
    \|c_M M\|_{Fr} = \frac{\left| \Tr[M^\dagger \rho] \right|}{\|M\|_{Fr}} \le \|\rho\|_{Fr}\,.
\end{align}
\end{remark}

We also state two properties that hold for \emph{any} randomized oracle:

\begin{fact}[Randomized oracles preserve trace, Hermiticity, and positive semidefiniteness]
\label{fact:oracle_preserves_properties}
Consider any randomized oracle $\mathcal{O}_F$ corresponding to a set of functions $f \in F$. Then $\mathcal{O}_F$ preserves the trace of its input.
Moreover, if the input $M$ is Hermitian, so is $\mathcal{O}_F[M]$; if $M$ is also positive semidefinite, so is $\mathcal{O}_F[M]$.
\end{fact}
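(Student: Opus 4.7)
The plan is to reduce all three claims to the corresponding properties for a single unitary conjugation $\mathcal{U}_f[M] = U_f M U_f^\dagger$, and then observe that $\mathcal{O}_F = \tfrac{1}{|F|}\sum_{f \in F} \mathcal{U}_f$ is a convex combination of such maps. Each of trace, Hermiticity, and positive semidefiniteness is preserved under convex combinations of matrices, so it suffices to verify the properties for each unitary channel $\mathcal{U}_f$ and then sum.

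First I would handle trace preservation using cyclicity of the trace: $\Tr[U_f M U_f^\dagger] = \Tr[U_f^\dagger U_f M] = \Tr[M]$, since $U_f$ is unitary by the definitions of standard, in-place, and phase oracles (\Cref{defn:unitary_std,defn:unitary_inplace,defn:unitary_phase_no_control_bits}). Averaging over $f \in F$ gives $\Tr[\mathcal{O}_F[M]] = \tfrac{1}{|F|}\sum_{f \in F}\Tr[M] = \Tr[M]$.

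Next I would verify Hermiticity: if $M = M^\dagger$, then
\begin{align}
(U_f M U_f^\dagger)^\dagger = U_f M^\dagger U_f^\dagger = U_f M U_f^\dagger,
\end{align}
so each $\mathcal{U}_f[M]$ is Hermitian. Because Hermiticity is preserved under real linear combinations, $\mathcal{O}_F[M] = \tfrac{1}{|F|}\sum_f \mathcal{U}_f[M]$ is Hermitian as well.

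Finally, for positive semidefiniteness, suppose $M$ is PSD and fix any vector $\ket{v}$. Setting $\ket{w} := U_f^\dagger \ket{v}$ yields $\bra{v} U_f M U_f^\dagger \ket{v} = \bra{w} M \ket{w} \ge 0$, so each $\mathcal{U}_f[M]$ is PSD. Convex combinations of PSD matrices are PSD, so $\mathcal{O}_F[M]$ is PSD, completing the proof. There is no real obstacle here; the only subtlety is just appealing to unitarity of $U_f$, which is built into every oracle model we consider.
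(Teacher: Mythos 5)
Your proposal is correct and follows essentially the same route as the paper: reduce to a single unitary conjugation and average over $F$. The only cosmetic difference is in the positive semidefiniteness step, where you check the quadratic form $\bra{v}U_f M U_f^\dagger\ket{v}\ge 0$ directly while the paper factors $M=B^\dagger B$ and writes $\mathcal{O}_F[M]$ as a sum of terms $(BU_f)^\dagger(BU_f)$; both are standard and equivalent.
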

\begin{proof}
Consider any input matrix $M$. Then
\begin{align}
    \Tr[ \mathcal{O}_F[M] ] 
    = \Tr[ \frac{1}{|F|}\sum_{f \in F} \mathcal{U}_f[M] ]  
    = \frac{1}{|F|}\sum_{f \in F} \Tr[ U_f M U_f^\dagger ] = \frac{1}{|F|}\sum_{f \in F} \Tr[ M ] = \Tr[M]\,.
\end{align}
Now suppose $M$ is Hermitian; that is, $M^\dagger = M$. Then
\begin{align}
    \mathcal{O}_F[M]^\dagger 
    = \big( \frac{1}{|F|}\sum_{f \in F} \mathcal{U}_f[M] \big)^\dagger 
    = \frac{1}{|F|}\sum_{f \in F} \big( U_f M U_f^\dagger)^\dagger 
    = \frac{1}{|F|}\sum_{f \in F} U_f M^\dagger U_f = \mathcal{O}_F[M]\,.
\end{align}
Furthermore, suppose $M$ is positive semidefinite; that is, there is a matrix $B$ such that $M = B^\dagger B$. Then 
\begin{align}
    \mathcal{O}_F[M] = \frac{1}{|F|} \sum_{f \in F} \mathcal{U}_f[M] = \frac{1}{|F|} \sum_{f \in F} U_f B^\dagger B U_f^\dagger = \sum_{f \in F} \left( BU_f \right)^\dagger\left( BU_f \right)\,,
\end{align}
which is a sum of positive semidefinite matrices.
Thus, $\mathcal{O}_F[M]$ is positive semidefinite.
\end{proof}

\begin{fact}[Randomized oracles do not increase nuclear norm or Frobenius norm]
\label{fact:nuclear_doesnt_increase}
Consider any randomized oracle $\mathcal{O}_F$ corresponding to a set of functions $f \in F$. Then $\mathcal{O}_F$ does not increase the nuclear norm nor the Frobenius norm of its input.
\end{fact}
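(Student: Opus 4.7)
The plan is to use two ingredients: (i) both norms in question are unitarily invariant functions of singular values, and (ii) any norm is subadditive, so applying the triangle inequality to the convex combination defining $\mathcal{O}_F$ gives the bound. No deeper structure of $F$ is needed.

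First I would unfold the definition: for any input matrix $M$,
\begin{align}
\mathcal{O}_F[M] = \frac{1}{|F|}\sum_{f \in F} U_f M U_f^\dagger.
\end{align}
Then I would invoke the triangle inequality (valid for any norm $\|\cdot\|_*$, in particular the nuclear and Frobenius norms) to write
\begin{align}
\|\mathcal{O}_F[M]\|_* \le \frac{1}{|F|}\sum_{f \in F} \|U_f M U_f^\dagger\|_*.
\end{align}

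Next I would observe that both the nuclear norm (sum of singular values) and the Frobenius norm (root-sum-of-squares of singular values) are unitarily invariant. Concretely, if $M = A \Sigma B^\dagger$ is a singular value decomposition, then $U_f M U_f^\dagger = (U_f A) \Sigma (U_f B)^\dagger$ is also an SVD (since $U_f A$ and $U_f B$ remain unitary), so $M$ and $U_f M U_f^\dagger$ share the same multiset of singular values. Hence $\|U_f M U_f^\dagger\|_* = \|M\|_*$ for each $f \in F$, and the averaged bound collapses to $\|\mathcal{O}_F[M]\|_* \le \|M\|_*$ for both choices of norm.

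There is no substantive obstacle here; the proof is essentially a two-line application of triangle inequality plus unitary invariance. The only mild subtlety is making sure both norms are treated uniformly, which is handled once we note that each is a function of singular values alone (equivalently, each arises as a Schatten $p$-norm, with $p=1$ and $p=2$ respectively), and Schatten $p$-norms are unitarily invariant by construction.
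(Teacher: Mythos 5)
Your proposal is correct and follows essentially the same route as the paper's proof: apply the triangle inequality to the convex combination defining $\mathcal{O}_F$ and then use unitary invariance of the nuclear and Frobenius norms. The only difference is that you spell out the unitary-invariance step via the SVD, which the paper simply asserts.
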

\begin{proof}
Recall that both the nuclear norm and Frobenius norm are unitarily invariant. Now consider any input matrix $M$. Then the nuclear norm of $\mathcal{O}_F[M]$ is
\begin{align}
    \left\|\mathcal{O}_F[M]\right\|_1 
    = \left\| \frac{1}{|F|} \sum_{f \in F} U_f M U_f^\dagger \right\|_1 
    \le \frac{1}{|F|} \sum_{f \in F} \left\| U_f M U_f^\dagger \right\|_1
    = \frac{1}{|F|} \sum_{f \in F} \left \| M \right \|_1
    = \left \| M \right \|_1 \,.
\end{align}
The Frobenius norm of $\mathcal{O}_F[M]$ follows in exactly the same way.
\end{proof}
We use one additional property of density matrices in the proof of \Cref{lemma:good_distinguishers_form_pretty}:
\begin{fact}
\label{fact:offdiag_densitymatrix}
    Consider any $N \times N$ density matrix $\rho$ and normalized states $\ket{v},\ket{w}$. If $|\bra{v}\rho\ket{w}| = \Omega(\invpolyn)$, then both $\bra{v}\rho\ket{v}$ and $\bra{w}\rho\ket{w}$ are $\Omega(\invpolyn)$.
\end{fact}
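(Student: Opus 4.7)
The plan is to exploit the fact that a density matrix $\rho$ is Hermitian and positive semidefinite, which lets us derive a Cauchy--Schwarz-type inequality on its matrix entries. Specifically, since $\rho \succeq 0$, we can write $\rho = A^\dagger A$ for some operator $A$. Then $\bra{v}\rho\ket{w} = \langle A\ket{v}, A\ket{w}\rangle$, so by the ordinary Cauchy--Schwarz inequality,
\begin{align}
\left|\bra{v}\rho\ket{w}\right|^2 \;=\; \left|\langle A\ket{v}, A\ket{w}\rangle\right|^2 \;\le\; \|A\ket{v}\|^2 \cdot \|A\ket{w}\|^2 \;=\; \bra{v}\rho\ket{v}\cdot \bra{w}\rho\ket{w}.
\end{align}

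Next, I would use the fact that the diagonal entries of a density matrix in any orthonormal extension of $\ket{v}$ (respectively $\ket{w}$) are bounded above by $\Tr[\rho] = 1$. Concretely, $\bra{v}\rho\ket{v} \le \lambda_{\max}(\rho) \le \Tr[\rho] = 1$, and similarly for $\ket{w}$. Combining this with the inequality above, if $\left|\bra{v}\rho\ket{w}\right| = \Omega(\invpolyn)$, then
\begin{align}
\bra{v}\rho\ket{v}\cdot\bra{w}\rho\ket{w} \;=\; \Omega(\invpolyn),
\end{align}
and since each factor is at most $1$, each factor individually must be $\Omega(\invpolyn)$. This yields the claim.

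No step here poses a real obstacle; the argument is just the Cauchy--Schwarz inequality for positive semidefinite forms together with the normalization of $\rho$. The only thing worth being careful about is not accidentally assuming $\ket{v}$ and $\ket{w}$ are orthogonal or that $\rho$ is a pure state; both the PSD factorization and the bound $\bra{v}\rho\ket{v} \le 1$ hold for arbitrary normalized $\ket{v},\ket{w}$ and any density matrix, so no such assumption is needed.
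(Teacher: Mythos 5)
Your proof is correct and follows essentially the same route as the paper: factor $\rho = A^\dagger A$, apply Cauchy--Schwarz to get $|\bra{v}\rho\ket{w}|^2 \le \bra{v}\rho\ket{v}\cdot\bra{w}\rho\ket{w}$, and use $\bra{\psi}\rho\ket{\psi}\le \Tr[\rho]=1$ to conclude each factor is individually $\Omega(\invpolyn)$. No issues.
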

\begin{proof}
Recall that a density matrix is Hermitian and positive semidefinite, so it is diagonalizable and has real and nonnegative eigenvalues. As a result, it has a decomposition
\begin{align}
    \rho = S^\dagger \Lambda S = S^\dagger \sqrt{\Lambda}  \sqrt{\Lambda} S = (\sqrt{\Lambda}S)^\dagger (\sqrt{\Lambda} S) = A^\dagger A\,,
\end{align}
for some diagonal $\Lambda$ and $A := \sqrt{\Lambda} S$. Then by Cauchy-Schwarz,
\begin{align}
    |\bra{v} \rho \ket{w}|^2
    = \left|(A\ket{v})^\dagger  (A\ket{w})\right|^2
    \le \left| (A\ket{v})^\dagger  (A\ket{v})  \right| \cdot \left| (A\ket{w})^\dagger  (A\ket{w}) \right|
    = \bra{v}\rho\ket{v} \cdot \bra{w}\rho \ket{w}\,.
\end{align}
Since $\Tr[\rho] = 1$, $\bra{\psi}\rho\ket{\psi} \le 1$ for all normalized states $\ket{\psi}$.
Thus, both $\bra{v}\rho\ket{v}$ and $\bra{w}\rho\ket{w}$ are at least $|\bra{v} \rho \ket{w}|^2 = \Omega(\invpolyn)$. 
\end{proof}

\newpage
\section{Deferred proofs}
\label{appendix:deferred_proofs}
\begin{theorem}[\Cref{thm:project_to_sym_subspace}, restated; {\cite[Proposition 2]{harrowchurch}}]
Consider a finite group $G$, a vector space $\mathsf{V}$, and a representation $R: G \to L(\mathsf{V})$. Then the operator
\begin{align}
    \Pi_R := \frac{1}{|G|} \sum_{g \in G} R(g)
\end{align}
is an orthogonal projector onto $\mathsf{V}^G \subseteq \mathsf{V}$, where
\begin{align}
    \mathsf{V}^G := \{ v \in \mathsf{V} \, :\, R(g)[v]  = v \, \forall g \in G\}\,.
\end{align}
\end{theorem}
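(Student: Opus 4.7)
The plan is to verify the three defining properties of an orthogonal projector onto $\mathsf{V}^G$: idempotence ($\Pi_R^2 = \Pi_R$), image exactly $\mathsf{V}^G$, and self-adjointness with respect to the inner product under which the projection is called ``orthogonal.'' Each step uses only the group law plus the fact that in the applications of interest (see \autoref{thm:oracles_are_projectors}) the representation $R$ is unitary for the relevant inner product, i.e.\ $R(g)^\ast = R(g^{-1})$.

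First I would establish idempotence. Writing
\begin{align}
\Pi_R^2 \;=\; \frac{1}{|G|^2}\sum_{g,h \in G} R(g)\,R(h) \;=\; \frac{1}{|G|^2}\sum_{g,h \in G} R(gh),
\end{align}
I reindex the inner sum by $k := gh$ with $g$ fixed, using that left-multiplication by $g$ is a bijection $G\to G$. The inner sum becomes $\sum_{k\in G} R(k) = |G|\,\Pi_R$, so $\Pi_R^2 = \frac{1}{|G|}\sum_{g}\Pi_R = \Pi_R$. This is the only place the group structure is really used.

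Next I would pin down the image. For any $v \in \mathsf{V}$ and any $g' \in G$, the same reindexing trick gives
\begin{align}
R(g')\,\Pi_R v \;=\; \frac{1}{|G|}\sum_{g} R(g'g)\,v \;=\; \frac{1}{|G|}\sum_{k} R(k)\,v \;=\; \Pi_R v,
\end{align}
so $\Pi_R v \in \mathsf{V}^G$; conversely, if $v \in \mathsf{V}^G$ then $R(g)v = v$ for every $g$ and hence $\Pi_R v = v$. Together with idempotence, this identifies $\Pi_R$ as the (unique) projector whose range is $\mathsf{V}^G$ and whose kernel contains a complementary subspace.

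For the final piece, orthogonality, I would use that $R$ is a unitary representation (which holds in the paper's instances, where $R(g)$ acts by unitary conjugation and preserves the Frobenius inner product). Then
\begin{align}
\Pi_R^\ast \;=\; \frac{1}{|G|}\sum_{g\in G} R(g)^\ast \;=\; \frac{1}{|G|}\sum_{g\in G} R(g^{-1}) \;=\; \frac{1}{|G|}\sum_{h\in G} R(h) \;=\; \Pi_R,
\end{align}
by reindexing $h := g^{-1}$. A self-adjoint idempotent is an orthogonal projector, completing the proof. The main (minor) obstacle is simply being explicit that ``orthogonal'' refers to the $G$-invariant inner product; if one is not given, an arbitrary inner product $\langle\cdot,\cdot\rangle_0$ can be averaged to $\langle u,v\rangle := \frac{1}{|G|}\sum_g \langle R(g)u, R(g)v\rangle_0$, which makes $R$ unitary and leaves the rest of the argument unchanged.
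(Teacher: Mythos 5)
Your proposal is correct and follows essentially the same route as the paper's proof: idempotence and the identification of the image via the reindexing $R(g')\Pi_R = \Pi_R$, followed by orthogonality with respect to the group-averaged inner product. The only cosmetic difference is that you deduce self-adjointness from $R(g)^\ast = R(g^{-1})$ and a single reindexing, while the paper verifies $\langle \Pi_R[u], v\rangle = \langle u, \Pi_R[v]\rangle$ by a double-sum computation; these are equivalent.
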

\begin{proof}
We include the proof for completeness.

Note that for any $g \in G$,
\begin{align}
    R(g) \Pi_R = R(g) \frac{1}{|G|} \sum_{g' \in G} R(g') = \frac{1}{|G|} \sum_{g' \in G} R(g g') = \frac{1}{|G|} \sum_{g^{-1} g' \in G} R(g') = \Pi_R\,.
\end{align}
This implies $\Pi_R \Pi_R = \Pi_R$:
\begin{align}
     \Pi_R \Pi_R = \frac{1}{|G|} \sum_{g \in G} R(g) \Pi_R = \frac{1}{|G|} \sum_{g \in G} \Pi_R = \Pi_R\,.
\end{align}
So $\Pi_R$ is a projection.

Note that for any $v \in \mathsf{V}$ and $g \in G$,
\begin{align}
    R(g)[\Pi_R[v]] = (R(g) \circ \Pi_R)[v] = \Pi_R[v]\,,
\end{align}
so $\Pi_R[v] \in \mathsf{V}^G$. And similarly, for all $w \in \mathsf{V}^G$,
\begin{align}
    \Pi_R[w] = \frac{1}{|G|} \sum_{g \in G} R(g)[w] = \frac{1}{|G|}\sum_{g \in G} w = w \in \mathsf{V}^G\,.
\end{align}
So the image of $\Pi_R$ is exactly $\mathsf{V}^G$.

In order to consider orthogonality, we must define an inner product. Consider an inner product $(u|v)$ for $u,v \in \mathsf{V}$. If for some $g \in G$, $(R(g)[u]|R(g)[v]) \ne (u|v)$, use the inner product
\begin{align}
    \langle u,v \rangle 
    = \frac{1}{|G|}\sum_{g \in G} (R(g)[u]|R(g)[v])\,.
\end{align}
Then under this inner product, $R(g)$ is a unitary operator for all $g \in G$:
\begin{align}
    \langle R(g)[u], R(g)[v] \rangle 
    = \sum_{g'g^{-1} \in G} (R(g')[u] | R(g')[v])
    = \langle u, v\rangle \,.
\end{align}
This implies that $\Pi_R$ is an orthogonal projection:
\begin{align}
    \langle \Pi_R[u], v \rangle  
    &= \frac{1}{|G|}\sum_{g,g' \in G} (R(gg')[u]|R(g)[v])  \nonumber
    \\
    &=  \frac{1}{|G|}\sum_{h,h' \in G} (R(h)[u]|R(h')[v]) \nonumber
    \\
    &= \frac{1}{|G|} \sum_{j,j' \in G}  (R(j)[u]|R(jj')[v]) = \langle u, \Pi_R[v] \rangle\,.
\end{align}
\end{proof}

\begin{theorem}[\Cref{thm:oracles_form_rep}, restated; oracles on density matrices form a representation]
Consider a group $G$ of functions $f: [N] \to [N]$ with bitwise $\oplus$ as the group operation. Then the map $f \mapsto \mathcal{U}_f$ is a representation over the vector space of $2N^2 \times 2N^2$ complex matrices. 

Similarly, consider a group $\widetilde{G}$ of permutations $\pi: [N] \to [N]$ with composition as the group operation. Then the map $\pi \mapsto \mathcal{\widetilde{U}}_\pi$ is a representation over the vector space of $2N \times 2N$ complex matrices.
\end{theorem}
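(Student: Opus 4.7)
My plan is to verify the defining homomorphism property $R(g_1 g_2) = R(g_1) \circ R(g_2)$ by direct computation on a computational basis. Since each $\mathcal{U}_f$ (and each $\mathcal{\widetilde{U}}_\pi$) is defined by conjugation of a unitary, and since $(U_1 U_2) \rho (U_1 U_2)^\dagger = U_1 (U_2 \rho U_2^\dagger) U_1^\dagger$, it suffices at the unitary level to show $U_{g_1 \cdot g_2} = U_{g_1} U_{g_2}$ in each case; the channel-level identity then drops out by conjugation. Linearity of $\mathcal{U}_f$ as a map on matrices is immediate, and the identity element of each group manifestly maps to the identity channel, so the homomorphism property is the only nontrivial thing to check.

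For the standard-oracle half, I would evaluate both sides on a basis vector $\ket{c, x, z}$: successively applying $U_{f_2}$ then $U_{f_1}$ yields $\ket{c \oplus f_2^z(x) \oplus f_1^z(x), x, z}$, whereas $U_{f_1 \oplus f_2}\ket{c, x, z}$ yields $\ket{c \oplus (f_1 \oplus f_2)^z(x), x, z}$. These agree by the very definition of the group operation on functions under bitwise XOR, namely $(f_1 \oplus f_2)^z(x) = f_1^z(x) \oplus f_2^z(x)$. Extending linearly across $\mathbb{C}^{2N^2}$ gives $U_{f_1 \oplus f_2} = U_{f_1} U_{f_2}$, and conjugation then yields $\mathcal{U}_{f_1 \oplus f_2} = \mathcal{U}_{f_1} \circ \mathcal{U}_{f_2}$, as required for the representation to be over the space of $2N^2 \times 2N^2$ matrices.

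For the in-place half, the analogous computation on a basis vector $\ket{x, z}$ is $\widetilde{U}_{\pi_1} \widetilde{U}_{\pi_2}\ket{x, z} = \ket{\pi_1^z(\pi_2^z(x)), z}$, while $\widetilde{U}_{\pi_1 \circ \pi_2}\ket{x, z} = \ket{(\pi_1 \circ \pi_2)^z(x), z}$; these match once the group law is applied branchwise in the $z$ register. Extending by linearity across $\mathbb{C}^{2N}$ then gives $\widetilde{U}_{\pi_1 \circ \pi_2} = \widetilde{U}_{\pi_1} \widetilde{U}_{\pi_2}$, and conjugation transfers the equality to the induced channel map.

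The part I expect to be trickiest is the bookkeeping on the $z = -1$ branch in the in-place case: because $\widetilde{U}_\pi$ applies $\pi^{-1}$ there, I would be careful to fix a composition convention on $\widetilde{G}$ so that $(\pi_1 \circ \pi_2)^{-1}$ and the branchwise action $\pi_1^{-1} \circ \pi_2^{-1}$ align on the relevant vectors; the $z = +1$ branch is straightforward. Once this convention is fixed, the whole argument reduces to the single basis-vector computation above together with linearity.
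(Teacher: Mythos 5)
Your approach is the same as the paper's: check $U_{g_1g_2}=U_{g_1}U_{g_2}$ on basis vectors and let conjugation carry the identity to the channels $\mathcal{U}$. For the standard-oracle half this is exactly the paper's computation, and since bitwise $\oplus$ is abelian there is nothing further to worry about on the $z=-1$ branch.

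The point you flag as trickiest in the in-place half is worth dwelling on, because your proposed fix does not quite close it. If $\widetilde{U}_\pi$ applies $\pi$ on the $z=+1$ branch and $\pi^{-1}$ on the $z=-1$ branch, then $\widetilde{U}_{\pi_1}\widetilde{U}_{\pi_2}$ acts branchwise as $x\mapsto\pi_1(\pi_2(x))$ and $x\mapsto\pi_1^{-1}(\pi_2^{-1}(x))=(\pi_2\circ\pi_1)^{-1}(x)$. Whichever composition convention you fix on $\widetilde{G}$, one branch demands $\pi_1\pi_2=\pi_1\circ\pi_2$ and the other demands $\pi_1\pi_2=\pi_2\circ\pi_1$; no single convention reconciles both unless $\widetilde{G}$ is abelian, and $T_\varnothing$ is not. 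The paper sidesteps this by \emph{defining} the action of the product element branchwise, $(\pi_1\pi_2)^z(x):=\pi_1^z(\pi_2^z(x))$, which makes the basis-vector computation tautological; the cost is that $\widetilde{U}_{\pi_1\pi_2}$ so defined is not the in-place oracle of the composite permutation $\pi_1\circ\pi_2$ on the $z=-1$ branch as Definition \ref{defn:unitary_inplace} would give it. So you should either adopt the paper's branchwise convention explicitly (treating the element as the pair $(\pi,\pi^{-1})$ with componentwise composition, i.e.\ the second factor carrying the opposite group law), or restrict the claim accordingly; "fixing a convention on $\widetilde{G}$" alone will leave one branch as an anti-homomorphism. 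Apart from this, your argument matches the paper's.
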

\begin{proof}
See that $f_1f_2$ acts as $(f_1 f_2)^{z} (x) = f_1^z(x) \oplus f_2^z(x)$ for any $f_1, f_2 \in G$.
The associated unitary acts as
\begin{align}
    U_{f_1 f_2} \sum_{c,x \in [N], z \in \pmset}\alpha_{c,x,z} \ket{c,x,z} 
    &= \sum_{c,x \in [N], z \in \pmset} \alpha_{c,x,z} \ket{c \oplus (f_1f_2)^z(x), x,z} 
    \\
    &= \sum_{c,x \in [N], z \in \pmset} \alpha_{c,x,z} \ket{c \oplus f_1^z(x) \oplus f_2^z(x), x, z} 
    \\
    &= U_{f_1} U_{f_2} \sum_{c,x \in [N], z \in \pmset}\alpha_{c,x,z} \ket{c,x,z}\,.
\end{align}
Since $U_{f_1f_2} = U_{f_1} U_{f_2}$, $\mathcal{U}_{f_1 f_2} = \mathcal{U}_{f_1} \circ \mathcal{U}_{f_2}$. So the map $f \mapsto \mathcal{U}_f$ respects the group operation of $G$.

Similarly,  $\pi_1 \pi_2$ acts as $(\pi_1 \pi_2)^z(x) = \pi_1^z(\pi_2^z(x))$ for any $\pi_1, \pi_2 \in \widetilde{G}$.
The associated unitary acts as 
\begin{align}
    \widetilde{U}_{\pi_1 \pi_2} \sum_{x \in [N], z \in \pmset} \alpha_{x,z} \ket{x,z}
    &= \sum_{x \in [N], z \in \pmset} \alpha_{x,z} \ket{(\pi_1 \pi_2)^z(x), z} 
    \\ &= \sum_{x \in [N], z \in \pmset} \alpha_{x,z} \ket{\pi_1^z(\pi_2^z(x)), z} 
     = \widetilde{U}_{\pi_1} \widetilde{U}_{\pi_2}  \sum_{x \in [N], z \in \pmset}\alpha_{x,z} \ket{x,z}\,.
\end{align}
Since $\widetilde{U}_{\pi_1 \pi_2} = \widetilde{U}_{\pi_1} \widetilde{U}_{\pi_2}$, $\mathcal{\widetilde{U}}_{\pi_1 \pi_2} = \mathcal{\widetilde{U}}_{\pi_1}\circ  \mathcal{\widetilde{U}}_{\pi_2}$. So the map $\pi \mapsto \widetilde{U}_\pi$ respects the group operation of $\widetilde{G}$.
\end{proof}

\begin{theorem}[\Cref{thm:oracles_are_projectors}, restated; some randomized oracles are orthogonal projectors]
Consider a group $G$ of functions $f: [N] \to [N]$ with bitwise $\oplus$ as the group operation. Then $\mathcal{O}_G$ is an orthogonal projector, under the Frobenius inner product $(x|y) = \Tr[x^\dagger y]$ for $x,y \in \mathbb{C}^{2N^2 \times 2N^2}$, onto
\begin{align}
    \mathsf{V}_G := \{\rho \in \mathbb{C}^{2N^2 \times 2N^2} \, : \mathcal{U}_f[\rho] = \rho\,\forall f \in G\}\,. 
\end{align}
Similarly, consider a group $\widetilde{G}$ of permutations $\pi: [N] \to [N]$ with composition as the group operation. Then $\mathcal{\widetilde{O}}_{\widetilde{G}}$ is an orthogonal projector, under the Frobenius inner product $(x|y) = \Tr[x^\dagger y]$ for $x,y \in \mathbb{C}^{2N \times 2N}$, onto
\begin{align}
    \mathsf{\widetilde{V}}_{\widetilde{G}} := \{\rho \in \mathbb{C}^{2N \times 2N} \, : \mathcal{\widetilde{U}}_\pi[\rho] = \rho\,\forall \pi \in \widetilde{G}\}\,. 
\end{align}
\end{theorem}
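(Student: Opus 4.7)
The plan is to recognize that this statement is essentially a direct corollary of the two preceding theorems: \autoref{thm:project_to_sym_subspace} says that the averaging operator $\Pi_R = \frac{1}{|G|}\sum_{g \in G} R(g)$ is an orthogonal projector onto the fixed-point subspace $\mathsf{V}^G$ for any representation $R$ on $\mathsf{V}$, while \autoref{thm:oracles_form_rep} says that the maps $f \mapsto \mathcal{U}_f$ and $\pi \mapsto \mathcal{\widetilde{U}}_\pi$ are representations over the appropriate spaces of complex matrices. Since $\mathcal{O}_G = \frac{1}{|G|}\sum_{f \in G}\mathcal{U}_f$ and $\mathcal{\widetilde{O}}_{\widetilde{G}} = \frac{1}{|\widetilde{G}|}\sum_{\pi \in \widetilde{G}} \mathcal{\widetilde{U}}_\pi$ have exactly the form of $\Pi_R$, and the fixed-point subspaces are exactly $\mathsf{V}_G$ and $\mathsf{\widetilde{V}}_{\widetilde{G}}$ as defined, the only remaining task is to verify that the projection is \emph{orthogonal} with respect to the \emph{Frobenius} inner product (as opposed to some artificial averaged inner product).

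Inspecting the proof of \autoref{thm:project_to_sym_subspace}, the orthogonality conclusion requires that each $R(g)$ act as a unitary operator with respect to the chosen inner product. So the core step I would carry out is the following verification: for any unitary $U$ (e.g.\ $U_f$ or $\widetilde{U}_\pi$) and any matrices $x, y$ in the ambient space,
\begin{align}
(\mathcal{U}[x] \mid \mathcal{U}[y]) = \Tr\bigl[(U x U^\dagger)^\dagger (U y U^\dagger)\bigr] = \Tr[U x^\dagger U^\dagger U y U^\dagger] = \Tr[x^\dagger y] = (x \mid y),
\end{align}
using $U^\dagger U = \mathbb{I}$ and cyclicity of the trace. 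Thus the conjugation super-operators $\mathcal{U}_f$ and $\mathcal{\widetilde{U}}_\pi$ are unitary with respect to the Frobenius inner product on their respective matrix spaces, so the hypothesis in the orthogonality half of \autoref{thm:project_to_sym_subspace} is met without needing to redefine the inner product.

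Putting the pieces together, I would present the proof in three short steps: (1) invoke \autoref{thm:oracles_form_rep} to assert that $f \mapsto \mathcal{U}_f$ (resp.\ $\pi \mapsto \mathcal{\widetilde{U}}_\pi$) is a representation on $\mathbb{C}^{2N^2\times 2N^2}$ (resp.\ $\mathbb{C}^{2N\times 2N}$); (2) verify the one-line computation above showing that conjugation by a unitary preserves the Frobenius inner product; (3) apply \autoref{thm:project_to_sym_subspace} with this representation and the Frobenius inner product to conclude that $\mathcal{O}_G$ (resp.\ $\mathcal{\widetilde{O}}_{\widetilde{G}}$) is an orthogonal projector whose image is precisely $\mathsf{V}_G$ (resp.\ $\mathsf{\widetilde{V}}_{\widetilde{G}}$), since the image of $\Pi_R$ in that theorem is exactly $\{v : R(g)[v]=v\ \forall g\}$, matching the defining condition of $\mathsf{V}_G$ and $\mathsf{\widetilde{V}}_{\widetilde{G}}$.

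There is no real obstacle here; the only subtlety is making sure the inner product used in \autoref{thm:project_to_sym_subspace} can be taken to be the Frobenius inner product rather than the averaged one constructed in that theorem's proof. The Frobenius-invariance computation above rules out that concern, so the proof reduces to assembling the two previous theorems. The entire argument should be no longer than a few lines of LaTeX.
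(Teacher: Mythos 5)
Your proposal is correct and matches the paper's proof essentially verbatim: the paper likewise cites the representation theorem, performs the same one-line Frobenius-invariance computation $\Tr[(UxU^\dagger)^\dagger(UyU^\dagger)] = \Tr[x^\dagger y]$, and then invokes the projection theorem. Your additional remark about why the averaged inner product in that theorem's proof collapses to the Frobenius one is a nice clarification but does not change the argument.
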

\begin{proof}
By \Cref{thm:oracles_form_rep}, the maps $f \mapsto \mathcal{U}_f$ for $f \in G$ and $\pi \mapsto \mathcal{\widetilde{U}}_\pi$ for $\pi \in \widetilde{G}$ both are representations. Note that for any unitary $U$ and square matrices $x,y$ of the same dimensions,
\begin{align}
    (U x U^\dagger | U y U^\dagger) 
    = \Tr[ (U x U^\dagger)^\dagger U y U^\dagger ] 
    = \Tr[ U x^\dagger U^\dagger U y U^\dagger ]
    = \Tr[ U x^\dagger y U^\dagger]
    = \Tr[ x^\dagger y]\,.
\end{align}
So the representation of each group element is a unitary operator under the Frobenius inner product. The proof then follows by \Cref{thm:project_to_sym_subspace}.
\end{proof}

\begin{lemma}[\Cref{lemma:cant_approx_many_subset_states}, restated; can't approximate too many subset states]
Consider a Hermitian $N \times N$ matrix $\rho$ that is positive semidefinite and has trace at most $1$. Consider the set of all subsets $V \subseteq [N]$, where $|V| = N^{\alpha}$ for a fixed $0 < \alpha < \frac{1}{2}$. Then the fraction of subsets $V$ such that $\bra{V}\rho\ket{V} = \Omega(\invpolyn)$ decreases faster than any exponential in $\poly(n)$.
\end{lemma}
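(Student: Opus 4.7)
The plan is to reduce the lemma to a Bernstein-type concentration inequality for sampling without replacement. Diagonalize $\rho=\sum_{i=1}^N\lambda_i\ketbra{\phi_i}$ with $\lambda_i\ge 0$ and $\sum_i\lambda_i=\Tr[\rho]\le 1$, so that $\bra V\rho\ket V=\sum_i\lambda_i|\braket V{\phi_i}|^2$. Whenever $\bra V\rho\ket V\ge\epsilon$, some index $i$ must satisfy $|\braket V{\phi_i}|^2\ge\epsilon$: otherwise $\sum_i\lambda_i|\braket V{\phi_i}|^2<\epsilon\sum_i\lambda_i\le\epsilon$. A union bound over the $\le N=2^n$ eigenvectors therefore reduces the lemma to showing, for each fixed unit vector $\phi\in\mathbb C^N$, that $\Pr_V[|\braket V\phi|^2\ge\epsilon]$ is doubly exponentially small in $n$.

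For a fixed $\phi$ and uniformly random size-$k$ subset $V$ (with $k=N^\alpha$), write $|\braket V\phi|^2=|S_V|^2/k$ where $S_V:=\sum_{x\in V}\phi(x)$ is a sum of $k$ complex terms drawn without replacement from $\{\phi(x):x\in[N]\}$. The single-draw mean $\mu=\braket\plus\phi/\sqrt N$ has magnitude at most $1/\sqrt N$, so $|\E[S_V]|=k|\mu|\le N^{\alpha-1/2}$, which for $\alpha<\tfrac12$ is exponentially small and in particular negligible compared to the target deviation $\sqrt{k\epsilon}$. Each summand is bounded in magnitude by $\|\phi\|_\infty\le 1$, and a standard hypergeometric-covariance calculation gives $\Var[S_V]\le k/N$.

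Now apply Bernstein's inequality for sampling without replacement (e.g.\ Bardenet--Maillard) separately to the real and imaginary parts of $\phi(x)$, obtaining
\[
\Pr\!\bigl[|S_V-\E[S_V]|\ge t\bigr]\le 4\exp\!\left(-\frac{c\,t^2}{k/N+t}\right)
\]
for an absolute constant $c>0$. Setting $t=\tfrac12\sqrt{k\epsilon}$ with $\epsilon=\Omega(1/\poly(n))$, the restriction $\alpha<\tfrac12$ ensures $\sqrt{k\epsilon}\gg k/N$, so the linear term dominates the denominator and the bound becomes $\exp(-\Omega(\sqrt{k\epsilon}))=\exp(-\Omega(2^{\alpha n/2}/\sqrt{\poly(n)}))$. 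Because $|\E[S_V]|$ is negligible compared with $\sqrt{k\epsilon}$, this controls $\Pr_V[|\braket V\phi|^2\ge\epsilon]$ at the same rate. Multiplying by the union-bound factor $N=2^n$ from the spectral reduction leaves a bound of the same doubly-exponential form, which is $o(2^{-q(n)})$ for every polynomial~$q$, as required.

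The main obstacle is obtaining a tail estimate sensitive to the small variance $\Var[S_V]\le k/N$: a plain Hoeffding/Serfling bound or a second-moment Chebyshev argument only produces a single-exponential fraction of the form $2^{-\Omega(n)}$, which is not eventually smaller than the $\Omega(\invexpn)$-sized pigeonhole set $\mathcal Y^*$ used in the proof of \Cref{thm:classicalwitness_notenough}. Feeding the variance bound $k/N$ into the Bernstein inequality is exactly what pushes the deviation $\sqrt{k\epsilon}$ into the sub-exponential regime and upgrades the naive single-exponential decay into the super-exponential decay the lemma claims.
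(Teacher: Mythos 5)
Your argument is correct, and it takes a genuinely different route from the paper's. You diagonalize $\rho$, observe that $\bra{V}\rho\ket{V}=\sum_i\lambda_i|\braket{V}{\phi_i}|^2\ge\epsilon$ forces $|\braket{V}{\phi_i}|^2\ge\epsilon$ for some eigenvector, union-bound over the $2^n$ eigenvectors, and control each $\Pr_V[|\braket{V}{\phi}|^2\ge\epsilon]$ by a variance-sensitive Bernstein bound for sampling without replacement; the key quantitative points (mean $\le N^{\alpha-1/2}\ll\sqrt{k\epsilon}$, variance $\le k/N$, hence tail $\exp(-\Omega(\sqrt{k\epsilon}))=\exp(-2^{\Omega(n)})$, surviving the $2^n$ union bound) all check out, and you correctly identify that a Chebyshev or plain Hoeffding bound would only give single-exponential decay, which is not enough against the $\Omega(\invexpn)$-sized pigeonhole sets in \Cref{thm:classicalwitness_notenough}. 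The paper instead argues linear-algebraically: it fixes a near-orthogonal packing $\mathcal{W}$ of subset states with pairwise overlap $O(2^{-\sqrt n})$, uses $\Tr[\rho^2]\le 1$ to show only $O(\poly(n))$ members of $\mathcal{W}$ can carry $\Omega(\invpolyn)$ weight, and then counts the subsets having $\omega(2^{-\sqrt n})$ overlap with one of those polynomially many representatives. Your version buys a cleaner, fully explicit tail bound and sidesteps the paper's somewhat delicate ``approximate basis'' and greedy-packing steps; it also unifies this lemma with the paper's own \Cref{lemma:chernoff}, which already uses the same without-replacement concentration machinery. The paper's version, in exchange, needs no concentration inequality at all beyond $\Tr[\rho^2]\le\Tr[\rho]^2$ and elementary binomial estimates. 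The only point worth making explicit if you write this up is the source of the Bernstein bound for dependent (without-replacement) samples — either Bardenet--Maillard or Hoeffding's classical reduction showing the without-replacement MGF is dominated by the i.i.d.\ one — and the factor-of-two bookkeeping from splitting $\phi(x)$ into real and imaginary parts, both of which you already flag.
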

\begin{proof}
Let $\mathcal{W}$ be any set of subsets $V$, with size $|\mathcal{W}| = \omega(\poly(n)) = o(\frac{2^{\sqrt{n}}}{\poly(n)})$, such that $|\bra{V_1}\ket{V_2}| = O(2^{-\sqrt{n}})$ for all $V_1, V_2 \in \mathcal{W}$.
Consider an approximate basis for $\rho$, consisting of
\begin{itemize}
    \item $\ket{V}\bra{V}$, for all $V \in \mathcal{W}$, and 
    \item  a set of matrices  $\mathcal{M}'$ orthogonal to all other matrices, and with Frobenius norm $1$.
\end{itemize}
Representing $\rho$ into this approximate basis is not unique, but the coefficient on each term must be close to $\bra{V} \rho \ket{V}$. Recall that the Frobenius norm of a density matrix $\rho$ is at most $1$; by \Cref{remark:rho_basis}, each $|c_M| \le 1$:
\begin{align}
\rho &:= \sum_{V \in \mathcal{W}} c_V \ket{V}\bra{V} + \sum_{M \in \mathcal{M}'} c_M M\,, \\
\bra{V}\rho\ket{V} &= \sum_{V' \in \mathcal{W}} c_{V'} \left|\bra{V'}\ket{V}\right|^2 = c_V + \sum_{V' \in \mathcal{W}, V' \ne V}c_{V'} \left|\bra{V'}\ket{V}\right|^2  = c_V \pm o(\frac{1}{\poly(n)})\,.
\end{align}
Let's inspect $\Tr[\rho^2] \le \Tr[\rho]^2 = 1$:
\begin{align}
    1 \ge \Tr[\rho^2] &= \sum_{V_1 \ne V_2 \in \mathcal{W}} \Tr[ c_{V_1} c_{V_2}^\dagger \ket{V_1}\bra{V_1}\ket{V_2}\bra{V_2}] + \sum_{V \in \mathcal{W}} |c_V|^2 +  \sum_{M \in \mathcal{M}'} |c_M|^2\,.
\end{align}
Notice that the first summation has $o(2^{2\sqrt{n}})$ terms of size $O(2^{-2\sqrt{n}})$, so it has magnitude $o(1)$. Since the last summation is nonnegative, the second summation must be at most a constant $O(1)$. Thus, for at most a $O(\poly(n))$ distinct choices of $V \in \mathcal{W}$, $c_V$ (and thus $\bra{V}\rho\ket{V}$) can have magnitude $\Omega(\invpolyn)$.

So, after choosing a polynomial number of $\ket{V}\bra{V}$ for $V \in \mathcal{W}$, all other subsets $V$ that have good overlap with $\rho$ must have $\omega(2^{-\sqrt{n}})$ overlap with some previously chosen $V \in \mathcal{W}$. The fractional number of subsets $V$ with this property is at most
\begin{align}
\label{eqn:more_than_exp_overlap_more_than_exp_small}
    O(\poly(n)) \times \frac{{N \choose N^{\alpha}(1-\omega(2^{-\sqrt{n}}))}}{{N \choose N^{\alpha}}} = \frac{O(\poly(n))}{N^{ N^\alpha \omega(2^{-\sqrt{n}})}}
    = \frac{1}{2^{n 2^{\alpha n(1 - o(1))}}} = o(\invexpn)\,.
\end{align}
\end{proof}

\begin{lemma}[\Cref{lemma:cant_all_have_elevated_mean}, restated; not too many subsets can have elevated mean]
Consider any $N \times N$ POVM $\{E, \mathbb{I}-E\}$, and the set of all subsets $V \subseteq [N]$, where $|V| = N^\alpha$ for a fixed $0 < \alpha < \frac{1}{2}$. Then the fraction of subsets $V$ where
\begin{align}
    |f(V)| := \left|\frac{1}{|V|}\Tr[\mathbb{I}_V E] - \frac{1}{N}\Tr[E]\right| = \Omega(\invpolyn)\,,
\end{align}
decreases faster than any exponential in $\poly(n)$.
\end{lemma}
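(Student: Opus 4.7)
The plan is to reduce this lemma to a classical concentration inequality for sampling without replacement. Writing $p_x := \bra{x}E\ket{x}$ for the $x$-th diagonal entry of $E$, the POVM condition $0 \le E \le \mathbb{I}$ forces $p_x \in [0,1]$ for all $x \in [N]$. Since $\mathbb{I}_V = \sum_{x \in V}\ket{x}\bra{x}$, we have $\Tr[\mathbb{I}_V E] = \sum_{x \in V} p_x$ and $\Tr[E] = \sum_{x \in [N]} p_x$, so
\begin{align}
    f(V) = \frac{1}{|V|}\sum_{x \in V}p_x - \frac{1}{N}\sum_{x \in [N]}p_x\,.
\end{align}
That is, picking a uniformly random subset $V \subseteq [N]$ of size $N^\alpha$ is the same as drawing a size-$N^\alpha$ sample without replacement from the population $\{p_1,\dots,p_N\} \subseteq [0,1]$, and $f(V)$ is exactly the deviation of the sample mean from the population mean $\mu := \frac{1}{N}\sum_{x}p_x$.

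Next I would apply Serfling's concentration inequality for sampling without replacement (or Hoeffding's bound, which remains valid in the without-replacement setting by Hoeffding's own reduction to the with-replacement case): for any $t > 0$,
\begin{align}
    \Pr_V\!\left[|f(V)| \ge t\right] \le 2\exp\!\left(-2|V|t^2\right) = 2\exp\!\left(-2 \cdot 2^{\alpha n} t^2\right),
\end{align}
using $|V| = N^\alpha = 2^{\alpha n}$ and the fact that $p_x \in [0,1]$. Taking $t = 1/q(n)$ for any fixed polynomial $q$, this gives
\begin{align}
    \Pr_V\!\left[|f(V)| \ge 1/q(n)\right] \le 2\exp\!\left(-2 \cdot 2^{\alpha n} / q(n)^2\right),
\end{align}
and since $2^{\alpha n}/q(n)^2$ grows faster than any polynomial in $n$, this fraction is $o(2^{-p(n)})$ for every polynomial $p$, which is the claimed super-polynomial decay.

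The only step of substance is noticing that $f(V)$ depends only on the diagonal of $E$, so the apparently quantum statement collapses into a purely classical concentration question about a bounded-range sample mean. After that, the minor obstacle is to use a concentration inequality tailored to sampling without replacement rather than na\"ively invoking the i.i.d.\ Hoeffding bound; Serfling's refinement handles this off the shelf, and because $|V|$ is already exponentially large in $n$ while $t$ is only inverse-polynomial, we get enormous slack in the exponent, yielding a doubly-exponentially small fraction.
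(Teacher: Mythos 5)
Your proof is correct, but it takes a genuinely different route from the paper's. The paper argues by contradiction with a combinatorial covering construction: assuming a $2^{-O(\poly(n))}$ fraction of subsets have $f(V)$ large and of the same sign, it greedily extracts an almost-disjoint family $\mathcal{Q}$ of such subsets that nearly covers $[N]$, and then shows $\sum_{V \in \mathcal{Q}} f(V)$ is simultaneously $\Omega(|\mathcal{Q}|/\poly(n))$ by assumption and $o(|\mathcal{Q}|/\poly(n))$ because the overcounts and uncovered elements contribute little — a contradiction. You instead observe that $f(V)$ depends only on the diagonal $p_x = \bra{x}E\ket{x} \in [0,1]$, recast the uniform choice of $V$ as sampling $N^\alpha$ values without replacement, and invoke Hoeffding/Serfling directly. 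Both proofs hinge on the same structural fact (POVM diagonals lie in $[0,1]$), but yours is shorter and quantitatively stronger: it yields an explicit doubly-exponential bound $\exp(-\Omega(2^{\alpha n}/\poly(n)))$ on the bad fraction, whereas the paper's contradiction argument only certifies $o(\invexpn)$ decay. Notably, the paper itself uses essentially your technique — a Chernoff bound for sampling without replacement — in its \Cref{lemma:chernoff} for the phase-oracle lower bound, so your argument is entirely in the spirit of the rest of the paper; the authors simply chose a combinatorial route for this particular lemma. One small presentational point: when citing Hoeffding in the without-replacement setting, it is worth stating explicitly that the reduction is via Hoeffding's Theorem 4 (convex ordering of the sample mean), as you gesture at; with that reference in place the argument is complete.
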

\begin{proof}
Suppose for contradiction that $|f(V)| = \Omega(\invpolyn)$ for at least a $\Omega(\invexpn)$ fraction of all subsets $V \subseteq [N]$ where $|V| = N^\alpha$.
Because $f(V)$ is real, it can be either positive or negative; by pigeonhole, a $\frac{1}{2} \Omega(\invexpn) = \Omega(\invexpn)$ fraction of subsets $V$ must have the same sign of $f(V)$ (without loss of generality, assume it is positive). Let $\mathcal{W}$ contain the subsets $V$ with this property. 

Let $k = N^{1-\alpha} - N^{1-\alpha-(\alpha/3-\epsilon)}$ for any $\epsilon > 0$.
We now build a finite sequence $\mathcal{Q} = \{V_1,V_2,\dots, V_k \} \subseteq \mathcal{W}$ that nearly cover $[N]$, and where no two elements share more than $N^{\alpha/3}$ elements. First, choose $V_1 \in \mathcal{W}$. On subsequent draws, choose another $V_m \in \mathcal{W}$ such that
\begin{align}
    \frac{|V_m \cap \left( \cup_{1 \le j < m} V_j \right) |}{|V|} \le N^{\alpha/3}\,.
\end{align}
Suppose we could not draw $V_m \in \mathcal{W}$; then all remaining elements of $\mathcal{W}$ share at least $N^{\alpha/3}$ elements with one of $\{V_1,\dots,V_{m-1}\}$. The fractional size of $\mathcal{W}$ (compared to all subsets) would then be at most
\begin{align}
     (m-1) + \frac
     {
     {(m-1)N^{\alpha} \choose N^{\alpha/3}}
     {N \choose N^{\alpha} - N^{\alpha/3}}
     }
     {{N \choose N^{\alpha}}} 
     \le 
      \frac{
     {(k-1)N^{\alpha} \choose N^{\alpha/3}}
     }
     {{N-(N^\alpha-N^{\alpha/3}) \choose N^{\alpha/3}}} 
    \le  (\frac{k}{N^{1-\alpha} - 1})^{N^{\alpha/3}} \le \left( 1 - \frac{1}{N^{\alpha/3 - \epsilon}} \right)^{N^{\alpha/3}} \le o(\invexpn)\,,
\end{align}
which is a contradiction. So we can always construct $\mathcal{Q}$. Now consider 
\begin{align}
    \sum_{V \in \mathcal{Q}} f(V) = \frac{1}{|V|}\sum_{V \in \mathcal{Q}} \Tr[E \mathbb{I}_V] - \frac{|\mathcal{Q}|}{N}\Tr[E]\,.
\end{align}
By supposition, all $f(V) > 0$ and have magnitude $\Omega(\invpolyn)$, so this quantity should be $\Omega(|\mathcal{Q}|\invpolyn)$. We show however that this sum is small because the two sums are very close. The number of ``overcounts'' of an element  $x \in [N]$ is at most $|\mathcal{Q}|N^{\alpha/3}$, and the number of elements $x \in [N]$ not in some $V \in \mathcal{Q}$ is at most
\begin{align}
    N - k(N^{\alpha} - N^{\alpha/3}) = O(N^{1-(\alpha/3 - \epsilon)})\,.
\end{align}
Recall that the diagonal elements of a POVM are each nonnegative and at most $1$. So then 
\begin{align}
   \left| \frac{1}{|V|}\sum_{V \in \mathcal{Q}} \Tr[E \mathbb{I}_V] - \frac{|\mathcal{Q}|}{N} \Tr[E]\right|
   = O(\frac{|\mathcal{Q}|}{|V|} N^{\alpha/3}) + O(\frac{|\mathcal{Q}|}{N} N^{1-(\alpha/3 - \epsilon)}) = o(\frac{|\mathcal{Q}|}{\poly(n)})\,.
\end{align}
This contradicts our supposition. So the fraction of subsets $V$ where $|f(V)| = \Omega(\invpolyn)$ decreases faster than any exponential in $\poly(n)$.
\end{proof}

\newpage

\section{Our setup contrasted with a discrete-time quantum walk}
\label{appendix:quantumwalk}
 The way one stores a graph in an oracle drastically changes the difficulty of some problems.  Consider a \emph{discrete-time quantum walk} \cite{watrous98}, which allows a vertex access to a superposition of its neighbors.\footnote{\cite[Chapter 17]{childs_notes} has a good introduction to this topic.}
Given a $d$-regular graph $G(V,E)$, the operator $W: \mathbb{C}^{N^2 \times N^2}$ acts as
\begin{align}
    W &= \left( \sum_{(j,k) \in E}\ket{j,k} \bra{k,j} \right) C
    \\
    C &= \sum_{j \in V}\ket{j}\bra{j} \otimes (2 \ket{\partial_j}\bra{\partial_j} - \mathbb{I})
    \\
    \ket{\partial_j} &= \frac{1}{\sqrt{d}}\sum_{(j,k) \in E}\ket{k}\,.
\end{align}
Using a discrete-time quantum walk, we can learn about the mixing properties of the associated graph; these are fundamentally related to the graph's spectral gap \cite{GODSIL2019181}.

By contrast, we query each neighbor of a vertex $v \in G$ with the value of the registers encoding $i \in [d/2]$ (defined by a $G$-coded function). For example, \cite{acl2011} uses a similar oracle to show that deciding whether a graph is a single expander graph or two equal-sized disconnected expander graphs is outside of $\BQP$. Intuitively, a lack of superposition access to neighbors of a vertex makes it harder for a quantum computer to ``traverse'' the graph.

\end{document}